\newtheorem{definition}{Definition}
\newtheorem{theorem}{Theorem}
\newtheorem{lemma}{Lemma}
\newtheorem{example}{Example}
\newtheorem{corollary}{Corollary}
\newtheorem{proposition}{Proposition}
\newcommand{\Flip}{{\tt\textsc{Flip}}}
\newcommand{\Out}{{\tt\textsc{Out}}}
\newcommand{\Worlds}{{\tt{Worlds}}}
\newcommand{\widebar}[1]{{\overline{#1}}}
\newcommand{\secureview}{{\tt Secure-View}~}  
\newcommand{\SAsecureview}{{\tt Secure-View}}  
\newcommand{\safeview}{{\tt Safe-View}~}
\newcommand{\scream}[1]{{\bf * #1 *}{\typeout{#1}}}
\newcommand{\eat}[1]{}
\newcommand{\neweat}[1]{}
\newcommand{\angb}[1]{\langle #1 \rangle}
\newcommand{\cost}{{\tt c}}
\newcommand{\pvt}{{\tt c}}
\newcommand{\be}{\begin{enumerate}}
\newcommand{\ee}{\end{enumerate}}
\newcommand{\Algo}{\textsc{Algo}}
\newcommand{\No}{\textsc{No}}
\newcommand{\Yes}{\textsc{Yes}}
\newcommand{\Dom}{\texttt{Dom}}
\newcommand{\Range}{\texttt{Range}}
\newcommand{\tup}[1]{\mathbf{#1}}
\newcommand{\proj}[2]{\pi_{{#1}}({#2})}
\begin{document}


\date{}

  \makeatletter
  \let\@copyrightspace\relax
  \makeatother

\title{Provenance Views for Module Privacy}

%

\author{Susan B. Davidson\thanks{University of Pennsylvania; Philadelphia, PA, USA. {\tt susan@cis.upenn.edu}}
~~~~~~Sanjeev Khanna\thanks{University of Pennsylvania; Philadelphia, PA, USA. {\tt sanjeev@cis.upenn.edu}}
~~~~~~Tova Milo\thanks{Tel Aviv University; Tel Aviv, Israel. {\tt milo@cs.tau.ac.il}}
~~~~~~Debmalya Panigrahi\thanks{CSAIL, MIT; Cambridge, MA, USA. {\tt debmalya@mit.edu}}
~~~~~~Sudeepa Roy\thanks{University of Pennsylvania; Philadelphia, PA, USA. {\tt sudeepa@cis.upenn.edu}}}


\maketitle


\begin{abstract}
\vspace{-1mm}
Scientific workflow systems increasingly store provenance
information about the module executions used to produce a data item,
as well as the parameter settings and intermediate data items passed
between module executions. However, authors/owners of workflows may
wish to keep some of this information confidential. In particular, a
{\em module} may be proprietary, and users should not be able to
infer its behavior by seeing mappings between all data inputs and
outputs.

The problem we address in this paper is the following:  Given a
workflow, abstractly modeled by a relation $R$, a privacy
requirement $\Gamma$ and costs associated with data. The
\emph{owner} of the workflow decides which data (attributes) to
hide, and provides the {\em user} with a view $R'$ which is the
projection of $R$ over attributes which have {\em not} been hidden.
{\em The goal is to minimize the cost of hidden data while
guaranteeing that individual modules are $\Gamma$-private.} We call
this the \secureview\ problem.  We formally define the problem,
study its complexity, and offer algorithmic solutions.

\end{abstract}

\vspace{-1mm}

\section{Introduction}
\label{sec:intro} \vspace{-1mm} The importance of data provenance
has been widely recognized. In the context of scientific workflows,
systems such as myGrid/Taverna  \cite{myGridWF}, Kepler
\cite{absModelKepler}, and VisTrails \cite{VisTrails} now capture
and store provenance information, and a standard for provenance
representation called the Open Provenance Model (OPM)
\cite{MoreauFFMMP08} has been designed. By maintaining information
about the module executions (processing steps) used to produce a
data item, as well as the parameter settings and intermediate data
items passed between module executions, the validity and reliability
of data can be better understood and results be made reproducible.

However, authors/owners of workflows may wish to keep some of this
provenance information private. For example, intermediate {\em data}
within an execution may contain sensitive information, such as the
social security number, a medical record, or financial information
about an individual.  Although users with the appropriate level of
access may be allowed to see such confidential data, making it
available to all users is an unacceptable breach of privacy.  Beyond
data privacy, a {\em module} itself may be proprietary, and hiding
its description may not be enough: users without the appropriate
level of access should not be able to infer its functionality by
observing all inputs and outputs of the module. Finally,  details of
how certain modules in the workflow are connected may be
proprietary, and therefore showing how data is passed between
modules may reveal too much of the {\em structure} of the workflow.
{\em There is thus an inherent trade-off between the utility of
provenance information and the privacy guarantees that
authors/owners desire.}

While data privacy was studied in the context of statistical
databases and ideas related to structural privacy were dealt with in
the context of workflow views, module privacy has not been addressed
yet. Given the importance of the issue \cite{cidr11}, this paper
therefore focuses on the problem of preserving the privacy of {\em
module functionality}, i.e. the mapping between input and output
values produced by the module (rather than the actual
\emph{algorithm} that \mbox{implements it).}

Abstracting the workflow models in
\cite{myGridWF,absModelKepler,VisTrails}, we consider a module to be
a finite relation which takes a set $I$ of input data (attributes),
produces a set $O$ of output data (attributes), and satisfies the
functional dependency $I \longrightarrow O$. A row in this relation
represents one execution of the module. In a {\em workflow},  $n$
such data processing modules are connected in a directed acyclic
multigraph (network), and jointly produce a set of final outputs
from a set of initial inputs. Each module receives input data from
one or more modules, or from the initial external input, and sends
output data to one or more modules, or produces the final output.
Thus a workflow can be thought of as a relation which is the
input-output join of the constituent module relations.  Each row in
this relation represents a workflow execution, and captures the
provenance of data that is produced during that execution. We call
this the {\em provenance relation.}.

To ensure the privacy of module functionality,  we extend the notion
of $\ell$-diversity~\cite{MKG+07} to our network setting\footnote{In
the Related Work, we discuss why a stronger notion of privacy, like
differential privacy, is not suitable
 here.}:  A module
with functionality  $m$ in a workflow is said to be $\Gamma$-private
if for every input $x$, the actual value of the output $m(x)$ is
indistinguishable from $\Gamma-1$ other possible values w.r.t. the
visible data values in the provenance relation. This is achieved by
carefully selecting a subset of data items and hiding those values
in \emph{all} executions of the workflow -- i.e. by showing the user
a {\em view} of the provenance relation for the workflow in which
the selected data items (attributes) are hidden. $\Gamma$-privacy of
a module ensures that even with arbitrary computational power and
access to the view for all possible executions of workflow, an
adversary can not guess the correct value of $m(x)$ with probability
$> \frac{1}{\Gamma}$.

Identical privacy guarantees can be achieved by hiding different
subsets of data. To reflect the fact that some data may be more
valuable to the user than other data, we assign a {\em cost} to each
data item in the workflow, which indicates the utility lost to the
user when the data value is hidden. It is important to note that,
due to {\em data sharing} (i.e. computed data items that are passed
as input to more than one module in the workflow), hiding some
data can be used to guarantee privacy for more than one module in
the network.

The problem we address in this paper is the following:  We are given
a workflow, abstractly modeled by a relation $R$, a privacy
requirement $\Gamma$ and costs associated with data.
An instance of $R$ represents the set of workflow executions that have been run.
The
\emph{owner} of the workflow decides which attributes to hide, and
provides the {\em user} with a view $R'$ which is the projection of
$R$ over the visible attributes.  {\em The goal is to minimize the
cost of hidden data while guaranteeing that individual modules are
$\Gamma$-private.} \eat{
 \footnote{For simplicity of
presentation, we assume a uniform $\Gamma$ value for all private
modules. However, all algorithms in the paper also work when
$\Gamma$  can be different for different modules.}
}
 We call this the secure-view 
problem.  We formally define the problem, study its
complexity, and offer algorithmic solutions.

\medskip
\noindent {\bf Contributions.~~} Our first contribution is to
formalize the notion of $\Gamma$-privacy of a private module  when
it is a standalone entity (\emph{standalone privacy}) as well as
when it is a component of a workflow interacting with other 
modules (\emph{workflow privacy}). For standalone
modules, we then analyze the computational and  communication
complexity of obtaining a minimal cost set of input/output data
items to hide such that the remaining, visible attributes  guarantee
$\Gamma$-privacy  (a {\em safe subset}).   We call this the
\emph{standalone secure-view} problem.

Our second set of contributions is to study workflows  in which all
modules are {\em private}, i.e.  modules for which the user has no a priori knowledge
and whose behavior must be hidden. For such
\emph{all-private workflows}, we analyze the
complexity of finding a minimum cost set of data items in the
workflow, as a whole, to hide such that the remaining visible
attributes guarantee $\Gamma$-privacy  for all modules. We call this
the {\em workflow secure-view} problem.  Although  the privacy of a
module within a workflow is inherently linked to the workflow
topology and functionality of other modules, we are able to show
that guaranteeing workflow secure-views in  this setting essentially
reduces to implementing the standalone privacy requirements for each
module.   We then study two variants of the workflow secure-view
problem, one in which module privacy is specified in terms of
attribute sets ({\em set constraints}) and one in which module
privacy is specified in terms of input/output cardinalities ({\em
cardinality constraints}). Both variants are easily shown to be
NP-hard,  and we give poly-time approximation algorithms
 for these problems. While the cardinality constraints version
 has an linear-programming-based $O(\log n)$-approximation algorithm, the set constraints version is much harder to approximate.
However, both variants becomes
more tractable when the workflow has \emph{bounded data sharing},
i.e. when a data item acts as input to a small number of modules.
In this case a constant factor approximation
is possible, although the problem remains NP-hard even without
any data sharing

Our third set of contributions is in \emph{general workflows}, i.e
workflows which contain private modules as well as modules whose
behavior is known ({\em public} modules).  Here we show that
ensuring standalone privacy of private modules no longer guarantees
their workflow privacy.  However, by  making some of the public
modules private ({\em privatization}) we can attain workflow privacy
of all private modules in the workflow. Since  privatization has a
cost,  the optimization problem, becomes much harder:  Even without
data sharing the problem is $\Omega(\log n)$-hard to approximate.
However, for both all-private and general workflows, there is
an LP-based $\ell_{\max}$-approximation algorithm, where
$\ell_{\max}$ is the length of longest requirement list for any module.

\medskip
\noindent {\bf Related Work.~~} {\em Workflow privacy} has been
considered in \cite{CCL+08, GF10, CG07}.  In \cite{CCL+08}, the
authors discuss a framework to output a {\em partial} view of a
workflow that conforms to a given set of access permissions on the
connections between modules and data on input/output ports. The
problem of ensuring the {\em lawful use} of data according to
specified privacy policies has been considered in \cite{GF10, CG07}.
The focus of the work is a policy language for specifying
relationships among data and module sets, and their properties
relevant to privacy. Although all these papers address workflow
privacy, the privacy notions are somewhat informal and no guarantees
on the quality of the solution are provided in terms of privacy and
utility. Furthermore, our work is the first, to our knowledge, to
address module privacy rather than data privacy.
\par
{\em Secure provenance} for workflows has been  studied
in~\cite{LM10, BSS08, HSW07}. The goal is to ensure that provenance
information has not been forged or corrupted, and a variety of
cryptographic and trusted computing techniques are proposed. In
contrast, we assume that provenance information has not been
corrupted, and focus on ensuring module privacy.
\par
In \cite{MS07}, the authors study information  disclosure in data
exchange, where given a set of public views, the goal is to decide
if they reveal any information about a private view. This does not
directly apply to our problem, where the private elements are the
$(\tup{x},m(\tup{x}))$ relations. For example, if all $\tup{x}$
values are shown without showing any of the $m(\tup{x})$ values for
a module $m$, then information is revealed in their setting but 
not in our setting.\footnote{In contrast, it can be shown that
showing all $m(\tup{x})$ values while hiding the $\tup{x}$'s, may
reveal information in our setting.} \eat{ Also our application does
not relate to the well-studied \emph{secure multi-party computation
of functions} \cite{multiparty} where a group of agents want to
jointly compute a function without revealing the information they
individually own. }
\par
{\em Privacy-preserving data mining} has received considerable
attention (see surveys~\cite{DataM, VBF+04}).
The goal is to hide individual data attributes while
retaining the suitability of data for mining patterns. For example,
the technique of \emph{anonymizing data} makes each record
indistinguishable from a large enough set of other records in
certain identifying attributes~\cite{Swee02, MKG+07, GFK+06}.
Privacy preserving approaches were studied for \emph{social
networks}~\cite{BDK07, RHM+09} 
\emph{auditing
queries} \cite{MNT08} 
and in other contexts.
Our notion of {\em standalone} module privacy is close to that of
$\ell$-diversity~ \cite{MKG+07}, in which the values of
\emph{non-sensitive attributes} are generalized so that, for every
such generalization, there are at least $\ell$ different values of
\emph{sensitive attributes}.   We extend this work in two ways:
First, we place modules (relations) in a network of modules, which
significantly complicates the problem,  Second, we analyze the
complexity of attaining standalone as well as workflow \mbox{
privacy of modules.} \eat{ However, to the best of our knowledge,
the complexity analysis of attaining $\ell$-diversity has not been
studied in the literature. More importantly,  we extend this notion
to workflows where attaining $\ell$-diversity becomes non-trivial
due to interaction with other modules. }
\par
Another widely used technique is that of \emph{data perturbation}
where some noise (usually random) is added to the the output of a
query or to the underlying database. This technique is often used in
\emph{statistical databases}, where a query computes some aggregate
function over the dataset~\cite{DN03} 
and the goal is to
preserve the privacy of data elements. \eat{ These techniques seem
unlikely to extend to our setting where  users query the value of
data items in the workflow whereas the private elements are
$(\tup{x}, m(\tup{x}))$ pairs for a private module $m$ and input
assignments $\tup{x}$ to $m$, which leads to direct conflict between
the utility measure to the user and privacy measure to the workflow
owner. } In contrast, in our setting the private elements are
$(\tup{x}, m(\tup{x}))$ pairs for a private module $m$ and the queries
are select-project-join style queries over the provenance relation
rather than aggregate queries.

\par
Privacy in {\em statistical databases} is typically quantified
using {\em differential privacy}, which requires that the output
distribution is {\em almost} invariant to the inclusion of any
particular record 
(see surveys~\cite{Dwork08, Dwork09} and the references therein). Although this is the strongest
notion of privacy known to date,
 \emph{no} deterministic algorithm can guarantee differential privacy.
Thus differential privacy is unsuitable for our purposes, since
adding random noise to provenance information may render it useless;
provenance is used to ensure reproducibility of experiments  and
therefore data values must be accurate.  Our approach of outputting
a safe view allows the user to know the name of all data items and
the exact values of data that is visible.
The user also does not lose any utility in terms of 
\emph{connections} in the workflow, and can infer exactly which
module produced which visible data item or whether two visible data
items depend on each other. \eat{Only information the user is unable
to see is the values of of some of the data items which are
consistently hidden in all executions of the workflow. }

\medskip
\noindent {\bf Organization.} Section~\ref{sec:model} defines our
workflow model and formalizes the notions of $\Gamma$-privacy of a
module, both when it is standalone and when it appears in a
workflow. The secure-view problem for standalone module privacy is
studied in Section~\ref{sec:standalone}.
Section~\ref{sec:private-module} then studies the problem for
workflows consisting only of private modules, whereas
Section~\ref{sec:public-module} generalizes the results to general
workflows consisting of both public and private modules. Finally we
conclude and discuss directions for future work in
Section~\ref{sec:conclusions}.

\section{Preliminaries}\label{sec:model}
We start by introducing some notation and
formalizing our notion of privacy. We first consider the privacy of a
single module, which we call {\em standalone module privacy}. Then
we consider privacy when modules are connected
in a workflow, which we call \emph{workflow module privacy}.

\subsection{Modules and Relations}

We model a module $m$ with a set $I$ of input variables and a set
$O$ of (computed) output variables as a relation $R$
over a set of attributes $A=I \cup O$ that satisfies the functional
dependency $I \rightarrow O$.  In other words, $I$ serves as a (not
necessarily minimal) key for $R$.  We assume that $I \cap O =
\emptyset$ and will refer to $I$ as the \emph{input attributes} of
$R$ and to $O$ as its \emph{output attributes}.

\par

We assume that the values of each attribute $a \in A$ come from a
finite but arbitrarily large domain $\Delta_{a}$, and let $\Dom =
\prod_{a \in I}\Delta_{a}$ and $\Range = \prod_{a \in O}\Delta_{a}$
denote the {\em domain} and {\em range} of the module $m$
respectively. The relation $R$ thus represents the (possibly
partial) function $m: \Dom \rightarrow \Range$ and tuples in $R$
describe executions of $m$, namely for every $t \in R$,
$\proj{O}{\tup{t}}=m(\proj{I}{\tup{t}})$.
We overload the standard
notation for projection, $\proj{A}{R}$, and use it for a tuple
$\tup{t} \in R$. Thus $\proj{A}{\tup{t}}$, for a set $A$ of
attributes, denotes the projection of $\tup{t}$ to the attributes in
$A$.


\begin{example}\label{ex:module}
Figure~\ref{fig:wf-view} shows a simple workflow involving three
modules $m_1,m_2,m_3$ with boolean input and output attributes; we
will return to it shortly and focus for now on the top module $m_1$.
Module $m_1$ takes as input two data items, $a_1$ and $a_2$, and
computes $a_3\! =\! a_1\!\vee\! a_2$, $a_4\! =\! \neg({a_1\!\wedge\!
a_2})$ and $a_5\! =\! \neg({a_1\!\oplus \!a_2})$. (The symbol
$\oplus$ denotes XOR). The relational representation (functionality)
of module $m_1$ is shown in Figure~\ref{fig:m1} as relation $R_1$,
with the functional dependency $a_1 a_2 \longrightarrow a_3 a_4
a_5$. For clarity, we have added $I$ (input) and $O$ (output) above
the attribute names to indicate their role.
\end{example}



%
%
%

\begin{figure}[ht!]
\begin{center}
\subfloat
{\label{fig:wf1}
\begin{tabular}{c}
    \includegraphics[scale=.4]{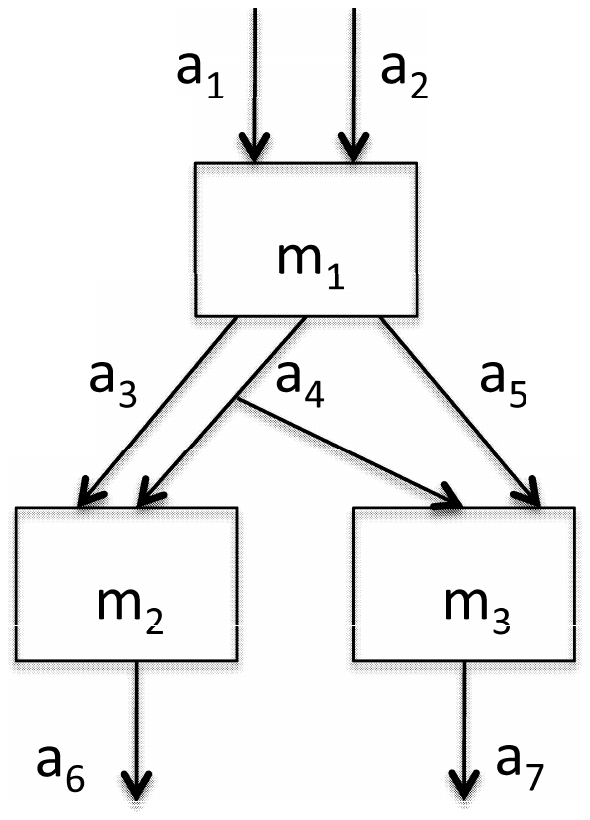}
\end{tabular}
  }
~~~
 \subfloat [{\small $R$:  Workflow executions}  \vspace{-5.3mm}]{
\begin{tabular}{|lllllll|} \hline
$a_1$ & $a_2$ & $a_3$  & $a_4$ & $a_5$ & $a_6$& $a_7$\\  \hline
\hline 0 & 0 & 0 & 1 & 1 & 1& 0\\ \hline 0 & 1 & 1 & 1 & 0 & 0 & 1\\
\hline 1 & 0 & 1 & 1 & 0 & 0 &1\\ \hline 1 & 1 & 1 & 0 &1 & 1 &1
\\\hline
\end{tabular}\label{fig:wf-prov} }
~~~
\subfloat[{\small $R_1$: Functionality of $m_1$}]{\label{fig:m1}
\begin{tabular}{|cc|ccc|}
\hline \multicolumn{2}{|c|}{$I$} & \multicolumn{3}{c|}{$O$}\\ \hline
$a_1$ & $a_2$ & $a_3$ & $a_4$ & $a_5$ \\ \hline \hline 0 & 0 & 0 & 1
& 1 \\ \hline 0 & 1 & 1 & 1 & 0 \\ \hline 1 & 0 & 1 & 1 & 0 \\
\hline 1 & 1 & 1 & 0 & 1 \\ \hline
\end{tabular}
}
 ~~~ 
 \subfloat[{\small~ \!\!$R_V \!=\!\proj{V}{R_1}$
 }]{\label{fig:m1'}
\begin{tabular}{|c|cc|}
\hline \multicolumn{1}{|c|}{$I \cap V$} & \multicolumn{2}{c|}{$O
\cap V$}\\ \hline $a_1$ & $a_3$ & $a_5$ \\ \hline  \hline 0 & 0 & 1
\\ \hline 0 & 1 & 0 \\ \hline 1 & 1 & 0 \\ \hline 1 & 1 & 1 \\
\hline
\end{tabular}
} 
 \vspace{-4mm} \caption{A sample workflow, and, workflow and module executions as
relations} \vspace{-5mm} \label{fig:wf-view}
\end{center}
\end{figure}

\subsection{Standalone Module Privacy}

Our approach to ensuring standalone module privacy, for a module
represented by the relation $R$, will be to hide a carefully chosen
subset of $R$'s attributes. In other words, we will project $R$ on a
restricted subset $V$ of attributes (called the {\em visible
attributes} of $R$), allowing users access only to the view
$R_V=\proj{V}{R}$. The remaining, non visible, attributes of $R$ are
called {\em hidden attributes}.

\par
We distinguish below two types of modules. (1) {\em Public modules}
whose behavior is fully known to users. Here users have a priori
knowledge about the full  content of $R$ and, even if given only
the view $R_V$, they are able to fully (and exactly) reconstruct
$R$. Examples include reformatting or sorting modules.
  (2) Private modules where such a priori knowledge does not
exist. Here, the only information available to users, on the
module's behavior, is the one given by $R_V$. Examples include
proprietary software,  e.g. a genetic disorder susceptibility
module\footnote{We discuss in Section \ref{sec:conclusions} how
partial prior knowledge can be handled by our approach.}.

Given a view (projected relation) $R_V$ of a private module $m$, the
\emph{possible worlds} of $m$ are all the possible full relations
(over the same schema as $R$) that are consistent with $R_V$ w.r.t
the visible attributes. Formally,
\begin{definition}\label{def:pos-worlds-standalone}
Let $m$ be a private module with a corresponding relation $R$,
having input and output attributes $I$ and $O$, resp., and let $V
\subseteq I \cup O$. The set of {\em possible worlds} for $R$ w.r.t. $V$,
denoted $\Worlds(R, V)$, consist of all relations
$R'$ over the same schema as $R$ that satisfy the functional
dependency $I \rightarrow O$ and where $\proj{V}{R'} = \proj{V}{R}$. 
\end{definition}
\begin{example}\label{eg:sa-worlds}
Returning to module $m_1$, suppose the visible attributes are $V =
\{a_1, a_3, a_5\}$ resulting in the view $R_V$ in
Figure~\ref{fig:m1'}. For clarity, we have added $I \cap V$ (visible
input) and $O \cap V$ (visible output) above the attribute names to
indicate their role. Naturally, $R_1 \in \Worlds(R_1, V)$.
Figure~\ref{fig:possible-worlds} shows four additional sample
relations $R_1^1,R_1^2,R_1^3, R_1^4$ in $\Worlds(R_1, V)$, such that $\forall i
\in [1, 4], \proj{V}{R_1^i} = \proj{V}{R_1}= R_V$. (Overall there are
sixty four relations in $\Worlds(R_1, V)$).
\end{example}
To guarantee privacy of a module $m$, the view $R_V$ should ensure
some level of uncertainly w.r.t the
 value of the output
$m(\proj{I}{\tup{t}})$, for tuples $t\in R$. To define this, we
introduce the notion of $\Gamma$-standalone-privacy, for a given
parameter $\Gamma \geq 1$. Informally, $R_V$ is
$\Gamma$-standalone-private if for every $t \in R$, the possible
worlds $\Worlds(R, V)$ contain at least $\Gamma$ distinct output
values that could be the result of $m(\proj{I}{\tup{t}})$.
\begin{definition}\label{def:standalone-privacy}
Let $m$ be a private module with a corresponding relation $R$ having
input and output attributes $I$ and $O$ resp. Then $m$ is
\emph{$\Gamma$-standalone-private} w.r.t a set of visible attributes
$V$, if for every tuple $x \in \proj{I}{R}$, \mbox{ $|\Out_{x, m}|
\geq \Gamma$}, where
  $\Out_{x, m} = \{\tup{y} \mid  \exists R' \in \Worlds(R, V),~~\exists \tup{t'} \in R'~~ s.t~~
  \tup{x} = \proj{I}{\tup{t'}} \wedge \tup{y}=\proj{O}{\tup{t'}} \}$.
\par
If $m$ is \emph{$\Gamma$-standalone-private} w.r.t. $V$, then we will call $V$ a \emph{safe subset} for 
$m$ and $\Gamma$.
\end{definition}

Practically,
$\Gamma$-standalone-privacy means that for any input
the adversary cannot guess $m$'s output with probability greater
than $\frac{1}{\Gamma}$.

\begin{example}\label{eg:sa-privacy}
It can be verified that, if $V = \{a_1, a_3, a_5\}$ then for all $\tup{x} \in \proj{I}{R_1}$,
$|Out_x| \geq 4$, so
$\{a_1, a_3, a_5\}$ is safe for $m_1$ and $\Gamma = 4$.
As an example, from
Figure~\ref{fig:possible-worlds}, when $\tup{x} = (0, 0)$,
 $\Out_{x,m} = \{ (0, \underline{0}, 1), (0, \underline{1}, 1), (1, \underline{0}, 0), (1, \underline{1}, 0) \}$
 (hidden attributes are underlined).
Also, hiding any two output attributes from $O = \{a_3, a_4, a_5\}$
ensures standalone privacy for $\Gamma = 4$. For example, if $V =
\{a_1, a_2, a_3\}$ (i.e. the output attributes $\{a_4, a_5\}$ are
hidden), then the input $(0, 0)$ can be mapped to one of $(0,
\underline{0}, \underline{0}), (0, \underline{0}, \underline{1}),
(0, \underline{1}, \underline{0})$ and $(0, \underline{1},
\underline{1})$; this holds for other assignments of input
attributes as well. However, $V = \{a_3, a_4, a_5\}$ (i.e. when only
the input attributes are hidden) is not safe for $\Gamma = 4$: for
any input $\tup{x}$, $\Out_{x, m} = \{(0,1,1), (1,1,0), (1,0,1)\}$,
containing only three possible output tuples.
\end{example}

There may be several safe subsets $V$ for a given module $m$ and parameter $\Gamma$.
Some of the corresponding $R_V$ views
may be preferable to others, e.g. they provide users with
more useful information, allow to answer more common/critical user
queries, etc. Let $\widebar{V}=(I \cup O) \setminus V$ denote the
attributes of $R$ that do not belong to the view. If $\cost(\widebar{V})$
denotes the penalty of hiding the attributes in $\widebar{V}$,
a natural goal is to choose a view s.t. that $\cost(\widebar{V})$
is minimized.
To understand the difficulty of this problem, we study a
version of the problem where
the cost function is additive:
 each attribute $a$ has some penalty value $\cost(a)$
and the penalty of hiding $\widebar{V}$  is the sum of the penalties
of the individual attributes,  $\cost(\widebar{V}) = \Sigma_{a\in
\widebar{V}} \cost(a)$. We call this optimization problem the
\emph{standalone \secureview problem} and discuss it in
Section~\ref{sec:standalone}.



\subsection{Workflows and Relations}

A workflow $W$ consists of a set of modules $m_1, \cdots, m_n$,
connected as a DAG (see, for example, the workflow in
Figure~\ref{fig:wf-view}).  Each module $m_i$ has a set $I_i$ of
input attributes and a set $O_i$ of output attributes.
We assume that (1) for each module, the names of its
input and output attributes are disjoint, i.e. $I_i \cap O_i =
\emptyset$,  (2) the names of the output attributes of distinct
modules are disjoint, namely $O_i \cap O_j = \emptyset$, for $i\neq
j$ (since each data item is produced by a unique module), and
(3) whenever an output of a module $m_i$ is fed as input to
a module $m_j$ the corresponding output and input attributes of
$m_i$ and $m_j$ have the same name. The DAG shape of the workflow
guarantees that these requirements are not contradictory.

\eat{
To execute the workflow, some input values need to be provided to
the input variables which are not fed by other modules (e.g. attributes $a_1$ and $a_2$ in
Figure~\ref{fig:wf1}).  Modules
where all input variables were assigned some value can be executed.
The computed output values are then fed to the relevant connected
modules. This enables the execution of further modules where all
input variables now have some assigned value, etc.
}

We model executions of $W$ as a relation $R$ over the set of
attributes $A=\cup_{i=1}^{n} (I_i \cup O_i)$, satisfying the set
of functional dependencies $F = \{I_i \rightarrow O_i : i \in [1, n]\}$.
Each tuple in $R$ describes an execution of the workflow $W$. In
particular, for every $t \in R$, and every $i \in [1, n]$,
$\proj{O_i}{\tup{t}}=m_i(\proj{I_i}{\tup{t}})$.

\begin{example}\label{eg:wf-relations}
Returning to the sample workflow in Figure~\ref{fig:wf-view}, the
input and output attributes of modules $m_1, m_2, m_3$ respectively
are (i) $I_1 = \{a_1, a_2\}$, $O_1 = \{a_3, a_4, a_5\}$, (ii) $I_2 =
\{a_3, a_4\}$, $O_2 = \{a_6\}$ and (iii) $I_3 = \{a_4, a_5\}$, $O_3
= \{a_7\}$. The underlying functional dependencies in the relation
$R$ in Figure~\ref{fig:wf-prov} reflect the keys of the constituent
modules, e.g. from $m_1$ we have $a_1 a_2 \longrightarrow a_3 a_4
a_5$, from $m_2$ we have $a_3 a_4 \longrightarrow a_6$, and from
$m_3$ we have $a_4 a_5 \longrightarrow a_7$.
\end{example}

Note that the output of a module may be
input to several modules, hence the names of the input attributes of
distinct modules are not necessarily disjoint. It  is therefore possible
that $I_i\cap I_j \neq \emptyset$ for $i \neq j$. We call this {\em
data sharing} and  define the degree of data sharing
in a workflow:

\begin{definition}
A workflow $W$ is said to have \emph{$\gamma$-bounded data sharing}
if every attribute in $W$ can appear in the left hand side of at
most $\gamma$ functional dependencies $I_i \rightarrow O_i$.
\eat{Equivalently, if $I_i \rightarrow O_i$, $i = 1$ to $n$ are the
functional dependencies in relation $R$ of the workflow, every
attribute $a$ of $R$ can appear on the left hand side of at most
$\gamma$ such dependencies.}
\end{definition}

In the workflow of our running example, $\gamma = 2$. Intuitively,
if a workflow has $\gamma$-bounded data sharing then a data item can
be fed as input to at most $\gamma$ different modules. In the
following sections we will see the implication of such a bound on
the complexity of the problems studied.

\begin{figure}[h]
\centering \subfloat[$R_1^1$]{\label{tab:w1}
\begin{tabular}{|cc|ccc|}
\hline \multicolumn{2}{|c|}{$I$} & \multicolumn{3}{c|}{$O$}\\ \hline
$a_1$ & $a_2$ & $a_3$ & $a_4$ & $a_5$ \\ \hline 0 & 0 & 0 & 0 & 1 \\
\hline 0 & 1 & 1 & 0 & 0 \\ \hline 1 & 0 & 1 & 0 & 0 \\ \hline 1 & 1
& 1 & 0 & 1 \\ \hline
\end{tabular}
} \subfloat[$R_1^2$]{\label{tab:w2}
\begin{tabular}{|cc|ccc|}
\hline \multicolumn{2}{|c|}{$I$} & \multicolumn{3}{c|}{$O$}\\ \hline
$a_1$ & $a_2$ & $a_3$ & $a_4$ & $a_5$ \\ \hline 0 & 0 & 0 & 1 & 1 \\
\hline 0 & 1 & 1 & 1 & 0 \\ \hline 1 & 0 & 1 & 0 & 0 \\ \hline 1 & 1
& 1 & 0 & 1 \\ \hline
\end{tabular}
} 
\subfloat[$R_1^3$]{\label{tab:w3}
\begin{tabular}{|cc|ccc|}
\hline \multicolumn{2}{|c|}{$I$} & \multicolumn{3}{c|}{$O$}\\ \hline
$a_1$ & $a_2$ & $a_3$ & $a_4$ & $a_5$ \\ \hline 0 & 0 & 1 & 0 & 0 \\
\hline 0 & 1 & 0 & 0 & 1 \\ \hline 1 & 0 & 1 & 0 & 0 \\ \hline 1 & 1
& 1 & 0 & 1 \\ \hline
\end{tabular}
}
\subfloat[$R_1^4$]{\label{tab:w4}
\begin{tabular}{|cc|ccc|}
\hline \multicolumn{2}{|c|}{$I$} & \multicolumn{3}{c|}{$O$}\\ \hline
$a_1$ & $a_2$ & $a_3$ & $a_4$ & $a_5$ \\ \hline 0 & 0 & 1 & 1 & 0 \\
\hline 0 & 1 & 0 & 1 & 1 \\ \hline 1 & 0 & 1 & 0 & 0 \\ \hline 1 & 1
& 1 & 0 & 1 \\ \hline
\end{tabular}
} \vspace{-2mm} \caption{$R_1^i \in \Worlds(R_1, V)$, $i \in [1,
4]$}
\vspace{-4mm} \label{fig:possible-worlds}
\end{figure}

\eat{
\scream{TODO: (i) Add example of workflow GRAPH, and details the name of
input/output variables of each module.~~~\\
(ii) Show an example for an execution (explain where each
value comes from).\\
(iii) Show the relation and FDs for the workflow DAG in the
previous example, and the tuple describing its execution.\\
(iv) Mention somewhere that all the results also hold when $\Gamma_i$
is different for different modules $m_i$}
}

\subsection{Workflow Module Privacy}
\eat{
}

To define privacy in the context of a workflow, we first extend our
notion of {\em possible worlds} to a workflow view. Consider the
view $R_V=\proj{V}{R}$ of a workflow relation $R$. Since the
workflow may contain private as well as public modules, a possible
world for $R_V$ is a full relation that not only agrees with $R_V$
on the content of the visible attributes, but is also consistent
w.r.t the expected behavior of the public modules. In the following
definitions, $m_1, \cdots, m_n$ denote the modules in the workflow
$W$ and $F$ denotes the set of functional dependencies $I_i\!
\rightarrow \!O_i$, $i \in [1, n]$ in the relation $R$.

\begin{definition}\label{def:pos-worlds-workflow}
\eat{
} The set of {\em possible worlds} for the workflow relation $R$
w.r.t. $V$, denoted also $\Worlds(R, V)$, consists of all the
relations $R'$ over the same attributes as $R$ that satisfy the
functional dependencies in $F$
and where (1) $\proj{V}{R'} = \proj{V}{R}$, 
and (2) for every public module
$m_i$ in $W$ and every tuple $\tup{t'} \in R'$,
$\proj{O_i}{\tup{t'}} = m_i(\proj{I_i}{\tup{t'}})$.
\end{definition}

Note that when a workflow consists only of private modules, the
second constraint does not need to be enforced. We call these
{\em all-private workflows} and study them in
Section~\ref{sec:private-module}. We then show in
Section~\ref{sec:public-module}  that attaining privacy when public
modules are also used is fundamentally harder.

We are now ready to define the notion of  $\Gamma$-workflow-privacy,
for a given parameter $\Gamma \geq 1$. Informally, a view $R_V$ is
$\Gamma$-workflow-private if for every tuple $t \in R$, and every
private module $m_i$ in the workflow, the possible worlds
$\Worlds(R, V)$ contain at least $\Gamma$ distinct output values that
could be the result of $m_i(\proj{I_i}{\tup{t}})$.

\begin{definition}\label{def:workflow-privacy}
A private module $m_i$ in $W$ is \emph{$\Gamma$-workflow-private}
w.r.t a set of visible
attributes $V$, if for every tuple \mbox{$\tup{x} \in \proj{I_i}{R}$},
  \mbox{$|\Out_{\tup{x}, W}| \geq \Gamma$}, $\text{~~where~~}$ 
  $\Out_{\tup{x}, W} =$ $\{\tup{y} \mid  \exists R'$ $\in \Worlds(R, V),$ $s.t., ~~\forall \tup{t'} \in R'~~ 
  \tup{x} = \proj{I_i}{\tup{t'}}$ $\Rightarrow \tup{y}=\proj{O_i}{\tup{t'}} \}$.
  
$W$ is called \emph{$\Gamma$-private} if every private
module $m_i$ in $W$ is $\Gamma$-workflow-private.

If $W$ (resp. $m_i$) is \emph{$\Gamma$-private}
($\Gamma$-workflow-private) w.r.t. $V$, then we call $V$ a
\emph{safe subset} for $\Gamma$-privacy of $W$
($\Gamma$-workflow-privacy of $m_i$).
\end{definition}
For simplicity, in the above definition we assumed that
the privacy requirement of every module $m_i$ is the same $\Gamma$.
The results and proofs in this paper remain unchanged
when different modules $m_i$ have different privacy requirements
$\Gamma_i$.

In the rest of the paper, for a set of
visible attributes $V \subseteq A$, $\overline{V} = A \setminus V$
will denote the hidden attributes in the workflow.
\eat{
}
The following proposition is easy to verify, which will be useful
later:
\eat{
 which says that if a set
of hidden
attributes guarantees $\Gamma$-workflow-privacy
for a module $m_i$,
then hiding any of its superset also guarantees $\Gamma$-workflow-privacy for $m_i$.
}

\begin{proposition}\label{prop:superset}
If $V$ is a safe subset for $\Gamma$-workflow-privacy of a module $m_i$
in $W$, then any $V'$ such that $V' \subseteq V$ (or, $\widebar{V'} \supseteq \widebar{V}$) also
guarantees $\Gamma$-workflow-privacy of $m_i$.
\end{proposition}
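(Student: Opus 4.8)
The plan is to reduce the statement to a single monotonicity fact about possible worlds: shrinking the visible set can only enlarge the set of possible worlds. Concretely, I would first show that $V' \subseteq V$ implies $\Worlds(R, V) \subseteq \Worlds(R, V')$. Take any $R' \in \Worlds(R, V)$. By Definition~\ref{def:pos-worlds-workflow}, $R'$ is over the same attributes as $R$, satisfies all functional dependencies in $F$, respects the public-module constraints, and satisfies $\proj{V}{R'} = \proj{V}{R}$. The first three conditions do not mention the visible set at all, so they continue to hold verbatim when we replace $V$ by $V'$. For the last condition, since $V' \subseteq V$, projecting the equality $\proj{V}{R'} = \proj{V}{R}$ further onto $V'$ yields $\proj{V'}{R'} = \proj{V'}{R}$. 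Hence $R' \in \Worlds(R, V')$, establishing the containment.

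Next I would push this containment through the definition of $\Out_{\tup{x}, W}$. Fix any input $\tup{x} \in \proj{I_i}{R}$; note that this index set depends only on $R$, not on the choice of visible attributes, so it is the same for $V$ and $V'$. By Definition~\ref{def:workflow-privacy}, a value $\tup{y}$ lies in $\Out_{\tup{x}, W}$ precisely when some witnessing world $R'$ exists in the relevant $\Worlds$ set on which every tuple with input $\tup{x}$ for $m_i$ maps to output $\tup{y}$. Since any witness drawn from $\Worlds(R, V)$ is, by the containment above, also a member of $\Worlds(R, V')$, every $\tup{y}$ achievable under $V$ is also achievable under $V'$. Writing the two output sets as $\Out_{\tup{x}, W}^{V}$ and $\Out_{\tup{x}, W}^{V'}$, this gives $\Out_{\tup{x}, W}^{V} \subseteq \Out_{\tup{x}, W}^{V'}$, and therefore $|\Out_{\tup{x}, W}^{V'}| \geq |\Out_{\tup{x}, W}^{V}| \geq \Gamma$, where the final inequality uses the hypothesis that $V$ is safe for $m_i$. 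Since $\tup{x}$ was arbitrary, $m_i$ is $\Gamma$-workflow-private w.r.t.\ $V'$.

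There is no genuine obstacle here; the whole argument is a monotonicity observation, and the only point requiring care is the direction of the containment. One must keep straight that hiding more attributes (taking $V' \subseteq V$, equivalently $\widebar{V'} \supseteq \widebar{V}$) \emph{relaxes} rather than tightens the constraint $\proj{V}{R'} = \proj{V}{R}$, so the possible-worlds set grows and the adversary's uncertainty about $m_i(\tup{x})$ can only increase. The reasoning is unchanged if distinct modules carry distinct privacy parameters $\Gamma_i$, since everything is argued one module at a time.
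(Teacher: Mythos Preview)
Your proof is correct and is exactly the monotonicity argument one would write here; the paper itself gives no proof, stating only that the proposition ``is easy to verify.'' Your careful check that $V'\subseteq V$ implies $\Worlds(R,V)\subseteq\Worlds(R,V')$, and that this inclusion propagates to $\Out_{\tup{x},W}$, fills in precisely the details the paper omits.
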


As we illustrate in the sequel, given a workflow $W$ and a parameter
$\Gamma$ there may be several incomparable (in terms of set
inclusion) sets $V$ of visible attributes w.r.t. which $W$ is
$\Gamma$-private. Our goal will be to choose one that minimizes the
penalty $\cost(\widebar{V}) = \Sigma_{a\in \widebar{V}} \cost(a)$ of
the  hidden attributes $\widebar{V}$ -- this we call the \emph{workflow \secureview problem},
or simply the \secureview problem. The candidates
are naturally the maximal, in terms of set inclusion, safe sets
$V$ (and correspondingly the minimal $\widebar{V}$s).


\subsection{Complexity Classes and Approximation}
\noindent
In the following sections
we will study the \secureview problem: minimize
cost of the hidden attributes that ensures that a workflow
is $\Gamma$-private.
We will prove that this problem is
NP-hard even in very restricted cases and study poly-time
\emph{approximation algorithms} as well as the \emph{hardness of
approximations} for different versions of the problem. We will use
the following common notions of complexity and approximation:
\eat{, common in the
literature (for instance, see \cite{VaziraniBook}).}
An algorithm is said to be a
\emph{$\mu(n)$-approximation algorithm} for a given optimization
problem, for some non-decreasing function $\mu(n): \mathbb{N}^+
\rightarrow \mathbb{N}$, if on every input of size $n$ it computes a
solution where the value is within a factor of $\mu(n)$ of the value
returned by an optimal algorithm for the problem. An optimization
problem is said to be \emph{$\mu(n)$-hard to approximate} if for all
inputs of size $n$, for all sufficiently large $n$, a
$\mu(n)$-approximation algorithm for the problem cannot exist
assuming some standard complexity results hold. In some parts of
the paper (Theorems~\ref{thm:cardinality-private},
\ref{thm:set-private}, \ref{thm:cardinality-public}), we will use
complexity results of the form $NP \not\subseteq DTIME(n^{f(n)})$,
where $f(n)$ is a poly-logarithmic or sub-logarithmic function of $n$
and $DTIME$ represents deterministic time. For example, the hardness
result in Theorem~\ref{thm:cardinality-private} says that there
cannot be an $O(\log n)$-approximation algorithm unless all problems
in $NP$ have $O(n^{\log \log n})$-time deterministic exact
algorithms. Finally, a problem is said to be \emph{APX-hard} if
there exists a constant $\epsilon > 0$ such that a $(1 +
\epsilon)$-approximation in poly-time would imply $P = NP$. If a
problem is APX-hard, then the problem cannot have a \emph{PTAS},
i.e, a $(1+\epsilon)$-approximation algorithm which runs in
polynomial time for all constant $\epsilon > 0$, unless $P = NP$.


\section{Standalone Module Privacy}\label{sec:standalone}
We start our study of workflow privacy by considering the privacy of
a standalone module, which is the  simplest special case of a
workflow. Hence understanding it is a first step towards
understanding the general case. We will also see that
standalone-privacy guarantees of individual modules may be used as
building blocks for attaining workflow privacy.

We analyze below the time complexity of obtaining (minimal cost)
guarantees for standalone module privacy. Though the notion of
$\Gamma$-standalone-privacy is similar to the well-known notion of
$\ell$-diversity \cite{MKG+07}, to the best of our knowledge the
time complexity of this problem has not been studied.\\

\noindent
\textbf{Optimization problems and parameters.~~}
 Consider a standalone module $m$ with input
attributes $I$, output attributes $O$, and a relation $R$. Recall
that a visible subset of attributes $V$ is called a \emph{safe
subset} for module $m$ and privacy requirement $\Gamma$, if $m$ is
$\Gamma$-standalone-private w.r.t. $V$ (see
Definition~\ref{def:standalone-privacy}). If each attribute $a \in I
\cup O$ has cost $\cost(a)$, \emph{the standalone \SAsecureview\ problem}
aims to find a safe subset $V$ s.t. the cost of the hidden
attributes, $\cost(\widebar{V}) = \sum_{a \in \widebar{V}}
\cost(a)$, is minimized. The corresponding decision version will
take a cost limit $C$ as an additional input, and decide whether
there exists a safe subset $V$ such that $\cost(\widebar{V}) \leq
C$.

One natural way of solving the optimization version of the
standalone \SAsecureview\ problem is to consider all possible
subsets $V \subseteq I \cup O$, check if $V$ is safe, and return the
safe subset $V$ s.t. $\cost(\widebar{V})$ is minimized. This
motivates us to define and study the simpler \emph{\safeview
problem}, which takes a subset $V$ as input and decides whether $V$
is a safe subset.

 To understand how much of the complexity of the standalone
\SAsecureview\ problem comes from the need to consider different
subsets of attributes, and what is due to the  need to determine the
safety of subsets, we study the time complexity of standalone \SAsecureview,
with and without access to \emph{an oracle for the \safeview
problem}, henceforth called a \emph{\safeview oracle}. A \safeview
oracle takes a subset $V \subseteq I \cup O$ as input and answers
whether $V$ is safe. In the presence of a \safeview oracle, the time
complexity of the \safeview problem is mainly due to the number of
oracle calls, and hence we study the \emph{communication
complexity}. Without access to such an oracle, we also study the
\emph{computational complexity} of this problem.

In our discussion below, $k = |I| + |O|$ denotes the total number of
attributes in the relation $R$, and $N$ denotes the number of rows
in $R$ (i.e. the number of executions). Then
$N \leq \prod_{a \in I}|\Delta_a|$ $\leq \delta^{|I|} \leq \delta^{k}$
where $\Delta_a$ is the domain of attribute $a$
and $\delta$ is the maximum domain size of attributes.

\subsection{Lower Bounds}\label{sec:lb_standalone}
\eat{
%
} We start with lower bounds for the \safeview problem. Observe that
this also gives lower bounds for the standalone \SAsecureview\
problem without a \safeview oracle. To see this, consider a set $V$
of attributes and assume that each attribute in $V$ has cost $>0$
whereas all other attributes have cost zero. Then \safeview has a
positive answer for $V$ iff the standalone \SAsecureview\ problem
has a solution with cost $=0$ (i.e. one that hides only the
attributes $\widebar{V}$).

\vspace{2mm}

 \noindent \textbf{Communication complexity of
\safeview.} Given a visible subset $V \subseteq I \cup O$, we show
that deciding whether $V$ is safe needs $\Omega(N)$ time.
Note that just to read the table as input takes $\Omega(N)$ time. So
the lower bound of $\Omega(N)$ does not make sense unless we assume
the presence of a \emph{data supplier} (we avoid using the term
``oracle'' to distinguish it from \safeview oracle) which supplies
the tuples of $R$ on demand:  Given an assignment $\tup{x}$ of the
input attributes $I$, the data supplier outputs the value $\tup{y} =
m(\tup{x})$ of the output attributes $O$. The following theorem
shows the $\Omega(N)$ communication complexity lower bound
in terms of the number of calls to the data
supplier; namely, that (up to a constant factor) one indeed needs to
view the full relation.
%
%

\begin{theorem}\label{thm:communication-no-oracle}
\textbf{(\safeview Communication Complexity) }
Given a module $m$, a subset $V \subseteq I \cup O$,
and a privacy requirement $\Gamma$, deciding
whether $V$ is safe for $m$ and $\Gamma$
requires $\Omega(N)$ calls to the data supplier, where
 $N$ is the number of tuples in the relation $R$ of $m$.
\end{theorem}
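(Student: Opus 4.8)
The plan is to prove the bound by an adversary argument against any deterministic algorithm that probes the data supplier, fixing a single convenient hard instance of $(m,V,\Gamma)$. I would take the visible set to be $V = O$, i.e.\ hide all input attributes and keep every output attribute visible. As in the unsafe case of Example~\ref{eg:sa-privacy} (the $V=\{a_3,a_4,a_5\}$ case), when all input attributes are hidden the possible worlds $\Worlds(R,V)$ may reassign inputs freely, subject only to the dependency $I\rightarrow O$ and to agreeing with $\proj{O}{R}$. Consequently, for every $\tup{x}\in\proj{I}{R}$ the set $\Out_{x,m}$ equals the set of \emph{all distinct output tuples} occurring in $R$, independent of $\tup{x}$. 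Thus, for this choice of $V$, by Definition~\ref{def:standalone-privacy} the view is safe for $\Gamma$ if and only if $R$ contains at least $\Gamma$ distinct output values, and I have reduced the safety test to deciding whether the number of distinct outputs meets the threshold.

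I would then construct the hard family and run the adversary on this reduced question. Fix the input domain to be exactly the $N$ defined inputs $\tup{x}_1,\dots,\tup{x}_N$, so that $\proj{I}{R}$ is known in advance and the only degree of freedom is the output value the supplier returns on each query. Reserve $\Gamma-1$ inputs as \emph{anchors}, whose outputs are the fixed distinct values $v_1,\dots,v_{\Gamma-1}$; the remaining $N-\Gamma+1$ inputs are \emph{free}. The adversary answers every query on a free input with the value $v_1$. The key claim is that, as long as at least one free input is still unqueried, the transcript so far is consistent with two relations: an unsafe one $R_{\text{unsafe}}$ in which every free input outputs a value in $\{v_1,\dots,v_{\Gamma-1}\}$ (exactly $\Gamma-1$ distinct outputs), and a safe one $R_{\text{safe}}$ that agrees with $R_{\text{unsafe}}$ on all queried inputs but assigns a fresh $\Gamma$-th value $v_\Gamma$ to one of the still-unqueried free inputs (exactly $\Gamma$ distinct outputs). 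Both satisfy $I\rightarrow O$ and both are consistent with all answers returned so far, yet they differ in safety; hence any algorithm that has not queried all free inputs cannot decide correctly, forcing at least $N-\Gamma+1 = \Omega(N)$ calls to the data supplier. (Equivalently, one can present this as a reduction from the deterministic $\textsc{Or}$/unordered-search problem, whose query complexity is exactly $N$.)

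The step I expect to be the main obstacle is pinning the gadget precisely at the $\Gamma/(\Gamma-1)$ boundary so that a single unqueried output is decisive: I must verify that toggling that one value flips the global distinct-output count across the threshold while keeping $\proj{O}{R}$ and the functional dependency $I\rightarrow O$ simultaneously satisfiable in both $R_{\text{safe}}$ and $R_{\text{unsafe}}$. A second delicate point is the supplier model itself, since in principle the algorithm may query arbitrary input assignments; fixing the input domain to coincide exactly with the $N$ rows of $R$ both ensures that no ``out-of-relation'' query reveals information the adversary has not already committed to and guarantees that the parameter $N$ in the bound is genuinely the size of the relation. Note finally that, as observed immediately before the theorem, this lower bound for \safeview\ transfers to the standalone \SAsecureview\ problem without a \safeview\ oracle by assigning positive cost only to the attributes of $V$.
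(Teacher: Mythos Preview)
Your argument is correct and takes a genuinely different route from the paper. The paper reduces from the two-party set-disjointness problem: it builds a module with inputs $(a,b,id)$ and output $y=a\wedge b$, takes $V=\{id,y\}$ and $\Gamma=2$, and argues that safety of $V$ is equivalent to the sets $A,B$ encoded in the $a$- and $b$-columns having nonempty intersection, so the $\Omega(N)$ bound is inherited from the communication complexity of disjointness. You instead choose $V=O$, which collapses the safety question to the purely combinatorial test ``does $R$ contain at least $\Gamma$ distinct output tuples?'', and then run a clean deterministic adversary for the threshold-distinct-element problem. What your approach buys is self-containment: you do not need to invoke an external communication-complexity lower bound or worry about how the two-party model maps onto the data-supplier model, and the reduction $|\Out_{x,m}|=|\proj{O}{R}|$ when $V=O$ is immediate from Definitions~\ref{def:pos-worlds-standalone} and~\ref{def:standalone-privacy}. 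The paper's route, on the other hand, illustrates that safety can be hard even when some input attributes remain visible. One small point worth making explicit in your write-up: your bound is $N-\Gamma+1$, so to conclude $\Omega(N)$ you should state that the hard family uses a fixed constant $\Gamma$ (e.g.\ $\Gamma=2$), which is legitimate since the theorem is a worst-case statement over instances $(m,V,\Gamma)$.
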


\begin{proof}[Proof sketch]
This theorem is proved by a reduction from the
\emph{set-disjointness problem}, where Alice and Bob hold two
subsets $A$ and $B$ of a universe $U$ and the goal is decide whether
$A \cap B \neq \phi$. This problem is known to have $\Omega(N)$
communication complexity where $N$ is the number of elements in the
universe. Details are  in
Appendix~\ref{sec:app_proofs_standalone}.
\end{proof}

\eat{
\begin{proof}
We prove the theorem by a communication complexity reduction
from the set disjointness problem:
Suppose Alice and Bob own two subsets $A$ and $B$ of a universe $U$, $|U| = N$. To decide whether
they have a common element (i.e. $A \cap B \neq \phi$) takes $\Omega(N)$ communications \cite{commn}.
\par
We construct the following relation $R$ with $N+1$ rows for the module $m$.
$m$ has three input attributes: $a, b, id$ and one output attribute $y$.
The attributes $a, b$ and $y$ are boolean, whereas $id$ is in the range $[1, N+1]$.
The input attribute $id$ denotes the identity of every row in $R$ and takes value $i \in [1, N+1]$ for the $i$-th
row.
The module $m$ computes the AND function of inputs $a$ and $b$, i.e., $y = a \wedge b$.
\par
Row $i$, $i \in [1, N]$, corresponds to element $i \in U$.
In row $i$, value of $a$ is 1 iff $i \in A$; similarly,  value of  $b$ is 1 iff $i \in B$.
The additional $N+1$-th row has $a_{N+1} = 1$ and $b_{N+1} = 0$.
The standalone privacy requirement $\Gamma = 2$ and the goal is to check if visible attributes $V = \{id, y\}$
(with hidden attributes $\widebar{V} = \{a, b\}$) is safe for this privacy requirement.

\par
 Note that if there is a common element $i \in A \cap B$,
then there are two $y$ values in the table:
in the $i$-th row, the value of  $y = a \wedge b$ will be 1, whereas, in the $N+1$-th row
it is 0. Hence hiding $\widebar{V} = \{a, b\}$ will ensure the privacy requirement of $\Gamma = 2$
(every input $\tup{x}$ to $m$ can be mapped either to 0 or 1).
If there is no such $i \in A \cap B$, the value of $y$ in all rows $i \in [1, N+1]$ will be
zero which does not meet the privacy requirement $\Gamma = 2$.
Hence we need to look at $\Omega(N)$ rows to decide
whether $V = \{id, y\}$ is  safe.
\end{proof}

\smallskip
\noindent

}

\eat{
\textbf{Computation Complexity}\\

\noindent
Here we construct another example of boolean module similar to the above,
however, the functionality of $m$ will have a succinct poly-size representation.
The following theorem will prove a lower bound of computation complexity for
such an example.
} 

 \noindent \textbf{Computational Complexity of \safeview: }
 The above $\Omega(N)$ computation complexity of  \safeview holds
when the relation $R$ is given explicitly tuple by tuple. The
following theorem shows that even when $R$
is described implicitly in a succinct manner, there
cannot be a poly-time (in the number of attributes) algorithm to
decide whether a given subset $V$ is safe unless $P = NP$
(proof is in Appendix~\ref{sec:app_proofs_standalone}).

\begin{theorem} \label{thm:computation-no-oracle}
\textbf{(\safeview Computational Complexity) } Given a module $m$
with a poly-size (in $k\!=\!|I|\!+\!|O|$) description  of
functionality, a subset $V \subseteq I \cup O$, and a privacy
requirement $\Gamma$, deciding whether $V$ is safe w.r.t. $m$ and
$\Gamma$ is co-NP-hard in $k$.
\end{theorem}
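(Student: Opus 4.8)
The plan is to establish co-NP-hardness by reducing from a standard NP-complete problem, using the complement: I want to show that deciding whether $V$ is \emph{not} safe is NP-hard, which makes the \safeview problem co-NP-hard. First I would set up a module $m$ whose functionality is described succinctly by a boolean formula (or circuit) rather than by an explicit table, so that the number of attributes $k$ is small while the relation $R$ has exponentially many rows. The reduction should start from a problem like \textsc{3-SAT} or, more conveniently, from a counting/validity-style question, since the negation of $\Gamma$-standalone-privacy amounts to asserting the existence of some input $\tup{x}$ for which the number of achievable output values $|\Out_{\tup{x},m}|$ is strictly less than $\Gamma$.

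The key technical observation I would exploit is the characterization of $\Out_{\tup{x},m}$ in terms of the visible attributes. When we fix the visible input portion $\proj{I \cap V}{\tup{x}}$, the output values $\tup{y}$ that can appear in a possible world are constrained by the visible output attributes: the projection $\proj{O \cap V}{\tup{y}}$ must match some row consistent with the view, while the hidden output attributes $\proj{O \setminus V}{\tup{y}}$ can range freely subject to the functional dependency $I \rightarrow O$. Thus computing (or bounding) $|\Out_{\tup{x},m}|$ reduces to counting, over the rows of $R$ agreeing with $\tup{x}$ on the visible input attributes, how many distinct visible-output patterns arise, multiplied by the freedom in the hidden output coordinates. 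I would engineer the reduction so that whether a particular input $\tup{x}$ witnesses a privacy violation (i.e.\ $|\Out_{\tup{x},m}| < \Gamma$) encodes the satisfiability (or unsatisfiability) of the SAT instance. Concretely, I would make $m$ compute a boolean function built from the SAT formula, choose $V$ to hide exactly the attributes needed so that the count of achievable outputs collapses precisely when the formula behaves a certain way, and pick $\Gamma$ at the threshold between the satisfiable and unsatisfiable cases.

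The main obstacle I anticipate is controlling the count $|\Out_{\tup{x},m}|$ cleanly: because hidden output attributes contribute multiplicatively to the number of possible worlds, and because the functional dependency $I \rightarrow O$ must be respected across all of $R$ simultaneously, I must design the gadget so that the violation of $\Gamma$-privacy for \emph{some} input corresponds exactly to the NP-witness, without spurious violations from other inputs. I would handle this by padding the construction with ``dummy'' input values whose outputs are forced to supply a guaranteed baseline of $\Gamma$ (or $\Gamma-1$) distinct output tuples, so that only the designated input can ever fall below the threshold, and it does so iff the SAT instance is satisfiable. This isolates the hardness to a single combinatorial condition.

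Finally I would verify the two directions: if the formula is satisfiable then the designated input $\tup{x}$ has fewer than $\Gamma$ achievable outputs, so $V$ is unsafe; and if the formula is unsatisfiable then every input, including $\tup{x}$, attains at least $\Gamma$ distinct outputs, so $V$ is safe. Since the module $m$ has only polynomially many attributes in the size of the formula and its functionality has a poly-size description (being essentially an evaluation of the formula plus the baseline gadget), this shows that deciding safety is co-NP-hard in $k$. The detailed gadget construction and the exact choice of $\Gamma$ are the routine parts I would defer to the appendix, consistent with how the companion communication-complexity result is presented.
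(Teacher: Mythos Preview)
Your high-level plan matches the paper's: reduce from (UN)SAT, encode the formula in the module, and show $V$ is safe iff the formula is unsatisfiable. But the paper's construction is much simpler than the ``designated input plus dummy padding'' scheme you anticipate, and the complication you flag as the main obstacle does not arise. The trick you are missing is to make the formula variables the \emph{visible} input attributes rather than hidden ones. The paper uses inputs $x_1,\ldots,x_\ell,y$ and one boolean output $z$, defines $m(x_1,\ldots,x_\ell,y) = \neg g(x_1,\ldots,x_\ell)\wedge\neg y$, hides only $y$ (so $V=\{x_1,\ldots,x_\ell,z\}$), and sets $\Gamma=2$. For each visible assignment $(x_1,\ldots,x_\ell)$, varying the hidden bit $y$ yields outputs $\{0,1\}$ when the assignment falsifies $g$ and $\{0\}$ when it satisfies $g$; hence $V$ is safe iff no assignment satisfies $g$. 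There is no single designated input and no need for dummies: every satisfying assignment is itself a witness to unsafety, and every non-satisfying assignment automatically meets the threshold. Your padding idea could probably be made to work, but it is precisely the part you defer as ``routine,'' whereas the paper's choice of which attributes to hide makes the whole issue disappear.
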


\begin{proof} [Proof sketch]
The proof of this theorem  works by a reduction from
the UNSAT problem, where given a boolean CNF formula $g$ on
variables $x_1, \cdots, x_{\ell}$, the goal is to decide whether, for
\emph{all} assignments of the variables, $g$ is not satisfiable.
Here given any assignment of the variables $x_1, \cdots, x_\ell$,
$g(x_1, \cdots, x_\ell)$ can be evaluated in polynomial time, which
simulates the function of the data supplier.
\end{proof}

\eat{
In our reduction, $N = 2^{k-1}$. Hence, alternatively,
if $N$ is the number of tuples in the relation, there does not exists a
$poly(\log N)$ algorithm, unless P = NP.
\begin{proof}
We prove the theorem by a reduction from UNSAT:
Suppose we are given a boolean CNF formula $g$ on $\ell$ boolean variables $x_1, \cdots, x_\ell$.
The goal is to decide if \emph{no} assignment of $x_1, \cdots, x_\ell$ can satisfy $g$.
Given such an UNSAT instance, we build a relation $R$ with input attributes
$x_1, \cdots, x_\ell, y$ and
output attribute $z$, all of boolean domain (hence $k = \ell + 2$).
\par
The function of $m$  has a succinct description as follows:
$m(x_1, \cdots, x_\ell, y) = \neg g(x_1, \cdots, x_\ell) \wedge \neg y$ (i.e., NOR of $g(x_1, \cdots, x_\ell)$ and $y$).
Hence in the table $R$, \emph{implicitly}, we have two tuples for each assignment to the variables, $x_1, \cdots, x_\ell$:
if the assignment for $x_1, \cdots, x_\ell$ satisfies the formula $g$
then, for both $y = 0$ and $y = 1$, we have  $z = 0$.
Otherwise if the assignment does not satisfy the formula, for $y = 0$, we have $z = 1$,
and for $y = 1$, $z = 0$.
The privacy requirement is $\Gamma = 2$ and the goal is to decide if
visible attributes $V = \{x_1, \cdots, x_{\ell}\} \cup \{z\}$ is a safe subset where    hidden attributes $\widebar{V} = \{ y\}$.
\par
Note that if the formula $g$ is \emph{not} satisfiable, then it suffices to hide $y$ to get 2-privacy,
i.e. $V$ is safe. This is
because for \emph{every} satisfying assignment, there are two ways to complete $y$ value (one that
is the correct one and one that is the opposite).
On the other hand, if the function $g$ has at least one satisfying assignment,
for that assignment in $R$,regardless of the value of the hidden attribute $y$,
the output $z$ has to always be 0.
In that case $V$ is not a safe subset.
\end{proof}

}

\noindent
\textbf{Lower Bound of Standalone \SAsecureview\ with a \safeview Oracle:~~} 
Now suppose we have access to a \safeview oracle, which takes care
of the ``hardness'' of the \safeview problem given in
Theorems~\ref{thm:communication-no-oracle} and
\ref{thm:computation-no-oracle}, in constant time.
The oracle takes a visible subset $V \subseteq I \cup O$ 
as input, and answers whether $V$ is safe for module $m$ and privacy requirement $\Gamma$. 
 The following theorem shows that the decision version of standalone \SAsecureview\
 remains hard
(i.e. not solvable in poly-time in the number of attributes):

\begin{theorem}\label{thm:communication-oracle}
\textbf{(Standalone \SAsecureview\ Communication Complexity, with \safeview oracle) }
Given a \safeview oracle and a cost limit $C$,
deciding whether there exists a safe subset $V \subseteq I \cup O$ with cost bounded by  $C$
requires $2^{\Omega(k)}$ oracle calls, where $k = |I| + |O|$.
\end{theorem}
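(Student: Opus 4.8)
The plan is to prove this as an adversary (decision-tree) lower bound against any algorithm that accesses the instance only through the \safeview oracle. Since such an algorithm never sees the relation $R$ itself, the idea is to build a family of modules that are pairwise indistinguishable under few oracle calls but differ in the answer to the cost-$C$ decision question, and then let an adversary answer queries while keeping this whole family alive.

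It is convenient to phrase everything in terms of hidden sets. Write $H = \widebar{V}$; by Proposition~\ref{prop:superset} the collection of \emph{feasible} hidden sets (those $H$ for which $V = (I\cup O)\setminus H$ is safe) is upward closed, so the \safeview oracle is exactly a membership oracle for a monotone family $\mathcal{F}$. I will set $k = 2t$, give every attribute unit cost, and take the cost bound $C = t$; the decision question then becomes whether $\mathcal{F}$ contains a set of size $\le t$. Now construct the family: let $\mathcal{F}_0 = \{H : |H| \ge t+1\}$ be a ``no'' instance, and for each $S$ with $|S| = t$ let $\mathcal{F}_S = \mathcal{F}_0 \cup \{H : H \supseteq S\}$ be a ``yes'' instance. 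Both are monotone, and because $|S| = t$ every proper superset of $S$ already has size $\ge t+1$, so $\mathcal{F}_S$ and $\mathcal{F}_0$ differ on \emph{exactly one} set, namely $H = S$. Hence $\mathcal{F}_0$ answers no (its smallest member has size $t+1 > C$) while each $\mathcal{F}_S$ answers yes ($S$ is feasible and $|S| = t = C$).

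The adversary answers every query according to $\mathcal{F}_0$. A single oracle call on $H$ distinguishes $\mathcal{F}_0$ from $\mathcal{F}_S$ only when $H = S$: for $|H| \ne t$, or for $|H| = t$ with $H \ne S$, the two families return the same value. Thus each query of size $t$ rules out at most one candidate $S$ and queries of any other size rule out none. After fewer than $\binom{2t}{t}$ queries of size exactly $t$, some $S^\star$ of size $t$ has not been queried, and all answers given so far are simultaneously consistent with $\mathcal{F}_0$ (a no-instance) and with $\mathcal{F}_{S^\star}$ (a yes-instance). Therefore no correct algorithm can halt before issuing $\binom{2t}{t} = 2^{\Omega(k)}$ oracle calls, which is the claimed bound.

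The remaining --- and main --- obstacle is \emph{honest realizability}: I must exhibit, for each family above, an actual private module (a relation $R$ over the $k$ attributes satisfying $I \rightarrow O$) together with a fixed $\Gamma$ whose genuine set of safe subsets is precisely $\{V : \widebar{V} \in \mathcal{F}\}$, so that the \safeview oracle used in the argument is the real one rather than an arbitrary monotone predicate. For $\mathcal{F}_0$ I would use a symmetric threshold module in which hiding any single attribute, input or output, multiplies the number of achievable output completions $|\Out_{\tup{x}, m}|$ by a constant factor: by Definition~\ref{def:standalone-privacy} only hidden output coordinates, together with hidden input coordinates that collapse several visible-output values into one visible-input class, can enlarge $\Out_{\tup{x},m}$, so one designs $R$ so that each attribute contributes exactly one such factor and chooses $\Gamma$ so that reaching $\Gamma$ distinct outputs requires hiding at least $t+1$ attributes. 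To obtain $\mathcal{F}_S$ I would augment this gadget with a planted correlation on the attributes of $S$ (extra tuples or values tuned so that hiding exactly the $t$ attributes of $S$ already yields $\Gamma$ distinct completions), while keeping every other size-$t$ hidden set below the threshold. Verifying that these relations respect $I \rightarrow O$, realize precisely the intended monotone family, and remain consistent with Proposition~\ref{prop:superset} is the delicate part of the construction.
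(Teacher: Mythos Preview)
Your adversary framework is sound and the counting is clean; the gap is exactly where you flag it, and it is a real gap rather than a formality. Requiring $\mathcal{F}_S$ to differ from $\mathcal{F}_0$ on \emph{exactly one} set $S$ forces you to build a module for which hiding $S$ already achieves $\Gamma$-privacy while hiding any other $t$-subset (and every $(t{-}1)$-subset) does not. Your sketch for this---``augment with a planted correlation on $S$''---does not explain how to make the planted set special without simultaneously making many neighboring $t$-sets safe, and the ``each hidden attribute multiplies $|\Out_{\tup x,m}|$ by a constant'' intuition is not how Definition~\ref{def:standalone-privacy} behaves for input attributes (a hidden input bit enlarges $\Out_{\tup x,m}$ only to the extent that flipping it changes the \emph{visible} output, which is module-dependent, not multiplicative). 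So as written the proof is incomplete: the hard part is the construction, and you have not supplied one.

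The paper avoids this trap by relaxing exactly the requirement that bites you. It fixes $\ell$ boolean inputs and one output (with the output priced at $\ell$ so only input attributes can be hidden within budget $C=\ell/2$), and builds two \emph{explicit} functions: $m_1(x)=1$ iff the input has at least $\ell/4$ ones, and $m_2$ identical except that it also demands a $1$ outside a planted set $A$ of size $\ell/2$. The adversary answers \textsc{No} for every $|V|\ge \ell/4$ and \textsc{Yes} otherwise. This is always correct for $m_1$; for $m_2$ it can be wrong only when $V\subseteq A$, so a single query eliminates at most $\binom{3\ell/4}{\ell/4}$ candidates for $A$ rather than one. The ratio $\binom{\ell}{\ell/2}\big/\binom{3\ell/4}{\ell/4}\ge (4/3)^{\ell/2}$ still yields $2^{\Omega(k)}$. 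The lesson is that you do not need the yes- and no-instances to disagree on a single query; you only need each query to kill a sub-exponential fraction of the planted family, and that much weaker demand is what makes an explicit, verifiable module construction possible.
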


\begin{proof}[Proof sketch] 
The  proof of this theorem involves a novel construction of two
functions, $m_1$ and $m_2$, on $\ell$ input attributes and a single
output attribute, such that for $m_1$ the minimum cost of a safe
subset is $\frac{3\ell}{4}$ whereas for $m_2$ it is $\frac{\ell}{2}$
($C = \frac{\ell}{2}$). In particular, for both $m_1$ and $m_2$, all
subsets of size $< \frac{\ell}{4}$ are safe and all other subsets
are unsafe, except that for $m_2$, there is exactly one special
subset of size $\frac{\ell}{2}$ such that this subset and all
subsets thereof are safe.

%
\par
    We show that for an algorithm using $2^{o(k)}$ calls, there always
    remains at least one special subset of size $\frac{\ell}{2}$
    that is consistent
    with all previous answers to queries. Hence after $2^{o(k)}$
calls, if the algorithm decides that there is a safe subset with
cost $\leq C$, we choose $m$ to be $m_1$; on the other hand, if it
says that there is no such subset, we set $m = m_2$. In both the
cases the answer of the algorithm is wrong which shows that there
cannot be such an algorithm distinguishing these two cases with
$2^{o(k)}$ calls (details in
Appendix~\ref{sec:app_proofs_standalone}).
%
\end{proof}
\subsection{Upper Bounds}\label{sec:ub_standalone}
\eat{
We now present simple algorithms for solving the \SAsecureview\ and
\safeview problems, with time complexity polynomial in the lower
bounds given above.
}
The lower bound results given above show that solving the standalone
\secureview\ problem is unlikely in time sub-exponential in $k$
or sub-linear in $N$. We now present simple algorithms for
solving the \SAsecureview\ and
\safeview problems, in time polynomial in $N$
and exponential in $k$.
\par
First note that, with access to a \safeview oracle, the
standalone \SAsecureview\ problem can be easily solved in $O(2^k)$ time, by
calling the oracle for all $2^{k}$ possible subsets and outputting
the safe subset with minimum cost.
\par
Without access to a \safeview oracle, we first ``read'' relation $R$
using $N$ data supplier calls. Once $R$ is available, the simple
algorithm sketched below implements the \safeview oracle (i.e. tests
if a set $V$ of attributes is safe) and works in time $O(2^kN^2)$:
For a visible subset $V$, we look at all possible assignments to the
attributes in $I \setminus V$. For each input value we then check if
it leads to at least $\frac{\Gamma}{\prod_{a \in O \setminus V}
|\Delta_a|}$ different values of the visible output attributes in $O
\cap V$ ($\Delta_a$ is the domain of attribute $a$). This is a
necessary and sufficient condition for guaranteeing $\Gamma$
privacy, since by all possible $\prod_{a \in O \setminus V}
|\Delta_a|$ extensions of the output attributes, for each input,
there will be $\Gamma$ different possible output values (details
in Appendix~\ref{sec:ub_standalone_app}).

We mention here also that essentially the same algorithms (with same
upper bounds) can be used to output \emph{all} safe attribute sets
of a standalone module, rather than just one with minimum cost. Such
exhaustive enumeration will be useful in the following sections.



\vspace{2mm}

 \noindent \textbf{Remarks.} These results indicate
that, in the worse case, finding a minimal-cost safe attribute set
for a module may take time that is exponential in the number of
attributes. Note, however, that the number of attributes of a single
module is typically not large (often less than 10, see
\cite{myexpt}), so the computation is still feasible.
Expert knowledge of module designers, about the module's behavior and
safe attribute sets may also be exploited to speed up the
computation.
Furthermore,  a given module is often used in many
workflows. For example, sequence comparison modules, like BLAST or
FASTA, are used in many different biological workflows. We will see
that safe subsets for individual modules can be used as building
blocks for attaining privacy for the full workflow. The effort
invested in deriving safe subsets for a module is thus amortized
over all uses.

\section{All-Private Workflows}\label{sec:private-module}
We are now ready to consider workflows that consist of several
modules. We first consider in this section workflows where all
modules are {\em private} (called \emph{all-private workflows}).
Workflows with a mixture of private and public modules are then
considered in Section \ref{sec:public-module}.

As in Section~\ref{sec:standalone}, we want to find a safe
visible subset $V$ with minimum cost s.t. all the modules in the
workflow are $\Gamma$-workflow-private w.r.t. $V$ (see
Definition~\ref{def:workflow-privacy}). One option is to devise
algorithms similar to those described for standalone modules in
the previous section. However, the time complexity of those
algorithms is now exponential in the {\em total number of attributes
of all modules in the workflow} which can be as large as
$\Omega(nk)$, $n$ being the number of modules in the workflow and
$k$  the maximum number of attributes of a single module. To avoid
the exponential dependency on $n$, the number of modules in the
workflow,which may be large \cite{myexpt}, and to exploit the safe
attribute subsets for standalone modules, which may have been
already computed, we attempt in this section to assemble workflow
privacy guarantees out of standalone module guarantees. We first
prove, in Section~\ref{sec:privacy-private}, that this is indeed
possible. Then, in the rest of this section, we study the
optimization problem of obtaining a safe view with minimum cost.

Let $W$ be a workflow consisting of modules $m_1,\cdots, m_n$, where
$I_i, O_i$ denote the input and output attributes of $m_i$, $i \in
[1, n]$, respectively. We use below $R_i$ to denote the relation for
the \emph{standalone} module $m_i$. The relations $R = R_1 \Join R_2
\Join \cdots \Join R_n$, with attributes $A=\bigcup_{1=1}^n(I_i \cup
O_i)$, then describes the possible executions of $W$. Note that if
one of the modules in $W$ is not a one-to-one function then the
projection $\proj{I_i \cup O_i}{R}$ of the relation $R$ on $I_i \cup
O_i$ may be a subset of the (standalone) module relation $R_i$.

\par
In this section (and throughout the rest of the paper), for a set of
visible attributes $V \subseteq A$, $\overline{V} = A \setminus V$
will denote the hidden attributes. Further, $V_i = (I_i \cup O_i)
\cap V$ will denote the visible attributes for module $m_i$, whereas
$\overline{V_i} = (I_i \cup O_i) \setminus V_i$ will denote the
hidden attributes for $m_i$, for $i \in [1, n]$.

 \subsection{Standalone-Privacy vs. Workflow-Privacy}
\label{sec:privacy-private}

We show that if a set of visible attributes guarantees
$\Gamma$-standalone-privacy for a module, then if the module is
placed in a workflow where only a subset of those attributes is made
visible, then $\Gamma$-workflow-privacy is guaranteed for the module
in this workflow. In other words, in an all-private workflow, hiding
the union of the corresponding hidden attributes of the individual
modules guarantees $\Gamma$-workflow-privacy for all of
them\footnote{By Proposition~\ref{prop:superset}, this
also means that hiding any superset of this union would also be safe
for the same privacy guarantee.}. We formalize this next.

\begin{theorem}\label{thm:privacy-private}
Let $W$ be an all-private workflow with modules $m_1, \cdots, m_n$.
Given a parameter $\Gamma \geq 1$, let $V_i \subseteq (I_i \cup
O_i)$ be a set of visible attributes w.r.t which $m_i$, $i \in [1,
n]$, is $\Gamma$-standalone-private.
Then the workflow $W$ is $\Gamma$-private w.r.t the set of visible
attributes $V$ s.t.  $\overline{V} = \bigcup_{i = 1}^n
\widebar{V_i}$.
\end{theorem}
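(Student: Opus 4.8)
The plan is to reduce workflow privacy of each module to its standalone privacy, and the engine of the reduction is the observation that in an all-private workflow the only constraints a possible world must satisfy are the functional dependencies $F$ together with agreement on the visible attributes (constraint (2) of Definition~\ref{def:pos-worlds-workflow} is vacuous). First I would record the elementary set-inclusion: since $\widebar{V}=\bigcup_{j}\widebar{V_j}\supseteq \widebar{V_i}$, we have $V\cap(I_i\cup O_i)\subseteq V_i$. Standalone privacy is monotone under shrinking the visible set (hiding more attributes can only enlarge $\Worlds(R_i,\cdot)$, and hence the output sets $\Out_{x,m_i}$ --- the standalone analogue of Proposition~\ref{prop:superset}), so $m_i$ is still $\Gamma$-standalone-private with respect to the smaller set $V\cap(I_i\cup O_i)$. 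It therefore suffices to prove the implication: if $m_i$ is $\Gamma$-standalone-private w.r.t. $V\cap(I_i\cup O_i)$, then $m_i$ is $\Gamma$-workflow-private w.r.t. $V$ in $W$.

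The heart of the argument is the inclusion $\Out_{\tup{x},m_i}\subseteq \Out_{\tup{x},W}$ for every input $\tup{x}\in\proj{I_i}{R}$, where the left-hand set is computed for the standalone module w.r.t. $V\cap(I_i\cup O_i)$ and the right-hand set w.r.t. $V$; note $\proj{I_i}{R}\subseteq\proj{I_i}{R_i}$, so every workflow input of $m_i$ is one for which standalone privacy already guarantees $\ge\Gamma$ distinct outputs. Given this inclusion, $|\Out_{\tup{x},W}|\ge|\Out_{\tup{x},m_i}|\ge\Gamma$ for all $i$ and all $\tup{x}$, which is exactly $\Gamma$-privacy of $W$. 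To establish the inclusion I would take a value $\tup{y}\in\Out_{\tup{x},m_i}$, witnessed by a standalone possible world $S$ of $m_i$ (a relation over $I_i\cup O_i$ satisfying $I_i\to O_i$, agreeing with $\proj{I_i\cup O_i}{R}$ on the visible attributes, and containing a tuple mapping $\tup{x}$ to $\tup{y}$), and build from it a workflow possible world $R'\in\Worlds(R,V)$ in which $m_i$ maps $\tup{x}$ to $\tup{y}$.

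The construction keeps the visible projection $\proj{V}{R}$ fixed as a set, overwrites the hidden output attributes of $m_i$ (those in $O_i\setminus V$) so that $m_i$ realizes the mapping of $S$ --- in particular $\tup{x}\mapsto\tup{y}$ --- and then repairs the remaining functional dependencies. The all-private hypothesis is what gives the needed freedom: every other module is private, so inside a possible world we may redefine its behaviour on its hidden attributes arbitrarily, subject only to its own functional dependency. Moreover, any attribute we are forced to modify is hidden for whichever module reads it: if $a\in\widebar{V}$ and $a\in I_\ell$ then $a\notin V$, hence $a\notin V_\ell$ and $a\in\widebar{V_\ell}$.

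The main obstacle --- and the step I would spend the most care on --- is verifying that this overwriting can be carried out while simultaneously (i) restoring every dependency $I_\ell\to O_\ell$ and (ii) preserving $\proj{V}{R'}=\proj{V}{R}$. The difficulty is a downstream module $m_\ell$ that reads a modified hidden attribute of $m_i$ but whose output is visible: collapsing two previously distinct $m_i$-outputs could create two tuples with equal $I_\ell$ but unequal visible $O_\ell$, violating the dependency with no freedom to change the visible value. I would resolve this by exploiting that the attribute domains are arbitrarily large: the modified shared hidden attributes are relabelled with fresh, pairwise-distinct values (consistent with $m_i$'s own dependency), pinning down only the value forced on input $\tup{x}$ by the target $\tup{y}$. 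Since distinct executions of the workflow already differ on the source inputs, this relabelling keeps every downstream input combination as separated as it was in $R$, so no new dependency violation is introduced and the hidden repairs never touch a visible attribute. Carrying out this relabelling argument cleanly, and checking it against the set-equality (rather than tuple-by-tuple) form of $\proj{V}{R'}=\proj{V}{R}$, is where the real work lies.
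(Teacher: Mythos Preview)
Your high-level plan matches the paper's: reduce workflow privacy of each $m_i$ to the inclusion $\Out_{\tup{x},m_i}\subseteq\Out_{\tup{x},W}$, and establish that inclusion by exhibiting a workflow possible world in which $m_i$ maps $\tup{x}$ to the target $\tup{y}$. The monotonicity step and the identification of the main obstacle (collapsing previously distinct inputs to a downstream module whose output is visible) are both right.

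The gap is in the resolution of that obstacle. Your construction relies on relabelling the modified hidden attributes with \emph{fresh, pairwise-distinct} values, justified by the phrase ``attribute domains are arbitrarily large.'' That is a misreading: the paper's ``finite but arbitrarily large'' means domains may have any finite size, not that you may assume they are large in a given instance --- indeed every running example uses Boolean attributes. With a Boolean hidden output of $m_i$ and more than two distinct $I_i$-values in $R$, there are no fresh pairwise-distinct values, and the pigeonhole-forced collisions are precisely the downstream FD violations you were trying to avoid. Concretely, take the two-module chain $m_1\!:p\mapsto a$, $m_2\!:a\mapsto b$, both identity on Booleans, with $a$ hidden and $p,b$ visible; to realize $p=0\mapsto a=1$ you cannot give $p=1$ a fresh $a$-value, and the wrong choice ($a=1$) immediately breaks $m_2$'s dependency against the visible $b$.

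The paper's fix is not fresh values but a \emph{swap}. From $\tup{y}\in\Out_{\tup{x},m_i}$ one extracts a witness tuple $(\tup{x'},\tup{y'})$ actually in $R_i$ with the same visible parts as $(\tup{x},\tup{y})$ (their Lemma~\ref{lem:out-x}), sets $\tup{p}=(\tup{x},\tup{y})$, $\tup{q}=(\tup{x'},\tup{y'})$ on $I_i\cup O_i$, and redefines \emph{every} module as $g_j(\tup{u})=\Flip_{\tup{p},\tup{q}}\!\bigl(m_j(\Flip_{\tup{p},\tup{q}}(\tup{u}))\bigr)$, where $\Flip_{\tup{p},\tup{q}}$ swaps $p$- and $q$-values coordinatewise. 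Since $\tup{p}$ and $\tup{q}$ agree on all visible coordinates, the flip is the identity on visible attributes, giving $\proj{V}{R'}=\proj{V}{R}$; since flip is an involution, processing the modules in topological order keeps the computation coherent (each $g_j$ is a genuine function, so all FDs hold by construction); and since flip is a permutation of existing domain values, no enlargement of any domain is needed. What your proposal is missing is exactly this permutation idea.
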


Before we prove the theorem, recall that $\Gamma$-standalone-privacy
of a module $m_i$ requires that for every input $\tup{x}$ to the
module, there are at least $\Gamma$ \emph{potential outputs} of
$\tup{x}$ in the possible worlds $\Worlds(R_i, {V_i})$ of the
standalone module relation $R_i$ w.r.t. $V_i$; similarly,
$\Gamma$-workflow-privacy of $m_i$ requires at least $\Gamma$
potential outputs of $\tup{x}$ in the possible worlds $\Worlds(R,
V)$ of the workflow relation $R$ w.r.t. $V$. Since $R = R_1 \Join
\cdots \Join R_n$, a possible approach to prove
Theorem~\ref{thm:privacy-private} may be to show that, whenever the
hidden attributes for $m_i$ are also hidden in the workflow $W$, any
relation $R'_i \in \Worlds(R_i, {V_i})$ has a corresponding relation
$R' \in \Worlds(R, V)$ s.t. $R'_i = \proj{I_i \cup O_i}{R'}$. If
this would hold, then for $\overline{V} = \bigcup_{i = 1}^n
\widebar{V_i}$, the set of possible outputs, for any input tuple
$\tup{x}$ to a module $m_i$, will remain unchanged.

Unfortunately, Proposition~\ref{prop:reduced_worlds} below shows
that the above approach fails. Indeed, $|\Worlds(R, V)|$ can be
significantly smaller than $|\Worlds(R_i, {V_i})|$
even for very simple workflows.
\begin{proposition}\label{prop:reduced_worlds}
There exist a workflow $W$ with relation $R$, a module $m_1$ in $W$
with (standalone) relation $R_1$, and a set of visible attributes
$V_1$ that guarantees both $\Gamma$-standalone-privacy and
$\Gamma$-workflow-privacy of $m_1$, such that the ratio of
$|\Worlds(R_1, V_1)|$ and $|\Worlds(R, V_1)|$ is {\em doubly
exponential} in the number of attributes 
of $W$.
\end{proposition}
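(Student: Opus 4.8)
The plan is to exhibit an explicit witnessing workflow rather than argue abstractly, exploiting the fact noted earlier that $\proj{I_1 \cup O_1}{R}$ can be a strict subset of the standalone relation $R_1$. I would take $m_1$ to have $\ell$ boolean input attributes $a_1,\dots,a_\ell$ and a single boolean output attribute $b$, with the standalone relation $R_1$ equal to the full function table, so $R_1$ has $2^\ell$ rows. Choosing the visible set $V_1 = \{a_1,\dots,a_\ell\} = I_1$ (i.e.\ hiding only the output $b$), the dependency $I_1 \rightarrow O_1$ is automatically satisfied because every input occurs in exactly one row, so each of the $2^\ell$ inputs may be assigned either value of $b$ independently. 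Hence $|\Worlds(R_1, V_1)| = 2^{2^\ell}$, which is doubly exponential in $\ell$, and since both outputs remain available for every input, $m_1$ is $\Gamma$-standalone-private for $\Gamma = 2$.

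Next I would embed $m_1$ in a two-module workflow $W$ by prepending a source module $m_0$ with a single boolean input $d$ (domain size $2$) and output attributes $a_1,\dots,a_\ell$, where $m_0$ realizes only two of the $2^\ell$ possible $a$-combinations, say $\alpha$ for $d=0$ and $\beta$ for $d=1$. Since $R = R_0 \Join R_1$ joins on the shared attributes $a_1,\dots,a_\ell$, the workflow relation $R$ has exactly two rows, and $\proj{I_1}{R}$ consists of precisely these two realized inputs. Keeping the same visible set $V_1 = \{a_1,\dots,a_\ell\}$ (so $\widebar{V_1} = \{d, b\}$), a possible world $R' \in \Worlds(R, V_1)$ is a relation over $\{d, a_1,\dots,a_\ell, b\}$ that agrees with $R$ on the projection to $V_1$ (a fixed two-element set) and satisfies both dependencies $d \rightarrow a_1\cdots a_\ell$ and $a_1 \cdots a_\ell \rightarrow b$. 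The dependency $a_1\cdots a_\ell \rightarrow b$ forces $b$ to be a function of $\{\alpha,\beta\}$ ($4$ choices), while $d \rightarrow a_1\cdots a_\ell$ together with the requirement that both $\alpha$ and $\beta$ appear forces $d \mapsto \vec{a}$ to be a bijection ($2$ choices); I would conclude $|\Worlds(R, V_1)| = 8$, a constant independent of $\ell$. The two-sided freedom in $b$ again gives $|\Out_{\tup{x}, W}| \geq 2$ for each realized input, so $m_1$ remains $\Gamma$-workflow-private for $\Gamma = 2$, and $V_1$ is simultaneously a safe subset for standalone and workflow privacy, as the proposition demands.

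Combining the two counts yields $|\Worlds(R_1, V_1)| / |\Worlds(R, V_1)| = 2^{2^\ell}/8 = 2^{\Omega(2^\ell)}$, and since the total number of attributes of $W$ is $\ell + 2$, this ratio is doubly exponential in the number of attributes of the workflow. The step I expect to require the most care is the exact accounting of $\Worlds(R, V_1)$: because only the \emph{set}-projection onto $V_1$ is fixed, I must rule out possible worlds $R'$ that contain extra rows with duplicated $a$-combinations, which follows from bounding the domain of $d$ to size $2$ so that the entire relation $R'$ ranges over a fixed, $\ell$-independent pool of candidate tuples. The remaining verifications, that $V_1$ meets both privacy guarantees, reduce to observing that hiding $b$ always leaves both of its values consistent with the view.
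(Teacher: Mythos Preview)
Your construction is correct and proves the proposition, but it is genuinely different from the paper's argument. The paper uses a chain of two \emph{bijective} modules $m_1, m_2$, each with $k$ boolean inputs and $k$ boolean outputs, and hides $\log\Gamma$ of the intermediate attributes $O_1$. There the workflow relation $R$ and the standalone relation $R_1$ have the \emph{same} $2^k$ rows; the blow-down in $|\Worlds(R,V_1)|$ comes entirely from the structural constraint that any possible world of $R$ must itself compute a bijection, which forces the hidden coordinates to be a permutation within each equivalence class rather than an arbitrary assignment. A Stirling-type count then gives the doubly-exponential gap.

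Your route instead exploits the row-count discrepancy that the paper flags just before the proposition: by prepending a source with a two-element input domain, you force $\proj{I_1 \cup O_1}{R}$ to have only two rows while $R_1$ has $2^\ell$. This is arguably the more elementary mechanism and yields a cleaner count ($2^{2^\ell}$ versus a fixed constant). What it gives up is generality in $\Gamma$ (you work only at $\Gamma=2$, whereas the paper handles any power of two) and the conceptual point that the gap can arise even when the two relations have identical row sets. Neither limitation matters for the proposition as stated. Your care about bounding $|\Worlds(R,V_1)|$ via the size-2 domain of $d$ is exactly the right place to be careful, and your argument there is sound.
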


\begin{proof}[Proof sketch]
To prove the proposition, we construct a simple workflow with two
modules $m_1, m_2$ connected as a chain. Both $m_1, m_2$ are one-one
functions with $k$ boolean inputs and $k$ boolean outputs (for
example, assume that $m_1$ is an identity function, whereas $m_2$
reverses the values of its $k$ inputs). The module $m_1$ gets
initial input attribute set $I_1$, produces $O_1 = I_2$ which is fed
to the module $m_2$ as input, and $m_2$ produces final attribute set
$O_2$. Let $V_1$ be an arbitrary subset of $O_1$ such that
$|\widebar{V_1}| = \log \Gamma$ 
(we assume that
$\Gamma$ is a power of 2). It can be verified that, $m_1$ as
a standalone module is $\Gamma$-standalone-private w.r.t. visible
attributes $V_1$ and both $m_1, m_2$, being one-one modules, are $\Gamma$-workflow-private
w.r.t. $V_1$.
\par
We show that the one-one nature of $m_1$ and $m_2$
restricts the size of $\Worlds(R, V_1)$ compared to that of
$\Worlds(R_1, V_1)$. Since both $m_1$ and $m_2$ are one-one
functions, the workflow $W$ also computes a one-one function. Hence
any relation $S$ in $\Worlds(R, V_1)$ has to compute a one-one
function as well. But when $m_1$ was standalone, any function
consistent with $V_1$ could be a member of $\Worlds(R_1, V_1)$. By a
careful computation, the ratio can be shown to be doubly exponential
in $k$ (details in 
Appendix~\ref{sec:app_proofs_private-module}).
\end{proof}

\eat{
}
\par

Nevertheless, we  show below that 
for every input $\tup{x}$ of the module, the set of
its possible outputs, in these worlds, is exactly the same as that
in the original (much larger number of) module worlds. Hence privacy
is indeed preserved.

In proving
Theorem~\ref{thm:privacy-private}, our main technical tool is
Lemma~\ref{lem:main-private}, which states that given a set of
visible attributes $V_i$ of a standalone module $m_i$, the set of
possible outputs for \emph{every} input $\tup{x}$ to $m_i$ remains
unchanged when $m_i$ is placed in an all-private workflow, provided
the corresponding hidden attributes $\widebar{V_i}$ remains hidden
in the workflow.

Recall that $\Out_{x, m_i}$ and $\Out_{x, W}$ denote the possible
output for an input $\tup{x}$ to module $m_i$ w.r.t. a set of
visible attributes when $m_i$ is standalone and in a workflow $W$
respectively (see Definition~\ref{def:standalone-privacy} and
Definition~\ref{def:workflow-privacy}).

\begin{lemma}\label{lem:main-private}
Consider any module $m_i$ and any input $\tup{x} \in \proj{I_i}{R}$. If $\tup{y} \in \Out_{x, m_i}$
w.r.t. a set of visible attributes $V_i \subseteq (I_i \cup O_i)$,
then $\tup{y} \in \Out_{x, W}$ w.r.t.  $V_i \cup (A \setminus (I_i \cup O_i))$. 
\end{lemma}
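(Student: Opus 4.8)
The plan is to prove the lemma by taking a witness possible world for the standalone claim and surgically transplanting it into the workflow. Concretely, since $\tup{y} \in \Out_{x, m_i}$ with respect to $V_i$, Definition~\ref{def:standalone-privacy} hands us a relation $R'_i \in \Worlds(R_i, V_i)$ containing a tuple $\tup{t'_i}$ with $\proj{I_i}{\tup{t'_i}} = \tup{x}$ and $\proj{O_i}{\tup{t'_i}} = \tup{y}$. In particular $R'_i$ satisfies $I_i \rightarrow O_i$ and agrees with $R_i$ on every attribute in $V_i$, so it differs from $R_i$ only on the hidden attributes $\widebar{V_i} = (I_i \cup O_i) \setminus V_i$. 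Since the visible set in the target claim is $V = V_i \cup (A \setminus (I_i \cup O_i))$, the only attributes we are permitted to alter while building a workflow world are exactly these $\widebar{V_i}$.

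First I would build a candidate workflow relation $R'$ that keeps every row of $R$ but overwrites its $\widebar{V_i}$-entries using $R'_i$. The natural rule is to make the new hidden values a function of the visible $m_i$-projection: fix a selector $\sigma$ mapping each $v \in \proj{V_i}{R_i}$ to a representative tuple $\sigma(v) \in R'_i$ with $\proj{V_i}{\sigma(v)} = v$, and for a row $t$ set its hidden entries to $\proj{\widebar{V_i}}{\sigma(\proj{V_i}{t})}$. Because $\proj{I_i \cup O_i}{R} \subseteq R_i$ and $\proj{V_i}{R_i} = \proj{V_i}{R'_i}$, such a representative always exists, and by construction the $m_i$-portion of every row of $R'$ is literally a tuple of $R'_i$. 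This immediately gives $\proj{V}{R'} = \proj{V}{R}$ (nothing outside $\widebar{V_i}$ was touched and $V_i$ was preserved), and, since the $m_i$-portions are drawn from $R'_i$, the dependency $I_i \rightarrow O_i$ is inherited. One then checks that the target tuple realizing $m_i(\tup{x}) = \tup{y}$ survives, so the $\forall\,\tup{t'}$ condition of Definition~\ref{def:workflow-privacy} holds.

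The hard part, and the step I expect to be the main obstacle, is verifying the remaining dependencies $I_j \rightarrow O_j$ for the other modules $j \neq i$. The difficulty is data sharing and chaining: an attribute of $\widebar{V_i}$ may simultaneously be an output of $m_i$ feeding a downstream $m_j$ (so $\widebar{V_i} \cap O_i \cap I_j \neq \emptyset$) or an output of an upstream $m_j$ feeding $m_i$ (so $\widebar{V_i} \cap I_i \cap O_j \neq \emptyset$), and rewriting it changes $m_j$'s input or output while $m_j$'s remaining, visible attributes stay frozen. Proposition~\ref{prop:reduced_worlds} confirms that this interaction genuinely shrinks the space of workflow worlds, so the verification cannot be purely local: the naive ``function of $\proj{V_i}{t}$'' rewriting can force two rows to coincide on some $I_j$ while their (visible, fixed) $O_j$ values disagree. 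I would handle this by refining the reassignment to be globally consistent along the DAG, processing modules in topological order and choosing the hidden values so that two rows collide on some $I_j$ only when they already agree on the visible part of $O_j$ (and hence can be made to agree on all of $O_j$). The acyclicity of the DAG guarantees these dependencies can be resolved without circularity, and the freedom afforded by the large attribute domains, together with the fact that $R$ itself already satisfies every $I_j \rightarrow O_j$, should let such an inductive argument go through.

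Finally, with $R' \in \Worlds(R, V)$ exhibited and $m_i(\tup{x}) = \tup{y}$ holding in $R'$, the definition of $\Out_{x, W}$ yields $\tup{y} \in \Out_{x, W}$ with respect to $V_i \cup (A \setminus (I_i \cup O_i))$, completing the proof.
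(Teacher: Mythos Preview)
Your overall plan---build $R' \in \Worlds(R,V)$ by altering only the hidden attributes $\widebar{V_i}$ of each row of $R$---is the right framing, and your verification that $\proj{V}{R'}=\proj{V}{R}$ and that $I_i\rightarrow O_i$ is inherited from $R'_i$ is fine. The genuine gap is exactly where you flag it: the dependencies $I_j\rightarrow O_j$ for $j\neq i$. Your proposed fix does not work as stated. First, the appeal to ``the freedom afforded by the large attribute domains'' is unavailable: the model allows boolean domains, and the paper's running example is boolean. Second, ``processing modules in topological order'' does not untangle the constraints, because a single hidden attribute $a\in\widebar{V_i}$ can simultaneously sit in some $O_j$ (upstream of $m_i$) and in several $I_k$ (downstream of $m_i$), so the value you write into $a$ must respect \emph{all} of these FDs at once; there is no sequential pass that decouples them. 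Third, your selector $\sigma$ is a function only of $\proj{V_i}{t}$, so two rows with the same $V_i$-projection are forced to receive identical hidden values---this can manufacture collisions on some $I_j$ between rows whose (frozen, visible) $O_j$ values differ, and you give no mechanism to avoid it.

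The paper's proof supplies precisely the missing mechanism, and it is not a refinement of your row-by-row rewrite but a change of viewpoint: instead of editing tuples, it redefines every module. From Lemma~\ref{lem:out-x} one extracts a single pair $(\tup{x'},\tup{y'})$ with $\tup{y'}=m_i(\tup{x'})$ agreeing with $(\tup{x},\tup{y})$ on $V_i$. Writing $\tup{p}=(\tup{x},\tup{y})$ and $\tup{q}=(\tup{x'},\tup{y'})$ on $I_i\cup O_i$, the paper defines an involution $\Flip_{\tup{p},\tup{q}}$ that swaps $p[a]\leftrightarrow q[a]$ coordinatewise; since $\tup{p},\tup{q}$ agree on $V_i$, this flip touches only hidden attributes. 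Every module is then replaced by $g_j(u)=\Flip_{\tup{p},\tup{q}}\bigl(m_j(\Flip_{\tup{p},\tup{q}}(u))\bigr)$. Each $g_j$ is automatically a function, so \emph{all} FDs hold for free; $g_i(\tup{x})=\tup{y}$ by a two-line computation; and an induction along the topological order shows that the output of each $g_j$ on the new inputs is the flip of the original output, hence agrees on visible attributes. The point is that the ``swap'' is global and coordinated across all modules, which is what your proposal was missing.
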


The above lemma directly implies
Theorem~\ref{thm:privacy-private}:

\begin{proof}[of Theorem~\ref{thm:privacy-private}]
We are given that each module $m_i$ is $\Gamma$-standalone-private
w.r.t. $V_i$, i.e., $|\Out_{x, m_i}| \geq \Gamma$ for all input
$\tup{x}$ to $m_i$, for all modules $m_i$, $i \in [1, n]$ (see
Definition~\ref{def:standalone-privacy}). From
Lemma~\ref{lem:main-private}, this implies that for all input
$\tup{x}$ to all modules $m_i$, $|\Out_{x, W}| \geq \Gamma$ w.r.t
$V' = V_i \cup (A  \setminus (I_i \cup O_i))$. For this choice of
$V'$, $\widebar{V'} = A \setminus V' = (I_i \cup O_i) \setminus V_i$
$= \widebar{V_i}$ (because, $V_i \subseteq I_i \cup O_i \subseteq
A$). Now, using Proposition~\ref{prop:superset}, when the visible
attributes set $V$ is such that $\widebar{V} = \bigcup_{i = 1}^n
\widebar{V_i}$ $\supseteq \widebar{V_i} = \widebar{V'}$,
every module $m_i$ is $\Gamma$-workflow-private.
\end{proof}

To conclude the proof of Theorem~\ref{thm:privacy-private} we thus
only need to prove Lemma~\ref{lem:main-private}. For this, we use
the following auxiliary lemma.

\begin{lemma}\label{lem:out-x}
Let $m_i$ be a standalone module with relation $R_i$, let $\tup{x}$
be an input to $m_i$, and let $V_i \subseteq (I_i \cup O_i)$ be a
subset of visible attributes. If $\tup{y} \in \Out_{x, m_i}$ then
there exists an input
$\tup{x'} \in \proj{I_i}{R_i}$ to $m_i$
with output
$\tup{y'} = m_i(\tup{x'})$ such that $\proj{V_i \cap I_i}{\tup{x}} =
\proj{V_i \cap I_i}{\tup{x'}}$ and $\proj{V_i \cap O_i}{\tup{y}} =
\proj{V_i \cap O_i}{\tup{y'}}$.
\end{lemma}

The statement of the lemma can be illustrated with the module $m_1$
whose relation $R_1$ appears Figure~\ref{fig:m1}. Its visible
portion (for visible attributes $a_1,a_3,a_5$) is given in
Figure~\ref{fig:m1'}. Consider the input $\tup{x} = (0, 0)$ to $m_1$
and the output $\tup{y} = (1, 0, 0)$. For $V=\{a_1,a_3,a_5\}$,
$\tup{y} \in \Out_{x, m_1}$ (see Figure~\ref{tab:w3}).  This is
because there exists $\tup{x'} = (0, 1)$, s.t. $\tup{y'} = m_1(\tup{x'}) =
(1, 1, 0)$, and, $\tup{x, x'}$ and $\tup{y, y'}$ have the same
values of the visible attributes ($a_1$ and $\{a_3, a_5\}$
respectively). Note that $\tup{y}$ does not need to be the
actual output $m_1(\tup{x})$ on $\tup{x}$
 or even share the same values
of the visible attributes (indeed, $m_1(\tup{x}) = (0, 1, 1)$).
We defer the proof of Lemma~\ref{lem:out-x} to
Appendix~\ref{sec:app_proofs_private-module} and instead briefly
explain how it is used to prove Lemma~\ref{lem:main-private}.

\begin{proof}[Proof sketch of Lemma~\ref{lem:main-private}]
Let us fix a module $m_i$, an input $\tup{x}$ to $m_i$ and a
candidate output $\tup{y} \in \Out_{x, m_i}$ for $\tup{x}$ w.r.t.
visible attributes $V_i$. We already argued that, for $V = V_i \cup
(A \setminus (I_i \cup O_i))$, $\widebar{V} = A \setminus V = (I_i
\cup O_i) \setminus V_i = \widebar{V_i}$. We will show that $\tup{y}
\in \Out_{x, W}$ w.r.t. visible attributes $V$ by showing the
existence of a possible world $R' \in \Worlds(R, V)$, for
$\widebar{V} = \widebar{V_i}$,
s.t. 
$\proj{I_i}{\tup{t}} = \tup{x}$
and $\proj{O_i}{\tup{t}} = \tup{y}$
for some $\tup{t} \in R'$.
\par
 We start by replacing
module $m_i$ by a new module $g_i$ such that $g_i(\tup{x}) =
\tup{y}$ as required. But due to data sharing, other modules in the
workflow can have input and output attributes from $I_i$ and $O_i$.
Hence if we leave the modules $m_j, j \neq i$, unchanged, there may
be inconsistency in the values of the visible attributes, and the
relation produced by the join of the standalone relations of the
module sequence $\angb{m_1, \cdots, m_{i-1}, g_i, m_{i+1}, \cdots,
m_n}$ may not be a member of $\Worlds(R, V)$. To resolve this, we
consider the modules in a topological order, and change the
definition of all modules $m_1, \cdots, m_n$ to $g_1, \cdots, g_n$
(some modules may remain unchanged). In proving the above, the main
idea is to use \emph{tuple and function flipping} (formal definition
in the appendix). If a module $m_j$ shares attributes from $I_i$ or
$O_i$, the new definition $g_j$ of $m_j$ involves flipping the input
to $m_j$, apply $m_j$ on this flipped input, and then flipping the
output again to the output value. The proof shows that by
consistently flipping all modules, the visible attribute values
remain consistent with the original workflow relation and we get a
member of the possible worlds. (Details appear in Appendix
\ref{sec:app_proofs_private-module}).
\end{proof}

It is important to note that the assumption of all-private workflow is
crucial in proving Lemma~\ref{lem:main-private} -- if some of the
modules $m_j$ are public, we can not redefine them to $g_j$ (the
projection to the public modules should be unchanged - see
Definition \ref{def:workflow-privacy}) and we may not get a member
of $\Worlds(R, V)$. We will return to this point in Section~\ref{sec:public-module}
 when we consider workflows with a mixture of private and
public modules.

\subsection{The \secureview Problem} 
\label{sec:secureview-private}

We have seen above that one can assemble workflow privacy guarantees
out of the standalone module guarantees. Recall however that each
individual module may have several possible safe attributes sets
(see, e.g., Example~\ref{eg:sa-privacy}).  Assembling different sets
naturally lead to solutions with different cost.
\eat{
One may be tempted to believe that assembling optimal (cheapest)
safe attributes for the individual modules would lead to an optimal
safe attribute set for of the full workflow. However, the following
example should that this is not the case. } The following example
shows that assembling optimal (cheapest) safe attributes of the
individual modules may not lead to an optimal safe attributes set
for the full workflow. The key observation is that, due to data
sharing, it may be more cost effective to hide expensive shared
attributes rather than cheap non-shared ones
(though later we show that the problem remains NP-hard
even without data sharing).

\eat{ \scream{TOVA: I
CHANGED A BIT THE EXAMPLE SO THAT IT WILL CLEARER THAT OPTIMAL
SOLUTION CANNON BE SIMPLY OBTAINED BY ELIMINATING REDUNDANCY OF
HIDDEN ATTRIBUTES} }

\begin{example}\label{eg:1}
 Consider a workflow with $n+2$ modules, $m, m_1, \cdots, m_n, m'$.
 The module $m$ gets an input data item $a_1$, with cost $1$, and sends
 as output the same data item, $a_2$, with cost $1+\epsilon$, $\epsilon > 0$,
to all the $m_i$-s. Each $m_i$ then sends a data item $b_i$ to $m'$
with cost $1$. Assume that standalone privacy is preserved for
module $m$ if either its incoming or outgoing data is hidden and for
$m'$ if any of its incoming data is hidden. Also assume that
standalone privacy is preserved for each $m_i$ module if either its
incoming or its outgoing data is hidden. As standalone modules, $m$
will choose to hide $a_1$, each $m_i$ will choose to hide the outgoing
data item $b_i$, and $m'$ will choose to hide any of the $b_i$-s. 
The union of the optimal solutions for the standalone modules has
cost $n+1$. However, a lowest cost solution for
preserving workflow privacy is to hide $a_2$ and any one of
the $b_i$-s. This assembly of (non optimal) solutions for the individual
modules has cost $2+\epsilon$. In this case, the ratio of the costs of the union of 
standalone optimal solutions and the workflow optimal
solution is $\Omega(n)$.
\end{example}

This motivates us to define the combinatorial optimization problem
\secureview (for workflow secure view), which generalizes the
 \SAsecureview~ problem studied in Section~\ref{sec:standalone}.
 The goal of the \secureview problem is to choose, for each module, a safe set of attributes
 (among its possible sets of safe attributes) s.t. together the selected sets yield a
  minimal cost safe solution for the workflow.
We define this formally below. In particular, we consider the
following two variants of the problem, trading-off expressibility
and succinctness.

\eat{
\scream{TOVA: I CHANGED THE NOTATION TO MAKE THINGS A BIT MORE
READABLE. IF YOU LIKE IT, WE NEED TO CHANGE IT CONSISTENTLY IN THE
FOLLOWING /APPENDIX}
}

\medskip
\noindent \textbf{Set constraints.~~} The possible safe solutions
for a given module  can be given in the form of a list of 
hidden attribute sets. Specifically, we assume that we
are given, for each module $m_i$, $i \in [1, n]$, a list of pairs
$L_i= \langle (\widebar{I}^1_i,\widebar{O}^1_i),$
$(\widebar{I}^2_i,\widebar{O}^2_i)\ldots$
$(\widebar{I}^{l_i}_i,\widebar{O}^{l_i}_i) \rangle$. Each pair
$(\widebar{I}^j_i,\widebar{O}^j_i)$ in the list describes one
possible safe (hidden) solution for $m_i$:
\mbox{$\widebar{I}^j_i\subseteq I_i$} (resp. $\widebar{O}^j_i
\subseteq O_i$) is the set of input (output) attributes of $m_i$ to
be hidden in this solution. $l_i$ (the length of the list) is the
number of solutions for $m_i$ that are given in the list, and we use
below $\ell_{\max}$ to denote the length of the longest list, i.e.
$\ell_{\max} = \max_{i = 1}^n \ell_i$.

When the input to the \secureview problem is given in the above form
(with the candidate attribute sets listed explicitly) we call it
\emph{the \secureview problem \mbox{with set constraints}.}

\vspace{2mm}

 \noindent \textbf{Cardinality constraints.~~} Some
modules may have many possible candidate safe attribute sets.
Indeed, their number may be exponential in the number of attributes
of the module. This is illustrate by the following two simple
examples. \eat{ \scream{TOVA: I SLIGHTLY CHANGED THE FIRST EXAMPLE
PLEASE CHECK CORRECTNESS} }

\begin{example}
 First observe that in any one-one function with
$k$ boolean inputs and $k$ boolean outputs, hiding any $k$ incoming
or any $k$ outgoing attributes guarantees $2^{k}$-privacy. Thus
listing all such subsets requires a list of length $\Omega({2k
\choose k}) = \Omega(2^{k})$. Another example is majority function
which takes $2k$ boolean inputs and produces 1 if and only if the
number of one-s in the input tuple is $\geq k$. Hiding either $k+1$
input bits or the unique output bit guarantee $2$-privacy for
majority function, but explicitly listing all possible subsets again
leads to exponential length lists.
\end{example}

Note that, in both examples, the actual identity of the hidden input
(resp. output) attributes is not important, as long as sufficiently
many are hidden. Thus rather than explicitly listing all possible
safe sets we could simply say what combinations of numbers of hidden
input and output attributes are safe. This motivates the following
variant of the \secureview problem, called \emph{the \secureview
problem with cardinality constraints}: Here for every module $m_i$
we are given a list of pairs of numbers $L_i= \langle
(\alpha^1_i,\beta^1_i) \ldots (\alpha^{l_i}_i,\beta^{l_i}_i)
\rangle$, s.t. for each pair $(\alpha^j_i,\beta^j_i)$ in the list,
$\alpha^j_i \leq |I_i|$ and $\beta^j_i \leq |O_i|$. The
interpretation is that hiding any attribute set of $m_i$ that
consists of at least $\alpha^j_i$ input attributes and at least
$\beta^j_i$ output attributes, for some $j \in [1, \ell_i]$, makes
$m_i$ safe w.r.t the remaining visible attributes.

 To continue with
the above example, the list for the first module may consists of
$(k,0)$ and $(0,k)$, whereas the list for the second module consists
of $(k+1,0)$ and $(0,1)$.


It is easy to see that, for cardinality constraints, the lists are
of size at most quadratic in the number of attributes of the given
module (unlike the case of set constraints where the lists could be
of exponential length)\footnote{In fact, if one assumes that there
is no redundancy in the list, the lists become of at most of linear
size.}. In turn, cardinality constraints are less expressive than
set constraints that can specify arbitrary attribute sets. This will
affect  the complexity of the corresponding \secureview problems.\\

\noindent
\textbf{Problem Statement.~~}
Given an input in one of the two forms, a {\em feasible} safe subset
$V$ for the workflow, for the version with set constraints (resp.
cardinality constraints), is such that for each module $m_i$ $i \in
[1, n]$, $\widebar{V} \supseteq (\widebar{I}^j_i \cup
\widebar{O}^j_i)$ (resp. $|\widebar{V} \cap I_i| \geq \alpha^j_i$
and $|\widebar{V} \cap O_i| \geq \beta^j_i$) for some $j \in [1,
\ell_i]$. The goal of the \secureview problem is to find  a
safe set $V$ where $\cost(\widebar{V})$ is minimized.

\eat{
\vspace{2mm}

\scream{TOVA: I SLIGHTLY REPHRASED, PLEASE CHECK}
}
\subsection{Complexity results}\label{sec:complexity-private}
We present below theorems which give approximation algorithms and
matching hardness of approximation results of different versions of
the \secureview problem. The hardness results show that the problem
of testing whether the \secureview problem (in both variants) has a
solution with cost smaller than a given bound is  NP-hard even in
the most restricted case. But we show that certain approximations of
the optimal solution are possible.
Theorem~\ref{thm:cardinality-private} and \ref{thm:set-private}
summarize the results for the cardinality and set constraints
versions, respectively. For space constraints we only sketch the
proofs (full details appear in Appendix~\ref{sec:app_proofs_private-module}).

\begin{theorem}\label{thm:cardinality-private}
\textbf{(Cardinality Constraints)~~} There is an $O(\log
n)$-approximation of the \secureview problem with cardinality
constraints. Further,  this problem is $\Omega(\log n)$-hard
to approximate unless ${\rm NP}$ $\subseteq$ ${\rm
DTIME}$$(n^{O(\log \log n)})$, even if the maximum list size $\ell_{\max} = 1$, each
data has unit cost, and the values of $\alpha^j_{i}, \beta^j_{i}$-s
are 0 or 1.
\end{theorem}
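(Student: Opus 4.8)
The plan is to prove the two halves separately: an $\Omega(\log n)$ inapproximability via a reduction from \textsc{Set Cover}, and a matching $O(\log n)$ upper bound via randomized rounding of a natural LP relaxation. For the hardness, I would reduce from \textsc{Set Cover}, which cannot be approximated within $(1-o(1))\ln N$ unless ${\rm NP}\subseteq{\rm DTIME}(N^{O(\log\log N)})$. Given a universe $\{e_1,\dots,e_N\}$ and sets $S_1,\dots,S_M$, I introduce one shared (external) input attribute $s_k$ of unit cost per set $S_k$, and one module $M_j$ per element $e_j$ whose input attributes are exactly $\{s_k : e_j\in S_k\}$, together with a private dummy output attribute $o_j$ (output names are kept disjoint, as the model requires, and data sharing of the $s_k$ across modules is exactly the allowed input sharing). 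Each list $L_j$ has length one, namely the single pair $(\alpha_j,\beta_j)=(1,0)$, so $\ell_{\max}=1$ and all cardinalities lie in $\{0,1\}$. A hidden set $\widebar{V}$ is feasible iff every module has at least one hidden input, i.e.\ iff the chosen $s_k$'s cover every element; since $\beta_j=0$ for all $j$, hiding any $o_j$ never helps and only raises $\cost(\widebar{V})$, so an optimum hides only set-attributes and equals the optimal set cover. As the number of modules is $n=\Theta(N)$, any $\rho$-approximation for \secureview yields a $\rho$-approximation for \textsc{Set Cover}, giving the claimed $\Omega(\log n)$ hardness in the stated restricted regime.

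For the upper bound, I would write the relaxation with a variable $x_a\in[0,1]$ for each attribute $a$ (indicating $a$ is hidden) and, for each module $m_i$ and option $j\in[1,\ell_i]$, a selector $y_i^j\in[0,1]$, minimizing $\sum_a \cost(a)\,x_a$ subject to $\sum_j y_i^j\ge 1$, $\sum_{a\in I_i} x_a\ge \sum_j \alpha_i^j y_i^j$, and $\sum_{a\in O_i} x_a\ge \sum_j \beta_i^j y_i^j$ for every $i$. This is a valid relaxation (an integral solution sets $y_i^{j^\ast}=1$ for whichever option it satisfies) and has polynomial size, since the cardinality lists are at most quadratic per module. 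Let $(x^\ast,y^\ast)$ be an optimal fractional solution, and set $A_i=\sum_j \alpha_i^j y_i^{\ast j}$ and $B_i=\sum_j \beta_i^j y_i^{\ast j}$.

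The key structural step is a domination lemma: since $(A_i,B_i)$ is at least a convex combination of the option points, there is an option $j^\ast$ with $\alpha_i^{j^\ast}\le 2A_i$ and $\beta_i^{j^\ast}\le 2B_i$. I would prove this by contradiction, partitioning the options of $m_i$ into those with $\alpha_i^j>2A_i$ and the rest (which must then satisfy $\beta_i^j>2B_i$); a Markov-type estimate bounds the $y^\ast$-weight of each part strictly below $\tfrac12$, contradicting $\sum_j y_i^{\ast j}\ge 1$. Given the lemma it suffices to round $x^\ast$ so that each module hides at least $2A_i$ inputs and $2B_i$ outputs, since this dominates option $j^\ast$ and makes the module safe. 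I hide each attribute $a$ independently with probability $\min(1,\lambda x_a^\ast)$ for $\lambda=\Theta(\log n)$. A Chernoff bound (treating the capped, deterministically hidden coordinates separately) shows each of the $2n$ cardinality targets is met with probability $1-1/\mathrm{poly}(n)$ -- here the lemma ensures that a nonzero input target is at least $1$, so $A_i\ge\tfrac12$ and the expected count $\lambda A_i=\Omega(\log n)$ concentrates -- and a union bound makes all modules simultaneously safe with high probability. The expected cost is $\lambda\cdot\text{LP-OPT}=O(\log n)\cdot\text{OPT}$, and repeating or derandomizing yields a deterministic $O(\log n)$-approximation.

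I expect the main obstacle to be the upper bound's treatment of the disjunction over options: the natural LP can average across options and thereby undershoot every single integral option, so the crux is the factor-$2$ domination lemma, which certifies that scaling the fractional solution by a constant always re-dominates some genuine option, after which the covering-style rounding behaves exactly as for \textsc{Set Cover}. The hardness direction is comparatively routine once the data-sharing gadget (shared input attributes playing the role of ``sets'') is in place.
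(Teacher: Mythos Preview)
Your hardness reduction is correct and essentially the same as the paper's: both reduce from \textsc{Set Cover} by creating one unit-cost attribute per set and one module per element with the single requirement $(1,0)$, so that feasible hidden-attribute sets correspond exactly to covers. The paper routes the set-attributes through an auxiliary module $z$ (with requirement $(0,1)$) so that they appear as intermediate data rather than initial inputs, but this is cosmetic.

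The upper bound, however, has a real gap: your LP is too weak and has unbounded integrality gap, so no rounding of it can yield an $O(\log n)$ guarantee. Take a module $m$ with two options $(2k,0)$ and $(0,2k)$, and arrange via two auxiliary modules that $k$ specific inputs and $k$ specific outputs of $m$ are forced to $x_a=1$ at cost $\epsilon$ each, while its remaining $k$ inputs and $k$ outputs have cost $1$. Setting $y^1=y^2=\tfrac12$ gives $A=B=k$, and the constraints $\sum_{I}x_a\ge k$, $\sum_{O}x_a\ge k$ are already met by the forced attributes; the LP pays $2k\epsilon$. Integrally, $m$ still needs all $2k$ inputs or all $2k$ outputs hidden, so $\text{OPT}\ge k$ and the gap is $\Theta(1/\epsilon)$. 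Your domination lemma does produce a $j^\ast$ with $\alpha^{j^\ast}\le 2A$ and $\beta^{j^\ast}\le 2B$, but you then need $2A=2k$ hidden inputs (or $2k$ outputs); since the LP put \emph{all} input mass on $k$ attributes already capped at $1$, scaling by any $\lambda$ still hides exactly those $k$, and Chernoff cannot push a count above its mean. The ``capped coordinates separately'' clause does not rescue this: once $|D_i|\ge A_i$, the residual $\sum_{a\in P_i}x_a^\ast$ can be zero while the target $2A_i-|D_i|$ is still positive.

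The paper's LP avoids precisely this failure by introducing per-attribute, per-option variables $y_{bij},z_{bij}$ together with the constraints $y_{bij}\le r_{ij}$ and $\sum_j y_{bij}\le x_b$. These force the fractional mass to spread over at least $\alpha_i^j$ distinct inputs whenever $r_{ij}$ is large, and the analysis then rounds against a single ``median'' option $p$ (the first index with $\sum_{j\le p}r_{ij}\ge\tfrac12$) rather than using a domination lemma. The paper in fact remarks explicitly that dropping these spreading constraints yields an unbounded integrality gap; your LP is equivalent in value to that weakened relaxation.
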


\eat{
\scream{CHANGE NOTATION THROUGHOUT THE PROOFS}
}
\begin{proof}[Proof sketch]
The proof of the hardness result in the above theorem is by a
reduction from the set cover problem. The approximation is obtained
by randomized rounding a carefully written linear program (LP)
relaxation of this problem. A sketch is given below.

Our algorithm is based on rounding the fractional relaxation (called
the LP relaxation) of the integer linear program (IP) for this
problem presented in Figure~\ref{fig:ip}.

\begin{figure}[ht]
\centering \small Minimize $\sum_{b \in A} c_b x_b\quad$ subject to
\begin{eqnarray}
\sum_{j = 1}^{\ell_i} r_{ij} & \geq & 1\quad\forall i \in [1, n]\label{equn:IP1-1}\\
\sum_{b \in I_i} y_{bij} & \geq & r_{ij} \alpha^j_{i}\quad\forall i \in [1, n],~\forall j \in [1, \ell_i] \label{equn:IP1-6}\\%
\sum_{b \in O_i}z_{bij} & \geq & r_{ij} \beta^j_{i}\quad\forall i \in
[1, n],~\forall j \in [1, \ell_i]\label{equn:IP1-7}\\
\sum_{j = 1}^{\ell_i} y_{bij} & \leq & x_b,\quad\forall i \in [1, n], \forall b \in I_i\label{equn:IP1-2}\\
\sum_{j = 1}^{\ell_i} z_{bij} & \leq & x_b,\quad\forall i \in [1, n], \forall b \in O_i\label{equn:IP1-3}\\
y_{bij} & \leq & r_{ij},\quad\forall i \in [1, n],~\forall j \in [1, \ell_i],~\forall b \in I_i\nonumber\\
&&\label{equn:IP1-4}\\
z_{bij} & \leq & r_{ij},\quad\forall i \in [1, n],~\forall j \in [1, \ell_i],~\forall b \in O_i\nonumber\\
&&\label{equn:IP1-5}\\
x_b, r_{ij}, y_{bij}, z_{bij} & \in & \{0, 1\} \label{equn:IP1-8}
\end{eqnarray}
\vspace{-6mm} \caption{\mbox{IP for \secureview with cardinality
constraints} \label{fig:ip}} \vspace{-2mm}
\end{figure}

Recall that each module $m_i$ has a list 
$L_i = \{(\alpha^j_{i}, \beta^j_{i}): j\in [1, \ell_i]\}$, 
a feasible solution must ensure that for each $i\in [1, n]$, there
exists a $j \in [1, \ell_i]$ such that at least $\alpha^j_{i}$ input
data and $\beta^j_{i}$ output data of $m_i$ are hidden.

In this IP, $x_b = 1$ if data $b$ is hidden, and $r_{ij}=1$ if at
least $\alpha^j_{i}$ input data and $\beta^j_{i}$ output data of
module $m_i$ are hidden. Then, $y_{bij} = 1$ (resp., $z_{bij} = 1$)
if both $r_{ij} = 1$ and $x_b = 1$, i.e. if data $b$ contributes to
satisfying the input requirement $\alpha^j_{i}$ (resp., output
requirement $\beta^j_{i}$) of module $m_i$. Let us first verify that
the IP indeed solves the \secureview problem with cardinality
constraints. For each module $m_i$, constraint~(\ref{equn:IP1-1})
ensures that for some $j\in [1, \ell_i]$, $r_{ij} = 1$. In
conjunction with constraints~(\ref{equn:IP1-6}) and
(\ref{equn:IP1-7}), this ensures that for some $j\in [1, \ell_i]$,
(i) at least $\alpha^j_{i}$ input data of $m_i$ have $y_{bij} = 1$
and (ii) at least $\beta^j_{i}$ output data of $m_i$ have $z_{bij} =
1$. But, constraint~(\ref{equn:IP1-2}) (resp.,
constraint~(\ref{equn:IP1-3})) requires that whenever $y_{bij} = 1$
(resp., $z_{bij} = 1$), data $b$ be hidden, i.e. $x_b = 1$, and a
cost of $c_b$ be added to the objective. Thus the set of hidden data
satisfy the privacy requirement of each module $m_i$ and the value
of the objective is the cost of the hidden data. Note that
constraints~(\ref{equn:IP1-4}) and (\ref{equn:IP1-5}) are also
satisfied since $y_{bij}$ and $z_{bij}$ are 0 whenever $r_{ij} = 0$.
Thus, the IP represents the \secureview problem with cardinality
constraints. In Appendix~\ref{sec:cardinality-private} we show
that simpler LP relaxations of this problem without some of the
above constraints lead to unbounded and $\Omega(n)$ integrality gaps
showing that an $O(\log n)$-approximation cannot be obtained from
those simpler LP relaxations.

We round the fractional solution to the LP relaxation using
Algorithm~\ref{algo:lp1_round}. For each $j \in [1, \ell_i]$, let
$I^{min}_{ij}$ and $O^{min}_{ij}$ be the $\alpha^j_{i}$ input and
$\beta^j_{i}$ output data of $m_i$ with minimum cost. Then,
$B^{min}_i$ represents $I^{min}_{ij}\cup O^{min}_{ij}$ of minimum
cost.

\begin{algorithm}[h!t]
\caption{Rounding algorithm of LP relaxation of the IP given in
Figure~\ref{fig:ip},\\
\textbf{Input}: An optimal fractional
solution $\{x_b | b \in A\}$,\\
\textbf{Output}: A safe
subset $V$ for $\Gamma$-privacy of $W$.}
\begin{algorithmic}[1] \label{algo:lp1_round}
\STATE{Initialize $B = \phi$.} \STATE{For each attribute $b \in A$
($A$ is the set of all attributes in $W$), include $b$ in $B$ with
probability $\min\{1, 16x_b\log n\}$.}\label{step:rounding}
\STATE{For each module $m_i$ whose privacy requirement is not
satisfied by $B$, add $B^{min}_i$ to $B$.}\label{step:derandomize}
\STATE{Return $V = A \setminus B$ as the safe visible attribute.}
\end{algorithmic}
\end{algorithm}

The following lemma shows that step~\ref{step:rounding} satisfies
the privacy requirement of each module with high probability:

\begin{lemma}\label{lem:lp1_round}
Let $m_i$  be any module in workflow $W$. Then with probability at
least $1 - 2/n^2$, there exists a $j \in [1, \ell_i]$ such that
$|I^h_i| \geq \alpha^j_i$ and $|O^h_i| \geq \beta^j_i$.
\end{lemma}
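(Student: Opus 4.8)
The plan is to fix an arbitrary module $m_i$ and to exhibit a single index $j^\star \in [1, \ell_i]$ that will serve as the witness required by the lemma: I will show that, with high probability, Step~\ref{step:rounding} of Algorithm~\ref{algo:lp1_round} alone already hides at least $\alpha^{j^\star}_i$ inputs and at least $\beta^{j^\star}_i$ outputs of $m_i$. Since constraint~(\ref{equn:IP1-1}) forces $\sum_{j=1}^{\ell_i} r_{ij} \geq 1$, the natural choice is to let $j^\star$ be the index maximizing $r_{ij}$, so that $r_{ij^\star} \geq 1/\ell_i$; this is the constraint the fractional solution commits to the most, and because the cardinality lists are short (at most quadratic in the few attributes of a single module, hence $\ell_i = O(1)$), we actually have $r_{ij^\star} = \Omega(1)$.

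Next I would treat the input and output cardinality requirements separately and combine them by a union bound at the end. Consider the input side. In Step~\ref{step:rounding} each $b \in I_i$ is hidden independently with probability $p_b = \min\{1, 16 x_b \log n\}$, so $|I^h_i|$ is a sum of independent indicators. Using $y_{bij^\star} \le x_b$ (from~(\ref{equn:IP1-2})) together with the covering constraint~(\ref{equn:IP1-6}), which gives $\sum_{b \in I_i} y_{bij^\star} \ge r_{ij^\star}\alpha^{j^\star}_i$, I would lower-bound the expectation $\mathbb{E}[\,|I^h_i|\,] = \sum_{b \in I_i} p_b$ by roughly $16 \log n \cdot r_{ij^\star}\alpha^{j^\star}_i$. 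One subtlety must be handled first: peel off the attributes whose boosted value $16 x_b \log n$ is capped at $1$. If enough such attributes exist the input requirement is met deterministically; otherwise the residual requirement $\alpha^{j^\star}_i$ minus the capped count is still fractionally covered, so the linear lower bound on the expectation of the uncapped part survives. The role of the $16\log n$ factor is exactly to push this expectation an $\Omega(\log n)$ factor above the integer threshold $\alpha^{j^\star}_i$.

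With the mean comfortably above the threshold, I would invoke a Chernoff lower-tail bound to conclude $\Pr[\,|I^h_i| < \alpha^{j^\star}_i\,] \le 1/n^2$, and symmetrically, using~(\ref{equn:IP1-7}) and~(\ref{equn:IP1-3}) in place of~(\ref{equn:IP1-6}) and~(\ref{equn:IP1-2}), $\Pr[\,|O^h_i| < \beta^{j^\star}_i\,] \le 1/n^2$ on the output side. A union bound over these two failure events then yields that with probability at least $1 - 2/n^2$ both $|I^h_i| \ge \alpha^{j^\star}_i$ and $|O^h_i| \ge \beta^{j^\star}_i$ hold simultaneously, so $j = j^\star$ witnesses the conclusion of the lemma.

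I expect the Chernoff step to be the main obstacle, and for two reasons. First, the tail bound must be uniform across the whole range of thresholds: when $\alpha^{j^\star}_i$ is large the standard multiplicative lower tail applies directly, but when it is as small as $1$ one instead argues that $\Pr[\,|I^h_i| = 0\,] = \prod_{b \in I_i}(1 - p_b) \le \exp(-\sum_{b} p_b) \le 1/n^2$. Second, the expectation bound carries the factor $r_{ij^\star} \ge 1/\ell_i$, so it is precisely the shortness of the cardinality lists (keeping $r_{ij^\star}$ bounded below by a constant) together with the constant $16$ in the boosting that secures the $\Omega(\log n)$ margin needed for the $1/n^2$ tail; pinning down this quantitative margin so that it holds simultaneously for the input and output requirements is the delicate part of the argument.
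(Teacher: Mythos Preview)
Your approach has a genuine gap: the choice of $j^\star$ as the maximizer of $r_{ij}$ only guarantees $r_{ij^\star} \ge 1/\ell_i$, and you then rely on the claim that $\ell_i = O(1)$ to make the Chernoff margin $\Omega(\log n)$. But the theorem does not assume the list length (equivalently, the number of attributes of a module) is bounded by an absolute constant; the whole point of the carefully written LP is to obtain an $O(\log n)$ guarantee that is \emph{independent of $\ell_i$}. With your choice, the expected number of hidden inputs is only $\gtrsim (16/\ell_i)\,\alpha^{j^\star}_i \log n$, so once $\ell_i$ is, say, $\Theta(\log n)$ the expectation is merely $\Theta(\alpha^{j^\star}_i)$ and the Chernoff tail is nowhere near $1/n^2$. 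Moreover, by using only $y_{bij^\star}\le x_b$ you are throwing away the summation over $j$ in constraint~(\ref{equn:IP1-2}), which is precisely the ingredient the paper added to avoid an $\Omega(\ell_{\max})$ integrality gap.

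The paper's fix is a \emph{median trick}. After sorting the (non-redundant) list so that the $\alpha^j_i$ increase and the $\beta^j_i$ decrease, take $p$ to be the median index with respect to the weights $r_{ij}$, i.e.\ $\sum_{j<p} r_{ij} < 1/2$ and $\sum_{j\le p} r_{ij} \ge 1/2$. Then $\sum_{j\ge p} r_{ij} \ge 1/2$ as well, and combining the \emph{summed} constraint~(\ref{equn:IP1-2}) with~(\ref{equn:IP1-6}) over all $j\ge p$ (where $\alpha^j_i \ge \alpha_{ip}$) yields $\sum_{b\in I_i} x_b \ge \tfrac{1}{2}\alpha_{ip}$, with no dependence on $\ell_i$; the output side uses $j\le p$ symmetrically, and the monotonicity of the sorted list is exactly what lets the \emph{same} index $p$ work for both sides. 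Your peeling of the capped attributes and the final Chernoff/union-bound step are fine and match the paper; it is only the selection of the witnessing index and the way constraints~(\ref{equn:IP1-2})--(\ref{equn:IP1-5}) are exploited that need to be replaced.
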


\begin{proof}[Proof sketch]
The LP solution returns a probability distribution on $r_{ij}$, and
therefore on the pairs in list $L_i$. Let $p$ be the index of the
median of this distribution when list $L_i$ is ordered by both
$\alpha^j_{i}$ and $\beta^j_{i}$ values, as described above. Our proof
consists of showing that with probability $\geq 1-2/n^2$, $|I^h_i|
\geq \alpha_{ip}$ and $|O^h_i| \geq \beta_{ip}$.

Note that since $p$ is the median, the sum of $y_{bij}$ over all
incoming data of module $v_i$ in the LP solution must be at least
$\alpha_{ip}/2$ (from constraint~(\ref{equn:IP1-6})). Further,
constraint~(\ref{equn:IP1-4}) ensures that this sum is contributed
to by at least $\alpha_{ip}/2$ different input data, and
constraint~(\ref{equn:IP1-2}) ensures that $x_b$ for any input data
$b$ must be at least its contribution to this sum, i.e.
$\sum_{j}y_{bij}$. Thus, at least $\alpha_{ip}/2$ different input
data have a large enough value of $x_b$, and randomized rounding
produces a good solution. An identical argument works for the output
data of $m_i$ (details in the appendix).
\end{proof}
Since the above lemma holds for every module, by standard arguments,
the $O(\log n)$-approximation follows.
\end{proof}

We next show that the richer expressiveness of set constraints
increases the complexity of the problem.

\begin{theorem}\label{thm:set-private}
\textbf{(Set Constraints)~~}  The \secureview problem with set
constraints cannot be approximated to within a factor of
$\ell_{\max}^{\epsilon}$ for some constant $\epsilon > 0$ (also within a
factor of $\Omega(2^{\log^{1-\gamma}n})$ for all constant $\gamma >
0$) unless ${\rm NP}$ $\subseteq$ ${\rm DTIME}(n^{{\rm polylog~} n})$.
The hardness result holds even when the maximum list size
$\ell_{\max}$ is a (sufficiently large) constant, each data has unit
cost, and the subsets $\widebar{I}^j_i,\widebar{O}^j_i$-s have
cardinality at most 2. Finally, it is possible to get a factor
$\ell_{\max}$-approximation in polynomial time.
\end{theorem}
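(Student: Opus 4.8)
The plan is to treat the upper bound and the inapproximability by entirely different techniques: a linear-programming relaxation with deterministic threshold rounding for the $\ell_{\max}$-approximation, and a reduction from the minimization form of Label Cover (Min-Rep) for the hardness, using the size of the label set as a tunable knob so that a single reduction yields both the constant-$\ell_{\max}$ (APX) regime and the super-constant $2^{\log^{1-\gamma}n}$ regime.

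For the $\ell_{\max}$-approximation I would write the natural covering LP: a variable $x_b\in[0,1]$ for each attribute $b\in A$ ($x_b=1$ meaning $b$ is hidden), and a variable $r_{ij}\in[0,1]$ for each module $m_i$ and list entry $j\in[1,\ell_i]$, with constraints $\sum_{j=1}^{\ell_i} r_{ij}\ge 1$ for every $i$, and $x_b\ge r_{ij}$ for every $i$, every $j$, and every $b\in\widebar{I}^j_i\cup\widebar{O}^j_i$; the objective is $\min\sum_{b}\cost(b)\,x_b$. This exactly models the set-constraint \secureview problem, since selecting entry $j$ forces all of $\widebar{I}^j_i\cup\widebar{O}^j_i$ to be hidden. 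The rounding is deterministic: hide every $b$ with $x_b\ge 1/\ell_{\max}$. Feasibility holds because $\sum_j r_{ij}\ge 1$ forces some $j$ with $r_{ij}\ge 1/\ell_i\ge 1/\ell_{\max}$, and then every $b\in\widebar{I}^j_i\cup\widebar{O}^j_i$ has $x_b\ge r_{ij}\ge 1/\ell_{\max}$ and is hidden, so $m_i$'s requirement is met. The cost bound is immediate: each hidden $b$ satisfies $\cost(b)\le\ell_{\max}\,\cost(b)\,x_b$, so the total cost is at most $\ell_{\max}$ times the LP optimum, hence at most $\ell_{\max}\cdot\mathrm{OPT}$.

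For the hardness I would reduce from Min-Rep: a bipartite graph whose two sides are partitioned into groups $A_1,\dots,A_p$ and $B_1,\dots,B_q$, with super-edges between groups realized by ordinary edges, the goal being to pick a minimum set of vertices so that every super-edge $(A_i,B_j)$ has a chosen $u\in A_i$ and chosen $v\in B_j$ with $(u,v)$ an edge. The encoding is direct: one unit-cost attribute per vertex, and one module per super-edge $(A_i,B_j)$, with the $A$-vertices as input attributes and the $B$-vertices as output attributes; the module's list has one entry per edge $(u,v)$, namely $(\widebar{I}^j_i,\widebar{O}^j_i)=(\{u\},\{v\})$, so $|\widebar{I}^j_i|,|\widebar{O}^j_i|\le 1$ and the cardinality-$2$ restriction holds. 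A hidden set $\widebar{V}$ is feasible precisely when, for every super-edge, some incident edge has both endpoints hidden, i.e. precisely when the corresponding vertex set covers all super-edges; costs coincide, so the optima match and \secureview inherits the inapproximability of Min-Rep.

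The crux, and the main obstacle, is the quantitative calibration. I would invoke the hardness of Label Cover with label set of size $R$ from the PCP theorem followed by $t$-fold parallel repetition: the resulting Min-Rep groups have size $\approx R^{t}$, the number of edges per super-edge (hence $\ell_{\max}$) is at most $R^{t}$, and the YES/NO gap is $R^{\Theta(t)}=\ell_{\max}^{\Theta(1)}$; crucially each edge still hides exactly one vertex per side, so cardinality $2$ is preserved automatically through repetition. Taking $R,t$ constant keeps $\ell_{\max}$ constant and gives the $\ell_{\max}^{\epsilon}$ APX-hardness (under $P\ne NP$), whereas choosing the parameters so that $R^{t}=2^{\log^{\delta}n}$ yields a gap of $2^{\log^{1-\gamma}n}$ at the price of a quasi-polynomial-size instance, which is exactly why the conclusion must be stated under ${\rm NP}\not\subseteq{\rm DTIME}(n^{{\rm polylog}\,n})$ rather than $P\ne NP$. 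The two points I expect to require the most care are pinning down the gap-versus-$\ell_{\max}$ exponent $\epsilon$, and arguing that the list instance is a legitimate input: either by taking the set-constraint \secureview problem to be defined on arbitrary lists, or by exhibiting gadget modules whose minimal safe hidden sets are exactly the desired singleton pairs (respecting $I_i\cap O_i=\emptyset$, $O_i\cap O_j=\emptyset$, and the dependencies $I_i\rightarrow O_i$), with Proposition~\ref{prop:superset} ensuring that hiding supersets remains safe.
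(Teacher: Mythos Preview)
Your proposal is essentially the paper's proof. The LP and threshold-rounding argument for the $\ell_{\max}$-approximation is identical to the paper's (same variables, same constraints, same $1/\ell_{\max}$ threshold). The hardness reduction is also the same idea: the paper reduces from minimum label cover, which is Min-Rep, creates one module per super-edge with a list entry per allowed label pair, and cites exactly the two hardness regimes you describe ($|L|^\epsilon$ and $2^{\log^{1-\gamma}N}$, with $\ell_{\max}=|L|^2$ in their encoding).

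One structural point where the paper is cleaner than your sketch: you place $A$-vertices as input attributes and $B$-vertices as output attributes of the super-edge modules, but then any $v\in B_j$ incident to several super-edges would be an output attribute of several modules, violating $O_i\cap O_j=\emptyset$. The paper avoids this by introducing a single auxiliary module $z$ whose \emph{outputs} are all the attributes $b_{u,\ell}$ (one per vertex--label pair); these are then \emph{inputs} to the edge modules $x_{uw}$, so the hidden pair $\{b_{u,\ell_1},b_{w,\ell_2}\}$ sits entirely on the input side of $x_{uw}$ and data sharing on inputs is allowed. The requirement list for $z$ is just the singletons, satisfied for free by any nonempty solution. This is the ``gadget module'' fix you anticipated at the end of your proposal; with that adjustment your argument matches the paper's.
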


\begin{proof}[Proof sketch] When we are allowed to specify arbitrary subsets
for individual modules, we can encode a hard problem like
\emph{label-cover} which is known to have no poly-logarithmic
approximation given standard complexity assumptions. The
corresponding approximation is obtained by an LP rounding algorithm
which shows that a good approximation is still possible when the
number of specified subsets for individual modules is not too large.
Details can be found by Appendix~\ref{sec:set-private}.
\end{proof}

\eat{
Every data in a workflow is uniquely produced by a module, but can be fed as input to multiple modules:
this phenomenon was called \emph{data sharing}.
We say that a workflow has $\gamma$-bounded data sharing if the number of such modules
which takes a particular data item as input is bounded by $\gamma$.
Equivalently, if $I_i \rightarrow O_i$, $i = 1$ to $n$ are the functional dependencies
in relation $R$ of the workflow, every attribute $a$ of $R$ can appear on the left hand side
of at most $\gamma$ such dependencies.
}

The hardness proofs in the above two theorems use extensively data
sharing, namely the fact that an output attribute of a given module
may be fed as input to several other modules. Recall that a workflow
is said to have $\gamma$-bounded data sharing if the maximum number
of modules which takes a particular data item as input is bounded by
$\gamma$. In real life workflows, the number of modules where a data
item is sent is not very large. The following theorem shows that a
better approximation is possible when this number is bounded.

\begin{theorem}\label{thm:bounded-private}
\textbf{(Bounded Data Sharing)~~} There is a $(\gamma+1)$-approximation
algorithm for the \secureview problem (with both cardinality and set
constraints) when the workflow has $\gamma$-bounded data sharing. On
the other hand, the cardinality constraint version (and consequently
also the set constraint version) of the problem remain APX-hard
even when there is \emph{no} data sharing (i.e. $\gamma = 1$), each
data has unit cost, the maximum list size $\ell_{\max}$ is 2, and
the values of $\alpha^j_i, \beta^j_i$-s are bounded by 3.
\end{theorem}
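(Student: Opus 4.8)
The theorem splits into a positive (algorithmic) part and a hardness part, which I would handle separately. For the $(\gamma+1)$-approximation, the plan is to exploit the one structural consequence of $\gamma$-bounded data sharing that matters here: since the output attributes of distinct modules are disjoint and each attribute is fed to at most $\gamma$ modules as input, every attribute $b$ lies in $I_i \cup O_i$ for at most $\gamma+1$ modules (at most one as an output, at most $\gamma$ as an input). The algorithm is then very simple: for each module $m_i$ independently, compute a minimum-cost standalone-safe hiding set $\widebar{V_i}$, and output $\widebar{V} = \bigcup_{i=1}^n \widebar{V_i}$. The per-module optimum is easy to compute for both input formats --- for set constraints we scan the list $L_i$ and keep the cheapest listed pair $(\widebar{I}^j_i, \widebar{O}^j_i)$, and for cardinality constraints we try each pair $(\alpha^j_i, \beta^j_i)$, hiding the $\alpha^j_i$ cheapest inputs together with the $\beta^j_i$ cheapest outputs, and keep the best $j$. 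Feasibility of the union is immediate from Theorem~\ref{thm:privacy-private}: each $\widebar{V_i}$ makes $m_i$ $\Gamma$-standalone-private, so hiding their union makes the entire workflow $\Gamma$-private.

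For the ratio I would charge against an optimal hidden set $\widebar{V}^*$. For each module, $\widebar{V}^* \cap (I_i \cup O_i)$ is itself a feasible standalone solution for $m_i$, so $\cost(\widebar{V_i}) \le \cost(\widebar{V}^* \cap (I_i \cup O_i))$; summing and exchanging the order of summation gives
\begin{align*}
\cost(\widebar{V}) &\le \sum_{i=1}^n \cost(\widebar{V_i}) \le \sum_{i=1}^n \cost\big(\widebar{V}^* \cap (I_i \cup O_i)\big)\\
&= \sum_{b \in \widebar{V}^*} \cost(b)\cdot \big|\{\,i : b \in I_i \cup O_i\,\}\big| \le (\gamma+1)\,\cost(\widebar{V}^*),
\end{align*}
where the last inequality uses the occurrence bound above. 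This yields the claimed $(\gamma+1)$-approximation uniformly for both variants, and it also pinpoints why the $\Omega(n)$ blow-up of Example~\ref{eg:1} required unbounded sharing.

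For the hardness part, the plan is a gap-preserving ($L$-)reduction from a bounded-occurrence APX-hard problem --- e.g.\ Vertex Cover on bounded-degree graphs, or bounded-occurrence Max-3SAT --- into the cardinality version with $\gamma=1$. Each primitive object of the source instance becomes a module, and a hidden shared attribute plays the role of a ``selection'': because a shared attribute counts simultaneously towards its producer's output requirement and its (unique) consumer's input requirement, the coupling needed to encode covering/satisfaction constraints is available even without data sharing. The two options per module ($\ell_{\max}=2$) encode the disjunctive ``covered/satisfied'' choice, and all requirement counts are kept at most $3$ so that the stated bound on the $\alpha^j_i,\beta^j_i$ holds. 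Since $\gamma=1$ forbids feeding one selection to several constraint-modules directly, I would insert small identity/duplication gadgets --- short chains of modules each carrying an option such as $(1,0)$ or $(0,2)$ --- to fan a single decision out to the (boundedly many) constraints in which it participates.

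The main obstacle, and where most of the work lies, is showing that these gadgets are gap-preserving. One must argue that the gadget overhead per object is only a constant --- which is exactly what bounded occurrence/degree buys --- and that the optimum of the constructed \secureview instance is within a constant factor (indeed an affine relation) of the source optimum, so that a hypothetical PTAS for the cardinality version would give a PTAS for the APX-hard source, contradicting $P \ne NP$. Finally, APX-hardness of the set-constraints version should follow with essentially no extra work: with the small constants here, each cardinality option $(\alpha^j_i,\beta^j_i)$ expands into only a constant number of explicit hidden-attribute sets, so the same reduction applies with $\ell_{\max}$ still bounded.
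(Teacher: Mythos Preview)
Your $(\gamma+1)$-approximation is exactly the paper's argument: pick the cheapest feasible pair from each module's list, take the union, and bound the cost by charging each hidden attribute of the optimum to the at most $\gamma+1$ modules whose $I_i\cup O_i$ it belongs to. Your write-up of the charging is in fact cleaner than the paper's sketch.

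For the hardness half you have the right source problem but have not found the construction, and the fan-out gadgets you anticipate are unnecessary. The paper reduces directly from Vertex Cover on cubic graphs as follows: for each edge $(u,v)\in E'$ create a module $x_{uv}$ with one input (initial) and two outputs, one to $y_u$ and one to $y_v$; for each vertex $v$ create a module $y_v$ with $d_v$ inputs (from the incident $x$-modules) and one output to a common sink module $z$; give $z$ a single output. The requirement lists are $L_{uv}=\{(0,1)\}$, $L_v=\{(d_v,0),(0,1)\}$, and $L_z=\{(1,0)\}$, so $\ell_{\max}=2$ and all cardinalities are at most $3$. The point is that ``$v$ is in the cover'' is encoded by hiding the single attribute on $(y_v,z)$; the disjunctive list at $y_v$ says ``either I am in the cover (hide my outgoing edge) or \emph{all} my incoming edges are hidden'', and each $x_{uv}$ only needs to hide one outgoing edge, which it can direct toward whichever endpoint is in the cover. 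One then shows that $G'$ has a vertex cover of size $K$ iff the instance has a solution of cost $|E'|+K$; since $K\ge |E'|/3$ in cubic graphs, the additive $|E'|$ preserves the gap and APX-hardness follows. No duplication chains are needed because the ``selection'' of $v$ is a single attribute and the coupling to all incident edges happens through $y_v$'s list, not through sharing an attribute across multiple consumers.
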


\begin{proof}[Proof sketch]
The APX-hardness in the above theorem is obtained by a reduction
from vertex-cover in cubic graphs. This reduction also shows that
the NP-completeness of this problem does not originate from
data-sharing, and the problem is unlikely to have an exact solution
even without any data sharing. The $\gamma+1$-approximation is
obtained by a greedy algorithm, which chooses the least cost
attribute subsets for individual modules, and outputs the union of
all of them. Since any attribute is produced by a unique module and
is fed to at most $\gamma$ modules, in any optimal solution, a
single attribute can be used to satisfy the requirement of at most
$\gamma+1$ modules. This gives a $\gamma+1$-approximation. Observe
that when data sharing is not bounded, $\gamma$ can be $\Omega(n)$
and this greedy algorithm will not give a good approximation to this
problem.
\end{proof}

\eat{
DO NOT DELETE YET
Note that the requirement lists specify {\em local} conditions for ensuring
standalone-privacy of modules, whereas a feasible solution 
of the \secureview problem requires
{\em global} guarantees on their network-privacy in the workflow.
In the next section, we show that satisfying the local conditions
specified by the requirement list of a module in an all-private network not only guarantees its
standalone-privacy, but also its network-privacy in {\em any} workflow.
This equivalence of standalone and network-privacy guarantees is somewhat
surprising, because privacy properties of standalone modules are expected
to weaken when placed in a workflow due to mutual interaction. This connection
simplifies the definition of the \secureview problem with set constraints
(resp., cardinality constraints). Now, a feasible solution is
a data set $D^h\subseteq D$ such that for each $1\leq i\leq n$,
$D^h \supseteq (\widebar{I}^j_i\cup \widebar{O}^j_i)$ (resp.,
$|D^h\cap I_i|\geq \alpha^j_{i}$ and $|D^h\cap O_i|\geq \beta^j_{i}$) for
 for some $1\leq j\leq \ell_i$.
}

\section{Public Modules}\label{sec:public-module}
In the previous section we restricted our attention to workflows
where all modules are private. In practice, typical workflows use
also public modules. Not surprisingly, this makes privacy harder to
accomplish. In particular, we will see below that it becomes harder
to assemble privacy guarantees for the full workflow out of those
that suffice for component modules. Nevertheless a refined variant
of Theorem \ref{thm:privacy-private} can still be employed.

\subsection{\mbox{Standalone vs. Workflow Privacy (Revisited)}}
We have shown in Section \ref{sec:privacy-private}
(Theorem~\ref{thm:privacy-private}) that when a set of hidden
attributes guarantees $\Gamma$-standalone-privacy for a private
module, then the same set of attributes can  be used to guarantee
$\Gamma$-workflow-privacy {\em in an all-private network}.
Interestingly, this is no longer the case for workflows with public
modules. To see why, consider the following example.

\begin{example}\label{eg:public-challenge}
Consider a private module $m$ implementing a one-one function with
$k$ boolean inputs and $k$ boolean outputs. Hiding any $\log \Gamma$
input attributes guarantees $\Gamma$-standalone-privacy for $m$ even
if all output attributes of $m$ are visible. However, if $m$ gets
all its inputs from a public module $m'$ that computes
some constant function (i.e. $\forall \tup{x}, m'(\tup{x}) =
\tup{a}$, for some constant $\tup{a}$), then hiding $\log \Gamma$
input attributes no longer guarantees $\Gamma$-workflow-privacy of
$m$ -- this is because it suffices to look at the (visible) output
attributes of $m$ to know the value $m(\tup{x})$ for $x=\tup{a}$.

In an analogous manner, hiding any $\log \Gamma$ output attributes
of $m$, leaving all its input attributes visible, also guarantees
$\Gamma$-standalone-privacy of $m$. But if $m$ sends all its outputs
to another public module $m''$ that implements a one-one invertible
function, and whose output attributes happen to be visible, then for
any input $\tup{x}$ to $m$, $m(\tup{x})$ can be immediately inferred
using the inverse function of $m''$.
\end{example}

Modules that compute a constant function (or even one-one
invertible function) may not be common in practice. However, this simple
example illustrates where, more generally, the proof of
Theorem~\ref{thm:privacy-private} (or Lemma~\ref{lem:main-private})
fails in the presence of public modules: when searching for a
possible world that is consistent with the visible attributes, one
needs to ensure that the functions defined by the public modules
remain unchanged. So we no longer have the freedom of freely
changing the values of the hidden input (resp. output) attributes,
if those are supplied by (to) a public module.

One way to overcome this problem is to ``privatize" such problematic
public modules, in the sense that the name of the public module is
not revealed to users (either in the workflow specification or in
its execution logs).
Here we assume that once we rename a module the user loses all knowledge about it
(we discuss other possible approaches in the conclusion).
We refer to the public modules whose identity
is hidden (resp. revealed) as {\em hidden} ({\em visible}) public
modules.
 %
 %
 Observe that
now, since the identity of the hidden modules is no longer known to
the adversary, condition (2) in
Definition~\ref{def:pos-worlds-workflow} no longer needs to be
enforced for them, and a larger set of possible words can be
considered.
Formally,

\begin{definition}
\label{def:pos-worlds-workflow-public}(Definition
\ref{def:pos-worlds-workflow} revisited)
 Let  $P$ be a subset of the public modules, and, as before, let $V$ be
a set of the visible attributes. Then, the set of {\em possible
worlds} for the relation $R$ w.r.t. $V$ and $P$, denoted $\Worlds(R,
V, P)$, consists of all relations $R'$ over the same attributes as
$R$ that satisfy the functional dependencies in $F$
and where (1) $\proj{V}{R'} = \proj{V}{R}$, 
and (2) for every public module $m_i \in P$ and every tuple
$\tup{t'} \in R'$, $\proj{O_i}{\tup{t'}} =
m_i(\proj{I_i}{\tup{t'}})$.
\end{definition}

 %

 The notion of $\Gamma$-privacy for a workflow $W$,
with both private and public modules (w.r.t a set $V$ of visible
attributes and a set $P$ of visible public modules) is now defined as
before (Definition \ref{def:workflow-privacy}), except that the set
of possible worlds that is considered is the refined one from
Definition \ref{def:pos-worlds-workflow-public} above. Similarly, if
$W$ is \emph{$\Gamma$-private} w.r.t. $V$ and $P$, then we will call
the pair $(V,P)$ a \emph{safe subset} for $\Gamma$-privacy of $W$.

We can now show that, by making visible only public modules whose
input and output attribute values need not be masked, one can obtain
a result analogous to Theorem~\ref{thm:privacy-private}. Namely,
assemble the privacy guarantees of the individual modules to form
privacy guarantees for the full workflow. Wlog., we will assume that $m_1, m_2,
\cdots, m_K$ are the private modules and $m_{K+1}, \cdots, m_n$ are
the public modules in $W$. 

\begin{theorem}\label{thm:privacy-public}
Given a parameter
$\Gamma \geq 1$, let $V_i \subseteq (I_i \cup O_i)$, $i \in [1, K]$,
be a set of visible attributes w.r.t which the private module $m_i$
is $\Gamma$-standalone-private. Then the workflow $W$ is
$\Gamma$-private w.r.t the set of visible attributes $V$ and any set
of visible public modules $P \subseteq \{m_{K+1}, \cdots, m_n\}$,
s.t. $\overline{V} = \bigcup_{i = 1}^K \widebar{V_i}$ and all the
input and output attributes of modules in $P$ are visible and
\mbox{belong to $V$.}
\end{theorem}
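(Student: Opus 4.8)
The plan is to reduce the mixed public/private setting back to the all-private case of Theorem~\ref{thm:privacy-private}, exploiting the fact that every attribute of a module in $P$ is visible. The central observation I would establish first is that, under the stated hypothesis, the refined possible-worlds set $\Worlds(R, V, P)$ of Definition~\ref{def:pos-worlds-workflow-public} coincides exactly with the all-private possible-worlds set $\Worlds(R, V)$ obtained by dropping condition~(2) entirely (i.e.\ treating every module as private, as in Definition~\ref{def:pos-worlds-workflow} with no public constraints). The inclusion $\Worlds(R, V, P) \subseteq \Worlds(R, V)$ is immediate, since the former is defined by a superset of constraints. For the reverse inclusion, take any $R' \in \Worlds(R, V)$ and any module $m_j \in P$: for each tuple $\tup{t'} \in R'$, condition~(1) yields a tuple $\tup{t} \in R$ with $\proj{V}{\tup{t'}} = \proj{V}{\tup{t}}$, and since all of $I_j \cup O_j$ lies in $V$ we get $\proj{I_j \cup O_j}{\tup{t'}} = \proj{I_j \cup O_j}{\tup{t}}$; as $R$ respects $m_j$'s functionality, $\proj{O_j}{\tup{t'}} = m_j(\proj{I_j}{\tup{t'}})$, so condition~(2) holds automatically. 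Thus making the input/output attributes of the visible public modules fully visible renders their functionality constraint redundant.

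Next I would invoke the key consequence of the hypothesis that $\widebar{V} = \bigcup_{i=1}^K \widebar{V_i}$ contains no attribute of any module in $P$ (this is exactly what ``all input and output attributes of modules in $P$ are visible'' buys us). In particular, for each private module $m_i$, $i \in [1,K]$, neither the single-module hidden set $\widebar{V_i}$ nor the full hidden set $\bigcup_{j=1}^K \widebar{V_j}$ ever hides a $P$-module attribute, so the equality of possible worlds established above continues to hold for each of the visible sets appearing in the argument. Consequently the quantity $\Out_{\tup{x}, W}$ for a private module, computed in the public setting, equals the corresponding quantity computed in the all-private interpretation of $W$.

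With this equivalence in hand, the reasoning of Theorem~\ref{thm:privacy-private} applies for the private modules. Concretely, each $m_i$, $i \in [1,K]$, is $\Gamma$-standalone-private w.r.t.\ $V_i$, so $|\Out_{\tup{x}, m_i}| \geq \Gamma$ for every input $\tup{x}$; applying Lemma~\ref{lem:main-private} in the all-private interpretation shows each such output also lies in $\Out_{\tup{x}, W}$ when only $\widebar{V_i}$ is hidden; and Proposition~\ref{prop:superset} lifts this to the larger hidden set $\bigcup_{j=1}^K \widebar{V_j} \supseteq \widebar{V_i}$. Hence every private module is $\Gamma$-workflow-private w.r.t.\ $(V, P)$, i.e.\ $W$ is $\Gamma$-private in the sense of Definition~\ref{def:workflow-privacy}.

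The main obstacle---and the only genuinely new ingredient relative to Theorem~\ref{thm:privacy-private}---is the first step: verifying that full visibility of the $P$-modules makes condition~(2) redundant, so that the flipping construction underlying Lemma~\ref{lem:main-private} never has to alter a public module. This is precisely where the hypothesis is indispensable: Example~\ref{eg:public-challenge} exhibits public modules (a constant function feeding $m$, or an invertible function reading $m$'s output) whose functionality, if it had to be preserved over \emph{hidden} attributes, would collapse the possible worlds and destroy privacy. I would therefore state explicitly that the argument uses the visibility hypothesis only through this redundancy, and that the remaining modules---the hidden public modules outside $P$---carry no condition~(2) and so behave exactly like private modules under flipping.
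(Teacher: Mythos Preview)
Your proposal is correct, and the underlying insight---that full visibility of a module's attributes makes its functionality constraint redundant---is exactly what the paper exploits. The route, however, is organized a bit differently. The paper does not argue the set equality $\Worlds(R,V,P)=\Worlds(R,V)$ directly; instead it re-enters the flipping construction behind Lemma~\ref{lem:main-private} and proves a companion lemma (Lemma~\ref{lem:main-public}) stating that whenever the flip redefines some $m_j$ to $g_j\neq m_j$, the set $(I_j\cup O_j)\cap\widebar{V_i}$ must be nonempty, so $m_j\notin P$. Thus the specific witness $R'$ built by flipping already satisfies condition~(2) for every module in $P$, placing it in $\Worlds(R,V,P)$.

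Your version is a level more abstract: you show that \emph{any} $R'$ satisfying condition~(1) automatically satisfies condition~(2) for modules in $P$, which yields the possible-worlds equality and lets you invoke the all-private machinery (Lemma~\ref{lem:main-private} plus Proposition~\ref{prop:superset}) as a black box for each private $m_i$. This is cleaner and avoids revisiting the internals of the flip; the paper's route, by contrast, makes explicit precisely which step of the construction would break if a $P$-module had a hidden attribute. Both arguments are sound and rest on the same observation.
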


\begin{proof}[Proof sketch]
The proof is similar to that of Thm.~\ref{thm:privacy-private}. Here
we additionally show  in a lemma analogous to
Lemma~\ref{lem:main-private} (see Lemma~\ref{lem:main-public} in
Appendix~\ref{sec:main-public}) that, if a public module $m_j, j \in
[K+1, n]$ is redefined to $g_j$, then $m_j$ is hidden. In other
words, the visible public modules in $P$ are never redefined and
therefore condition (2) in
Definition~\ref{def:pos-worlds-workflow-public} holds.
\end{proof}

\begin{example}
Consider a chain workflow with
three modules $m' \rightarrow m \rightarrow m''$, where $m'$ is a
public module computing a constant function, $m$ is a private module
computing a one-one function and $m''$ is another public module
computing an invertible one-one function. If we hide only a subset
of the input attributes of $m$, $m'$ should be hidden, thus $P =
\{m''\}$. Similarly, if we hide only a subset of the output
attributes of $m$, $m''$ should be hidden. Finally, if we hide a
combination of input and output attributes, both $m', m''$ should be
hidden and in that case $P = \phi$.
\end{example}

\subsection{The \secureview Problem (Revisited)}\label{sec:secureview-public}
The \secureview optimization problem in general
workflows is similar to the case of all-private workflows,
with an additional cost due to hiding (privatization)
of public modules: when a public module $m_j$ is hidden, the
solution incurs a cost $\pvt(m_j)$. Following the notation of
visible and hidden attributes, $V$ and $\widebar{V}$, we will denote
the set of hidden public modules by $\widebar{P}$. The total cost
due to hidden public modules is $\pvt(\widebar{P}) = \sum_{m_j \in
\widebar{P}} \pvt(m_j)$, and the total cost of a safe solution $(V,
P)$ is $\cost(\widebar{V}) + \pvt(\widebar{P})$. The definition of
the \secureview problem, with cardinality and set constraints,
naturally extends to this refined cost function and the goal is to
find a safe solution with minimum cost. This generalizes the
\secureview problem for all-private workflows where $\widebar{P} =
\phi$ (and hence $\pvt(\widebar{P}) = 0$).

\eat{
 In the following theorems, the term {\em general workflows} denotes workflows
with both public and private modules.
}


\medskip
\noindent
\textbf{Complexity Results} (details in Appendix~\ref{sec:app_proofs_public-module}).~~
In Section~\ref{sec:complexity-private} we showed that the \secureview
problem has an $O(\log n)$-approximation in an all-private workflow
even when the lists specifying cardinality requirements are $\Omega(n)$-long
and when the workflow has arbitrary data sharing. But, 
we show (in Appendix~\ref{sec:set-public}) by a reduction from the label-cover problem
that the cardinality constraints version in general workflows is
$\Omega(2^{\log^{1-\gamma}n})$-hard to approximate (for all constant $\gamma > 0$), and thus
unlikely to have any polylogarithmic-approximation.
\eat{
\begin{theorem}\label{thm:cardinality-public}
\textbf{Cardinality Constraints  (general workflows):~~}
Unless
${\rm NP} \subseteq {\rm DTIME}(n^{{\rm polylog~} n})$,
the \secureview problem with cardinality constraints in general workflows is
$\Omega(2^{\log^{1-\gamma}n})$-hard to approximate
for all constant $\gamma > 0$ even if the maximum size of the requirement lists is 1 and the individual
requirements are bounded by 1.
\end{theorem}
Theorem~\ref{thm:cardinality-public} is proved by a reduction from
the label-cover problem (see Appendix~\ref{sec:cardinality-public}).
} In contrast, the approximation factor for the set constraints
version remains the same and Theorem~\ref{thm:set-private} still
holds for general workflows by a simple modification to the proof.
However, $\gamma$-bounded data sharing no longer give a constant
factor approximation any more for a constant value of $\gamma$. By a
reduction from the set-cover problem, we prove in
Appendix~\ref{sec:bounded-public} that the problem is $\Omega(\log
n)$-hard to approximate even when the workflow has no data sharing,
and when the maximum size of the requirement lists and the
individual cardinality requirements in them are \mbox{bounded by 1.}

\eat{
which is shown in the following theorem (proved by a reduction from
the set-cover problem.)

\begin{theorem}\label{thm:bounded-public}
\textbf{Bounded Data sharing  (general workflows):~~}
Unless ${\rm NP} \subseteq {\rm DTIME}(n^{O(\log \log n)})$, the \secureview problem with cardinality constraints
without data sharing
 in general workflows is
 $\Omega(\log n)$-hard to approximate even if the maximum size of the requirement lists is 1 and the individual
requirements are bounded by 1.
\end{theorem}

}

\vspace{-1mm}
\section{Conclusions}\label{sec:conclusions}
\vspace{-0.5mm} This paper proposes the use of provenance views for
preserving the privacy of module functionality in a workflow. Our
model motivates a natural optimization problem, \secureview , which
seeks to identify the smallest amount of data that needs to be
hidden so that the functionality of every module is kept private. We
give algorithms and hardness results that characterize the
complexity of the problem.

In our analysis, we assume that users have two sources of knowledge
about module functionality: the module name (identity) and the
visible part of the workflow relation. Module names are informative
for public modules, but the information is lost once the module
name is hidden/renamed.  Names of private modules are
non-informative, and users know only what is given in the workflow view.
However, if users have some additional prior knowledge about the
behavior of a private module, we may hide their identity
by renaming them, and then run our algorithms.

Our work suggests several promising directions for future research.
First, a finer privacy analysis may be
possible if one knows what {\em kind} of prior knowledge the user has on a
private module, e.g. the distribution of output values for
a specific input value, or knowledge about the types
and  names of input/output attributes (certain integers may be
illegal social security numbers, certain character sequences are
more likely to represent gene sequences than others, etc).
Our definitions and algorithms currently assume that all data values
in an attribute domain are equally possible, so the effect
of knowledge of a possibly non-uniform prior distribution on
input/output values should be explored.
Second, some additional sources of user knowledge on functionality of public modules
(e.g. types of attributes and connection with other modules)
may prohibit hiding their functionality using privatization (renaming),
and 
we would like to explore alternatives to privatization
to handle public modules. 
\eat{
One natural approach is to
\emph{propagate hidden attributes through public modules}: if
some input (output) attributes of a public module are hidden, then we hide
some or all of its output (input) attributes to ensure
privacy. However our initial results show that such approach does
not work, even in very simple workflows with restricted topologies
and limited data sharing.
}
A third direction to explore is an alternative model of
privacy.  As previously mentioned, standard mechanisms to guarantee differential
privacy (e.g. adding random noise to data values) do not seem to work for ensuring module privacy w.r.t.
provenance queries, and new mechanisms suitable to our application
 have to be developed. 
Other natural directions for future research include
considering \emph{non-additive cost functions}, in which some
attribute subsets are more useful than others, efficiently handling
\emph{infinite or very large domains} of attributes, and exploring
alternate objective functions, such as \emph{maximizing utility} of
visible data instead of minimizing the cost of hidden data. 

\medskip

\noindent
\textbf{Acknowledgements.}~~ 
%
S. B. Davidson, S. Khanna and S. Roy were supported in part by NSF-IIS Award 0803524;
 T. Milo was supported in part by NSF-IIS Award 1039376, the Israel Science Foundation, the US-Israel Binational
  Science Foundation and the EU grant MANCOOSI; and
   D. Panigrahi was supported in part by NSF-STC Award 0939370.

{
\small

}
\newpage
\appendix
\section{Proofs from Section 3}
\label{sec:app_proofs_standalone}

%
\subsection{Proof of Theorem~\ref{thm:communication-no-oracle}}

\noindent
\begin{proof}
We prove the theorem by a communication complexity reduction
from the set disjointness problem:
Suppose Alice and Bob own two subsets $A$ and $B$ of a universe $U$, $|U| = N$. To decide whether
they have a common element (i.e. $A \cap B \neq \phi$) takes $\Omega(N)$ communications \cite{NisanKBook}.
\par
We construct the following relation $R$ with $N+1$ rows for the module $m$.
$m$ has three input attributes: $a, b, id$ and one output attribute $y$.
The attributes $a, b$ and $y$ are boolean, whereas $id$ is in the range $[1, N+1]$.
The input attribute $id$ denotes the identity of every row in $R$ and takes value $i \in [1, N+1]$ for the $i$-th
row.
The module $m$ computes the AND function of inputs $a$ and $b$, i.e., $y = a \wedge b$.
\par
Row $i$, $i \in [1, N]$, corresponds to element $i \in U$.
In row $i$, value of $a$ is 1 iff $i \in A$; similarly,  value of  $b$ is 1 iff $i \in B$.
The additional $N+1$-th row has $a_{N+1} = 1$ and $b_{N+1} = 0$.
The standalone privacy requirement $\Gamma = 2$ and the goal is to check if visible attributes $V = \{id, y\}$
(with hidden attributes $\widebar{V} = \{a, b\}$) is safe for this privacy requirement.

\par
 Note that if there is a common element $i \in A \cap B$,
then there are two $y$ values in the table:
in the $i$-th row, the value of  $y = a \wedge b$ will be 1, whereas, in the $N+1$-th row
it is 0. Hence hiding $\widebar{V} = \{a, b\}$ will ensure the privacy requirement of $\Gamma = 2$
(every input $\tup{x}$ to $m$ can be mapped either to 0 or 1).
If there is no such $i \in A \cap B$, the value of $y$ in all rows $i \in [1, N+1]$ will be
zero which does not meet the privacy requirement $\Gamma = 2$.
Hence we need to look at $\Omega(N)$ rows to decide
whether $V = \{id, y\}$ is  safe.
\end{proof}

%
%
%
%

\subsection{Proof of Theorem~\ref{thm:computation-no-oracle}}
\noindent

In our reduction, $N = 2^{k-1}$. Hence, alternatively,
if $N$ is the number of tuples in the relation, there does not exists a
$poly(\log N)$ algorithm, unless P = NP.

\begin{proof}
We prove the theorem by a reduction from UNSAT:
Suppose we are given a boolean CNF formula $g$ on $\ell$ boolean variables $x_1, \cdots, x_\ell$.
The goal is to decide if \emph{no} assignment of $x_1, \cdots, x_\ell$ can satisfy $g$.
Given such an UNSAT instance, we build a relation $R$ with input attributes
$x_1, \cdots, x_\ell, y$ and
output attribute $z$, all of boolean domain (hence $k = \ell + 2$).
\par
The function of $m$  has a succinct description as follows:
$m(x_1, \cdots, x_\ell, y) = \neg g(x_1, \cdots, x_\ell) \wedge \neg y$ (i.e., NOR of $g(x_1, \cdots, x_\ell)$ and $y$).
Hence in the table $R$, \emph{implicitly}, we have two tuples for each assignment to the variables, $x_1, \cdots, x_\ell$:
if the assignment for $x_1, \cdots, x_\ell$ satisfies the formula $g$
then, for both $y = 0$ and $y = 1$, we have  $z = 0$.
Otherwise if the assignment does not satisfy the formula, for $y = 0$, we have $z = 1$,
and for $y = 1$, $z = 0$.
The privacy requirement is $\Gamma = 2$ and the goal is to decide if
visible attributes $V = \{x_1, \cdots, x_{\ell}\} \cup \{z\}$ is a safe subset where    hidden attributes $\widebar{V} = \{ y\}$.
\par
Note that if the formula $g$ is \emph{not} satisfiable, then it suffices to hide $y$ to get 2-privacy,
i.e. $V$ is safe. This is
because for \emph{every} satisfying assignment, there are two ways to complete $y$ value (one that
is the correct one and one that is the opposite).
On the other hand, if the function $g$ has at least one satisfying assignment,
for that assignment in $R$,regardless of the value of the hidden attribute $y$,
the output $z$ has to always be 0.
In that case $V$ is not a safe subset.
\end{proof}

%
%
\subsection{Proof of Theorem~\ref{thm:communication-oracle}}

\begin{proof}
Assume, for the sake of contradiction,
that an algorithm \Algo\ exists which uses $2^{o(k)}$
oracle calls. We will build an adversary that controls the \safeview oracle
and outputs answers to the queries consistent with a fixed function $m_1$ and
a dynamically changing function $m_2$ that depends on the set of queries asked.
The minimum cost of a safe subset for $m_1$ will be $3/2$ times that for 
(all definitions of) $m_2$, thereby proving the theorem.
\par
Consider a function 
with $\ell$ boolean input attributes in $I$, 
and one output attribute in $O$ 
where $\ell$ is even (i.e. $k = \ell +1$). 
The costs of all attributes in $I$ is 1, the cost of attribute $y$ in $O$
is $\ell$. 
We want to decide whether there exists a safe visible subset $V$ 
such that the cost of the hidden subset $\widebar{V}$ is
at most $C = \frac{\ell}{2}$, or all hidden subsets have
cost at least $\frac{3C}{2} = \frac{3\ell}{4}$. 
Hence, any such set 
$\widebar{V}$ can never include the output attribute. 

\par

The oracle behaves as follows:
\begin{itemize}
\item[(P1)] The oracle answers \Yes\ for every set $V$ of input attributes s.t. 
$|V| < \frac{\ell}{4}$  (i.e. $|\widebar{V}| > \frac{3\ell}{4}$), and
\item[(P2)] The oracle answers \No\ for every subset $V$ of 
input attributes s.t. $|V| \geq \frac{\ell}{4}$ 
(i.e. $|\widebar{V}| \leq \frac{3\ell}{4}$).
\end{itemize}

The functions $m_1$ and $m_2$ are defined as follows:
\begin{itemize}
\item $m_1$ returns 1 iff the total number of input attributes whose value is 1 
is at least $\frac{\ell}{4}$ (and otherwise 0),
\item $m_2$ has a special set $A$ such that $|A| = \frac{\ell}{2}$. It returns 1 iff 
the total number of input attributes whose value is 1 is at least $\frac{\ell}{4}$
and there is at least one input attribute outside $A$ whose value is 1 (and otherwise 0).
\end{itemize}
Note that while the cheapest safe subset for $m_1$ has cost greater than 
$\frac{3\ell}{4}$, $m_2$ has a safe subset $A$ where the cost of $\widebar{A}$ is $\frac{\ell}{2}$.
\par

It remains to show that the behavior of the oracle (i.e. properties (P1)
and (P2)) remains consistent with the definitions of $m_1$ and $m_2$.
We consider $m_1$ first. 
\begin{itemize}
 \item(P1) holds for $m_1$: 
	An all-0 $\widebar{V}$ and an all-1 $\widebar{V}$ 
	respectively imply 
	an answer of 0 and 1 independent of the assignment of $V$.
	\item(P2) holds for $m_1$: An all-1 $V$ implies an answer of 1 independent
	of the assignment of $\widebar{V}$.
\end{itemize}

Now, we consider $m_2$. 
\begin{itemize}
	\item (P1) holds for $m_2$: An all-0 $\widebar{V}$ and an all-1 $\widebar{V}$ 
	respectively imply 
	an answer of 0 and 1 independent of the assignment of $V$ or the definition of $A$
	(in the first case, number of 1 is $< \frac{\ell}{4}$ and in the second case the number of
	1 is $\geq \frac{3\ell}{4} > \frac{\ell}{4}$ and there is one 1 outside $A$ since
	$\frac{3\ell}{4} > \frac{\ell}{2}$).
	\item (P2) holds for $m_2$ as long as $V$ is not a subset of $A$, since
	an all-1 $V$ will imply an answer of 1 independent of the assignment of
	$\widebar{V}$. Therefore, such a query restricts the possible candidates
	of $A$, and discards at most ${3\ell/4\choose \ell/4}$ candidates of $A$.
	\end{itemize}
	
	Since there are ${\ell \choose \ell/2}$ possible definitions of $A$ overall, the 
	number of queries required to certify the absence of $A$ (i.e. certify that the 
	function is indeed $m_1$ and not $m_2$ with some definition of $A$) is at least
	\begin{equation*}
	\frac{{\ell \choose \ell/2}}{{3\ell/4\choose \ell/4}} = \prod_{i=0}^{\ell/2-1} \frac{\ell - i}{3\ell/4 - i} \geq (4/3)^{\ell/2} = 2^{\Omega(k)}.
	\end{equation*}

	Therefore, for a $2^{o(k)}$-restricted algorithm \Algo\ , there always 
	remains at least one subset $A$ defining a function $m_2$ that is consistent
	with all previous answers to queries. Hence after $2^{o(k)}$
calls, if the algorithm decides that there is a safe subset with cost $< C$, we choose
$m$ to be $m_1$; on the other hand, if it says that there is no such subset,
we set $m = m_2$ (with the remaining consistent subset of size $\frac{\ell}{2}$ as its special subset $A$). 
In both the cases the answer of the algorithm is wrong which
shows that there cannot be such an algorithm distinguishing these two cases
with $2^{o(k)}$ calls.	
\end{proof}

\medskip
\noindent
\textbf{Remark.~~} The above construction also shows that 
given a cost limit $C$,
deciding whether there exists a safe subset $V$
with cost at most $C$ or all safe
subsets have cost at least $\frac{3C}{2}$
requires $2^{\Omega(k)}$ oracle calls.
By adjusting the parameters in this construction, 
the gap can be increased to a factor of $\Omega(k^{1/3})$ from a constant.
More specifically, 
it can be shown that 
deciding whether there exists a safe subset with cost at most $C$,
or whether for all safe subsets the cost is at least $\Omega(k^{1/3}C)$
requires $2^{\Omega(k^{1/6})}$ calls to the \safeview\ oracle.

\eat{
\paragraph{Extension to $\Omega(k^{1/3})$ gap}
The above construction can be modified to increase the gap to $\Omega(k^{1/3})$
as given by the following proposition.

\begin{proposition}
Given a cost limit $C$, deciding whether there exists a safe subset 
$V \subseteq I \cup O$ with cost at most $C$ or all safe
subsets have cost $\Omega(k^{1/3}C)$
requires $2^{\Omega(k^{1/6})}$ oracle calls, where $k = |I| + |O|$.
\end{proposition}

\begin{proof}
The construction is the same as before, except here 
we want to decide whether there exists a safe visible subset $V$ 
such that the cost of the hidden subset $\widebar{V}$ is
at most $C = \ell^{2/3}$, 
or all hidden subsets have
cost at least  $\ell - \ell^{1/2} = \Omega(k^{1/3} C)$.
\par
Now the oracle answers \Yes\ for every set $V$ of input attributes s.t. 
$|V| < \ell^{1/2}$ 
(property (P1)) and \No\ for every set $V$ s.t. 
$|V| \geq \ell^{1/2}$ (property (P2)).
$m_1$ returns 1 iff the total number of input attributes whose value is 1 
is at least $\ell^{1/2}$ (and otherwise 0);
whereas $m_2$ has a special set $A$ such that $|A| = \ell - \ell^{2/3}$. It returns 1 iff 
the total number of input attributes whose value is 1 is at least $\ell^{1/2}$
and there is at least one input attribute outside $A$ whose value is 1 (and otherwise 0).
\par
Note that while the cheapest safe subset for $m_1$ has cost greater than 
$\ell - \ell^{1/2}$, $m_2$ has a safe subset $A$ where the cost of $\widebar{A}$ is $\ell^{2/3}$:
When $A$ is visible, by all-0 assignment of $\widebar{A}$ the output is 0 (no 1 outside $A$)
and by all-1 assignment of $\widebar{A}$ the output is 1 (total number of 1 is $\ell - \ell^{1/2} > \ell^{1/2}$, for large enough $\ell$,
and at least one 1 outside $A$).
\par
It is easy to see that (P1) and (P2) holds for $m_1$ as before. 
Now, we consider $m_2$. 
\begin{itemize}
	\item (P1) holds for $m_2$: An all-0 $\widebar{V}$ and an all-1 $\widebar{V}$ 
	respectively imply 
	an answer of 0 and 1 independent of the assignment of $V$ or the definition of $A$
	(in the first case, number of 1 is $< \ell^{1/2}$ and in the second case the number of
	1 is $\geq \ell - \ell^{1/2} > \ell^{1/2}$ and there is one 1 outside $A$ since
	$\ell - \ell^{1/2} > \ell - \ell^{2/3}$.)
	\item (P2) holds for $m_2$ as long as $V$ is not a subset of $A$, since
	an all-1 $V$ will imply an answer of 1 independent of the assignment of
	$\widebar{V}$ and there will be one 1 outside $A$. 
	\end{itemize}
	
	 Therefore, such a query restricts the possible candidates
	of $A$, and discards at most ${\ell - \ell^{1/2} \choose {(\ell - \ell^{2/3}) - \ell^{1/2}}}$ 
	 ${\ell - \ell^{1/2} \choose {\ell - \ell^{1/2} - \ell^{2/3}}}$ candidates of $A$.
	Since there are ${\ell \choose {\ell - \ell^{2/3}}}$ possible definitions of $A$ overall, the 
	number of queries required to certify the absence of $A$ (i.e. certify that the 
	function is indeed $m_1$ and not $m_2$ with some definition of $A$) is at least
	\begin{eqnarray*}
	& & \frac{{\ell \choose {\ell - \ell^{2/3}}}}{{\ell - \ell^{1/2} \choose {\ell - \ell^{1/2} - \ell^{2/3}}}}\\
	& = & \prod_{i=0}^{\ell^{2/3}} \frac{\ell - i}{\ell - \ell^{1/2} - i}\\
	& \geq & \left(\frac{\ell}{\ell - \ell^{1/2}} \right)^{\ell^{2/3}} \\
	& = & \left(1 + \frac{\ell^{1/2}}{\ell - \ell^{1/2}} \right)^{\ell^{2/3}} \\
	& \geq & 2^{\left(\frac{\ell - \ell^{1/2}}{\ell^{1/2}}\right)\ell^{2/3}} \quad [\textit{ using } (1+\frac{1}{x})^x \geq 2] \\
	& = & 2^{\Omega(\ell^{1/6})}\\
	& = & 2^{\Omega(k^{1/6})} 
	\end{eqnarray*}
	Therefore, for a $2^{o(k^{1/6})}$-restricted algorithm \Algo\ , there always 
	remains at least one subset $A$ defining a function $m_2$ that is consistent
	with all previous answers to queries. Hence by the same argument as before,
	an algorithm $\Algo$ with $2^{o(k^{1/6})}$ calls to \safeview\ oracle, cannot distinguish these
	two cases. 
\end{proof}

}
\eat{


\begin{proof}
Assume by contradiction that such an algorithm \Algo\ exists which
uses $< 2^{\frac{k}{2}}$ oracle calls. We will build an input
function $m$ such that \Algo\ is wrong on $m$.
\par
Consider a function $m$ with $\ell$ boolean input attributes in $I$, 
and one output attribute in $O$ 
where $\ell$ is even (i.e. $k = \ell +1$).
We build $m$ gradually by following the operation of $\Algo$.
The costs of all attributes in $I$ is 1, the cost of attribute $y_1$ in $O$
is $\ell$.
\par
We fix $\Gamma = 2$ and want to decide whether there exists a safe visible subset $V$
such that  the cost of hidden subsets $\cost(\widebar{V})$ is
at most $C = \frac{\ell}{2}$.
Hence, any such set
$\widebar{V}$
can never include the output attribute
and can include at most
$\frac{\ell}{2}$ input attributes. Equivalently, we want to decide if there exists a subset $V \subseteq I$,
such that $|\widebar{V}| = \frac{\ell}{2}$,
and for \emph{all} assignments $\tup{x}$ of
$V$,
there exist two assignments $\tup{x'}, \tup{x''}$
of 
$I \setminus V$, such that $m(\tup{x, x'}) = 0$ and $m(\tup{x, x''}) = 1$
($m(\tup{x, x'})$ denotes the value of the output attribute
when attributes in $V$ take value $\tup{x}$ and attributes in 
$I \setminus V$ take value $\tup{x'}$).
\par
The constructed $m$ will have the following properties:
\begin{itemize}
\item[(P1)] the oracle answers \No\ for every subset $V$ of
input attributes s.t. $|V| > \frac{\ell}{2}$ (i.e. $|\widebar{V}| < \frac{\ell}{2}$), in other words, every visible subset
$V$ of size $> \frac{\ell}{2}$ should \emph{not} be safe,
\item[(P2)]  the oracle answers \Yes\ for every set $V$ of input attributes s.t. $|V| < \frac{\ell}{2}$  (i.e. $|\widebar{V}| > \frac{\ell}{2}$),
in other words, every visible subset
$V$ of size $< \frac{\ell}{2}$ should be safe,
\item[(P3)] the oracle answers \No\ for every set $V$ of input attributes s.t. $|V| = |\widebar{V}| = \frac{\ell}{2}$
except for \emph{at most} one
\emph{special set} $A$, $|A| = \frac{\ell}{2}$. The identity of this set will be determined by
the behavior of $\Algo$ at the end.
\end{itemize}

The function computed by $m$ will be one of the following two functions $m_1$ and $m_2$.
(again, which one of the two to use will be decided adversarially based on the behavior of $\Algo$).
\begin{itemize}
\item $m_1$ returns 1 iff the total number of input attributes whose value is 1 is at least
                  $\frac{\ell}{2}$ (and otherwise 0),
\item $m_2$ has a special set $A$. It tests if the value of all attributes in $A$ is 1, in which case it returns 1 iff the
                 total number of input attributes whose value is 1 is at least $\frac{\ell}{2} + 1$ (and otherwise 0).
                 0therwise (i.e. when not all attributes in $A$ have value 1)
                 it returns 1 iff the total number of input attributes whose value is 1 is at least
                 $\frac{\ell}{2}$ (and otherwise 0).
\end{itemize}

First, let us explain why properties (P1)-(P3) hold for these two function.\\

Consider $m_1$.
\begin{itemize}
    \item(P1) holds for $m_1$: Consider \emph{any} 
    visible subset $V$ such that $|V| > \frac{\ell}{2}$.
    Consider the assignment $\tup{x^1}$ of  $V$ when all attributes in $V$
    take value 1. We know for sure
that the input has at least $\frac{\ell}{2}$ one values, thus we also know for sure the result of the
function, i.e., \emph{there does not exist any} assignment $\tup{x'}$ of $\widebar{V}$, such that $m_1(\tup{x^1, x'}) = 0$.
Hence any visible subset $V$, $|V| > \frac{\ell}{2}$, is not safe.
\item(P2) holds for $m_1$: Consider \emph{any} visible subset $V$, where $|V| < \frac{\ell}{2}$
and consider \emph{any} assignment $\tup{x}$ of $V$. Let $\tup{x^0}$ and $\tup{x^1}$
be the assignments of hidden subset $\widebar{V}$ which assign all attributes in $\widebar{V}$
value 0 and value 1 respectively. Then $m_1(\tup{x, x^0}) = 0$ (since $|V| < \frac{\ell}{2}$
and the total number of attributes with value
1 is $< \frac{\ell}{2}$) and $m_1(\tup{x, x^1}) = 1$ (since the total number of attributes with value
1 is $\geq \frac{\ell}{2}$). Hence any subset $V$, $|V| < \frac{\ell}{2}$, is safe.
\item (P3) also holds for $m_1$. There is no such set $A$ for $m_1$ and an
argument similar to (P1) holds for any set $V$ where $|V| = \frac{\ell}{2}$.
For an all-1 assignment of $V$, the answer is always 1 and any such set $V$ is not safe.
\end{itemize}

Now consider $m_2$.
\begin{itemize}
    \item (P1) holds for $m_2$ (similar to $m_1$), whether or not
    all attributes in $A$ have value 1.
    \item (P2) holds for $m_2$ (similar to $m_1$), 
    whether or not
    all attributes in $A$ have value 1.
    \item For (P3), when $|V| = \frac{\ell}{2}$, there are several cases to consider.
        \begin{itemize}
            \item First we show that, if $V = A$, then $V$ is always a safe visible subset whether or not
            all attributes in $A$ have value 1.
                \begin{itemize}
                    \item Consider the subset $V = A$ and the assignment
                    $\tup{x}$ of $A$
                    such that all attributes in $A$ have value 1. Then $m_2(\tup{x, x^1}) = 1$
                    and $m_2(\tup{x, x^0}) = 0$ where $\tup{x^1}, \tup{x^0}$ are the assignments of $\widebar{A}$
                    where all attributes in $\widebar{A}$ have values 1 and 0 respectively. In the first case, the total
                    number of attributes with value 1 is $\ell \geq \frac{\ell}{2} + 1$ and in the second case
                    the total number of attributes with value 1 is $\frac{\ell}{2} < \frac{\ell}{2} + 1$.
                    So $V = A$ is a safe subset for $m_2$.
                    \item Again consider that $V = A$
                    and  $\tup{x}$ is an assignment of $A$ such that all attributes in $A$
                    \emph{do not} have value 1. Let $\tup{x^1}, \tup{x^0}$ be the all-1 and all-0 assignments of $\widebar{A}$
                    as before. Then the number of attributes with value 1 in $m_2(\tup{x, x^1}) \geq \frac{\ell}{2}$
                    and the number of attributes with value 1 in $m_2(\tup{x, x^0}) < \frac{\ell}{2}$
                    (since $\tup{x}$ has $< \frac{\ell}{2}$ 1-s and $\tup{x^0}$ has zero 1-s).
                    So again $V = A$ is a safe subset for $m_2$.
            \end{itemize}
        \item Now we show that if $V \neq A$, then $V$ is never a safe visible subset whether or not
        all attributes in $A$ have value 1.
            \begin{itemize}
                \item Consider $V, |V| = \frac{\ell}{2}$, $V \neq A$
                and the case when all attributes in $A$ have value 1.
                Note that $A \cup \bar{A} = V \cup \bar{V} = I$. Consider the assignment
                $\tup{x}$ of $V$ where all attributes in $V$ have value 1
                (note that this assignment is consistent with the all-1 assignment of $A$).
                Since $\widebar{V} \cap A$ is non-empty, any assignment $\tup{x'}$ of $\widebar{V}$ consistent with
                all-1 assignment of $A$ will have at least one attribute with value 1.
                Then $m_2(\tup{x, x'}) = 1$ for all assignments $\tup{x'}$ of $\widebar{V}$ in this case, since the total number of
                input attributes with value 1 is $\geq \frac{\ell}{2} + 1$. So $V$ is not safe.
                \item
                Consider $V, |V| = \frac{\ell}{2}$, $V \neq A$
                and the case when all attributes in $A$ \emph{do not have} value 1.
                Consider the assignment $\tup{x}$ of $V$ when all attributes in $V$
                get value 1.
                Since the total number
                of attributes with value 1 is $\geq \frac{\ell}{2}$, 
                $V$ is not safe also in this case.
            \end{itemize}
        \end{itemize}
\end{itemize}

Note that for $m_1$ the minimal size of a safe set of attributes in $\frac{\ell}{2}+1$, whereas for
$m_2$ it is $\frac{\ell}{2}$. Moreover, a \safeview oracle outputs the same answer for every
input subset, except exactly one subset of size $\frac{\ell}{2}$. Since there are
exponentially many subsets of size $\frac{\ell}{2}$, next we argue that
any algorithm with polynomially many oracle calls in $\ell$ (or $k = \ell+1$)
cannot distinguish between
$m_1$ and $m_2$.
\par
Assume the contradiction that there is an algorithm $\Algo$ which can decide
 with 
 $< 2^{\frac{\ell}{2}}$ calls to the \safeview oracle $O$ 
 if there exists a safe set of attributes $V$ with cost at most $C = \frac{\ell}{2}$.
To adversarially determine which function ($m_1$ or $m_2$) to use for $m$ (and in the later case, also
decide which set $A$ to use)  we simply follow the operation of the algorithm \Algo.
Whenever it calls the oracle $O$ on a set of attributes of size less of equal $\frac{\ell}{2}$, we make the oracle answer \No.
When the set is greater than $\frac{\ell}{2}$ we make the oracle answer \Yes.
Now, note that since we assumed that the algorithm uses oracle calls polynomial in $\ell$, there is at least one set $A$ of
size $\frac{\ell}{2}$ for which it didn't call the oracle (because there is an exponential (in $\ell$) number of distinct sets
of size $\frac{\ell}{2}$)\footnote{Using ${\ell \choose p} \geq (\frac{\ell}{p})^p$, when $p = \frac{\ell}{2}$,
${\ell \choose \frac{\ell}{2}} \geq (2)^\frac{\ell}{2}$}.
If the Algo answers \No\ (i.e. claims that there is no safe set of size $\frac{\ell}{2}$, we set $m$ to be function $m_2$,
with the special set $A$ being the set that has not been tested.
In this case the algorithm is wrong (a safe set of size $\frac{\ell}{2}$ does
exist).
On the other hand, if the Algo answers \Yes\ (i.e. claims that there exists a safe set of size $\frac{\ell}{2}$) then
we set $m$ to be function $m_1$, thus there is no such set, and the algorithm is wrong again.
\par
In both cases, the algorithm returns wrong answer which is contradiction to the hypothesis that such
an  algorithm making polynomially many  oracle calls exists.
\end{proof}

}

\subsection{Upper Bounds for Standalone Privacy}\label{sec:ub_standalone_app}

The lower bounds in Section~\ref{sec:lb_standalone} show that it is unlikely
to have optimal algorithms for the standalone \secureview\ problem 
with time complexity sub-exponential in $k$ and sub-linear in $N$.
Here we show that the naive algorithms for solving the \SAsecureview\  problem
have time complexity 
polynomial in $N$ and exponential in $k$.
It is important to note that the following upper bounds also hold
if our goal is to output \emph{all} safe subsets instead of deciding
whether there exists one within a cost limit. This will be useful in Section~\ref{sec:private-module}
and Section~\ref{sec:public-module} to guarantee workflow-privacy of private modules
where we will try to minimize the total cost of hidden attributes in the workflow
efficiently combining different options of hidden attributes that guarantee standalone-privacy
for the individual private modules.
\par
\begin{algorithm*}[ht]
\caption{Algorithm to find the minimum cost safe subset for a standalone module}
\begin{algorithmic}[1] \label{alg:standalone}
\FOR{every subset $V$ of $I \cup O$ such that $\cost(\widebar{V}) \leq C$}
    \FOR{every assignment
of the visible input attributes in $I \cap V$}
        \STATE{check if assignments of the hidden input attributes in $I \setminus V$
lead to at least $\frac{\Gamma}{\prod_{a \in O \setminus V} |\Delta_a|}$
different values of the visible output attributes in $O \cap V$($\Delta_a$ is the domain of attribute $a$)}
    \ENDFOR
\ENDFOR
\STATE{Output the subset $V$ having minimum cost $\cost(\widebar{V})$ satisfying the above condition.}
\end{algorithmic}
\end{algorithm*}

Let us first consider the \SAsecureview\  problem with a \safeview oracle.
Since there are at most $2^k$ possible subsets of attributes $V \subseteq I \cup O$ ($k = |I|+|O|$),
clearly $2^k$ calls to the \safeview oracle suffice to find a safe subset $V$
with minimum cost of $\widebar{V}$. 
showed in 
Theorem~\ref{thm:communication-oracle}. Hence with a \safeview oracle, the \SAsecureview\  problem
has $O(2^k)$ both communication and computation complexity.
\par
Next consider the upper bounds without a \safeview oracle. Clearly $N$ calls to the data supplier
suffices to read the entire relation $R$.
So we discuss standard computation complexity of the problem
once the relation $R$ is available (either by calls from the data supplier
or from a succinct representation).
Algorithm~\ref{alg:standalone} gives a simple algorithm to obtain a safe
subset with minimum cost. The following lemma
proves the correctness and time complexity of Algorithm~\ref{alg:standalone}.

\eat{
}
\begin{lemma}
Algorithm~\ref{alg:standalone} finds the optimal solution to the \safeview problem
when the relation $R$ of the module is given and runs in time $O(2^kN^2)$
where $k$ and $N$ are the number of attributes and the number of tuples in $R$ respectively.
\end{lemma}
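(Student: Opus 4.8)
The plan is to reduce the lemma to a single counting identity for the quantity $\Out_{\tup{x},m}$, and then read off both correctness and the running time from it. For a fixed candidate visible set $V$ and a visible input assignment $\proj{I\cap V}{\tup{x}}$, let $S(\proj{I\cap V}{\tup{x}})$ denote the set of visible output values $\{\proj{O\cap V}{\tup{t}} : \tup{t}\in R,\ \proj{I\cap V}{\tup{t}} = \proj{I\cap V}{\tup{x}}\}$. The central claim I would establish is
\[
|\Out_{\tup{x},m}| \;=\; |S(\proj{I\cap V}{\tup{x}})|\cdot\prod_{a\in O\setminus V}|\Delta_a|
\]
for every $\tup{x}\in\proj{I}{R}$. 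Granting this, $m$ is $\Gamma$-standalone-private w.r.t.\ $V$ (Definition~\ref{def:standalone-privacy}) exactly when $|S(\proj{I\cap V}{\tup{x}})| \ge \Gamma/\prod_{a\in O\setminus V}|\Delta_a|$ for every visible input, which is precisely the \safeview test performed in the inner loop of Algorithm~\ref{alg:standalone}; since the outer loop ranges over all $V$ with $\cost(\widebar{V})\le C$ and returns the cheapest $V$ passing the test, correctness of the returned subset follows immediately.

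To prove the identity I would argue the two inequalities separately. For the upper bound, if $\tup{y}\in\Out_{\tup{x},m}$ is witnessed by a world $R'\in\Worlds(R,V)$ containing $(\tup{x},\tup{y})$, then $\proj{V}{(\tup{x},\tup{y})}\in\proj{V}{R'}=\proj{V}{R}$, so $\proj{O\cap V}{\tup{y}}\in S(\proj{I\cap V}{\tup{x}})$; since each such $\tup{y}$ is determined by its visible output part together with its hidden output coordinates (ranging over at most $\prod_{a\in O\setminus V}|\Delta_a|$ values), the count is bounded above as claimed. For the lower bound I would show that every pairing of a visible output in $S(\proj{I\cap V}{\tup{x}})$ with an \emph{arbitrary} assignment of the hidden output attributes is realizable. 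Given such a target $(\tup{x},\tup{y})$, I build $R'$ by selecting, for each visible tuple $v\in\proj{V}{R}$, exactly one full completion: $(\tup{x},\tup{y})$ is used as the completion of the visible tuple $(\proj{I\cap V}{\tup{x}},\proj{O\cap V}{\tup{y}})$, and the completions sharing a common visible input are given pairwise distinct hidden-input values. Then all chosen tuples have distinct full inputs, so the dependency $I\rightarrow O$ is vacuously satisfied, and by construction $\proj{V}{R'}=\proj{V}{R}$, giving $\tup{y}\in\Out_{\tup{x},m}$.

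The step that needs the most care — and the one I expect to be the main obstacle — is ensuring the distinct-hidden-input assignment in the lower-bound construction is actually feasible. This reduces to the inequality $|S(\proj{I\cap V}{\tup{x}})|\le\prod_{a\in I\setminus V}|\Delta_a|$: the tuples of $R$ sharing a fixed visible input must have distinct full inputs (two tuples with equal full input would, by $I\rightarrow O$, be the same tuple), hence they differ on the hidden input attributes, so there are at most $\prod_{a\in I\setminus V}|\Delta_a|$ of them and therefore at most that many distinct visible outputs. This is exactly the number of available hidden-input assignments, so the required injective assignment exists, with $\tup{x}$ itself reserved for the target completion. I would flag here that the argument uses $R$ possibly being a partial function only to the extent of restricting $S$ to tuples actually present in $R$, which does not affect the reasoning.

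Finally, for the running time, the outer loop considers the $2^k$ subsets $V\subseteq I\cup O$. For a fixed $V$, the safety test amounts to grouping the $N$ tuples by their visible input $\proj{I\cap V}{\cdot}$ and, within each group, counting the distinct visible outputs $\proj{O\cap V}{\cdot}$ and comparing against the threshold $\Gamma/\prod_{a\in O\setminus V}|\Delta_a|$; this is carried out in $O(N^2)$ time by examining all pairs of tuples once. Hence each subset is processed in $O(N^2)$ time and the total running time is $O(2^kN^2)$, which establishes the lemma.
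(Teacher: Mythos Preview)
Your proof is correct and follows essentially the same approach as the paper: both rest on the counting identity that the number of candidate outputs for a given input equals the number of distinct visible-output values reachable from its visible input, multiplied by $\prod_{a\in O\setminus V}|\Delta_a|$. Your treatment is in fact more complete than the paper's---you prove both inequalities of the identity and explicitly handle the feasibility of the possible-world construction via the bound $|S(\proj{I\cap V}{\tup{x}})|\le\prod_{a\in I\setminus V}|\Delta_a|$, whereas the paper only sketches the sufficiency direction---but the underlying argument and the $2^k\cdot O(N^2)$ accounting are the same.
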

\begin{proof}
First we prove the correctness of Algorithm~\ref{alg:standalone}.
When a subset $V \subseteq I \cup O$ satisfies the condition, each of these outputs can be extended to $\prod_{a \in O \setminus V} |\Delta_a|$
different outputs by all possible assignments of the hidden output attributes in $\prod_{a \in O \setminus V}$.
Together, this leads to $\Gamma$ different outputs for every assignment of the input attributes in $I$.
The algorithm returns the minimum cost subset among all subsets which are safe, and therefore
returns the optimal solution.
\par
Next we prove the time complexity.
Recall that $\delta = \max_{a \in I \cup O} |\Delta_a|$ denotes the maximum domain size of any attribute.
The number of different visible subsets $V$ of $I \cup O$ is $2^{k}$.
There are at most $\delta^{|I \cap V|}$ different assignments of $I \cap V$, but the relation $R$
with $N$ tuples can have at most $N$ of them. For each such assignment, there are
at most $\delta^{|I \setminus V|}$ assignments of $I \setminus V$.
However, to collect these outputs, we may need to scan the entire table which may take
$O(N)$ time.
Computing and checking whether the visible
output attributes give $\frac{\Gamma}{\prod_{a \in O \setminus V} |\Delta_a|}$ different values take time polynomial in $k$.
Hence total time complexity is $O(2^k N^2)$.
\par
The above algorithm can be implemented by standard SQL queries by going over all
possible subsets of attributes $V$, using a ``GROUP BY'' query on $I \cap V$,
and then checking if ``COUNT'' of every resulting tuples lead to the specified
number.
\end{proof}

This also shows that access to a \safeview\ 
oracle can improve the time complexity significantly ($2^k$ oracle calls suffice). 
The correctness of the algorithm
proved in the above theorem can be illustrated with Figure~\ref{fig:m1'}. Here $I = \{a_1, a_2\}$, $O = \{a_3, a_4, a_5\}$
and $V = \{a_1, a_3, a_5\}$. Consider assignment 0 to the visible input attribute $a_1$ in $I \cap V$.
The assignment of 0 to hidden input attribute $a_2 \in I \setminus V$ gives value $(0, 1)$ to visible output attributes
$a_3, a_5$ in $O \cap V$, whereas the assignment of 1 to $a_2$ gives $(1, 0)$.
For $\Gamma = 4$, we have $\frac{\Gamma}{|\Delta_{a_2}|} = \frac{4}{2} = 2$ different values of the visible output
attributes $a_3, a_5$ (here $|\Delta_{a_2}| = |\{0, 1\}| = 2$).
Each of these outputs can be extended to two different values by 0 and 1 assignment of the hidden output attribute $a_4$ in $O \setminus V$.
Hence each input $(0, 0)$ and $(0, 1)$ can be mapped to one of the four values $(0, 1, 0), (0, 1, 1), (1, 1, 0)$ and $(1, 1, 1)$.
The same can be verified for an assignment of 1 to the visible input attribute $a_1$ as well.
\par
\eat{
The time complexity analysis of Algorithm~\ref{alg:standalone} can be improved
to $O(2^{|I|}N^2)$ from $O(2^kN^2)$ by the same analysis. 
Further, when the relation $R$ contains all executions of the module $m$, then
$N$ is exponential in $|I|$. Hence 
the upper bound of $O(N) + O(2^kN^2)$ to get the relation $R$ from the data supplier (if needed)
and then to solve the standalone \secureview\ problem by Algorithm~\ref{alg:standalone}
can be polynomial in the lower bounds shown in 
Theorem~\ref{thm:communication-no-oracle} and \ref{thm:computation-no-oracle}.
}
\par
\eat{
//////////// SUDEEPA - UNNECESSARY, EXPLAINED IN THE MAIN BODY.
However, the number of input and output attributes $k = |I| + |O|$
of individual modules are expected to be small for all practical purposes.
In the following sections we will characterize workflow privacy of modules
in terms of their standalone privacy.
Even if we have access to a \safeview oracle for a workflow, total number of attributes
can be $\Omega(nk)$ leading to $O(2^{nk})$ oracle calls.
But if we have access to \safeview oracle for individual modules, using
our characterization the time complexity to find a feasible solution is
$O(n2^k)$, which is exponentially better in $n$.

\scream{This should be explained in a better way somewhere-- we should say that we can define
all the models defined here for the workflow, but the number of attributes in the workflow
can be $nk$, leading to $2^{nk}$ oracle calls, whereas here we need only $n2^k$.}
}

\section{Proofs from Section~4} 
\label{sec:app_proofs_private-module}

\subsection{Proof of Proposition~\ref{prop:reduced_worlds}}
We construct a simple workflow with two modules $m_1, m_2$ joined
back to back as a chain. Both $m_1, m_2$ are one-one functions with
$k$ boolean inputs and $k$ boolean outputs (for example, assume that
$m_1$ is an identity function, whereas $m_2$ reverses the values of
its $k$ inputs). The module $m_1$ gets initial input attribute set
$I_1$, produces $O_1 = I_2$ which is fed to the module $m_2$ as
input, and $m_2$ produces final attribute set $O_2$. Let $V_1$ be an
arbitrary subset of $O_1$ such that $|\widebar{V_1}| = \log \Gamma$
(for simplicity, we assume that $\Gamma$ is a power of 2). It can be
easily verified that, $m_1$ as a standalone module is
$\Gamma$-standalone-private w.r.t. visible attributes $V_1$ and both
$m_1, m_2$ are $\Gamma$-workflow-private w.r.t. visible attributes
$V_1$ (since $m_1, m_2$ are one-one modules).
\par
Next we discuss how the one-one nature of $m_1$ and $m_2$
restricts the size of $\Worlds(R, V_1)$ compared to that of
$\Worlds(R_1, V_1)$. Since both $m_1$ and $m_2$ are one-one
functions, the workflow $W$ also computes a one-one function. Hence
any relation $S$ in $\Worlds(R, V_1)$ has to compute a one-one
function as well. But when $m_1$ was standalone, any This in turn
implies that the projection $\proj{I_1 \cup O_1}{S}$ on  $I_1 \cup
O_1$ for any such relation $S$ has to be one-one as well, otherwise
$S$ cannot compute a one-one function ($S$ has to satisfy the
functional dependencies $I_1 \rightarrow O_1$ and $I_2 \rightarrow
O_2$). However, both $I_1$ and $O_2$ are visible and $S \in
\Worlds(R, V_1)$, i.e., $\proj{V_1}{S} = \proj{V_1}{R}$. Therefore
fixing the attribute values in $\proj{I_1 \cup O_1}{S}$ also fixes
the relation $S$. Hence the number of relations in $\Worlds(R, V_1)$
is exactly the same as the number of relations $S'$ over $I_1 \cup
O_1$ such that (1) $S'$ computes a one-one function from $I_1$ to
$O_1$, and, (2) $\proj{(I_1 \cup O_1) \cap V_1}{S'} = \proj{(I_1
\cup O_1) \cap V_1}{R_1}$. On the other hand, $\Worlds(R_1, V_1)$
will be all possible relations $S'$ on $I_i \cup O_i$ such that only
(2) holds.
\par
Let us first exactly compute $|\Worlds(R_1, V_1)|$.
Given an input to $m_1$, the visible output bits in $V_1$ are fixed;
however, the hidden output bits in $\widebar{V_1}$
can have arbitrary values. Since $|\widebar{V_1}| = \log \Gamma$,
any input to $m_1$ can be mapped to one of $\Gamma$ 
different outputs. There are $2^k$ different inputs to $m_1$, and any relation
$S' \in \Worlds(R_1, V_1)$
is an arbitrary combination of the mappings for individual inputs.
Hence $|\Worlds(R_1, V_1)| = \Gamma^{2^{k}}$.
\par
Next we compute  $|\Worlds(R, V_1)|$ which is the same as the number of
one-one mappings for the module $m_1$ with the same values of the visible bits.
Let us partition the set of $2^k$ different values of initial inputs to $m_1$
into $2^k/\Gamma$ groups, where all $\Gamma$ initial inputs in a
group produce the same values of visible intermediate attributes $V_1$.
Any relation $S \in \Worlds(R_1, V_1)$
has to map the input tuples in each such
group to $\Gamma$ \emph{distinct} intermediate tuples.
Hence $S$ must permute the $\Gamma$
intermediate tuples corresponding to a group of $\Gamma$
input tuples.
\par
Thus, the total number of relations in $\Worlds(R, V_1)$
is $(\Gamma!)^{2^k/\Gamma}\simeq ((2\pi \Gamma)^{1/2\Gamma}(\Gamma/e))^{2^k}$ by Stirling's approximation
(the input tuples in a group can map to one of $\Gamma!$ permutations, there are $2^k/\Gamma$ groups
which can map independently).
Hence
the ratio of $|\Worlds(R, V_1)|$ and $|\Worlds(R_1, V_1)|$
is
$\left(\frac{(2\pi \Gamma)^{1/2\Gamma}}{e}\right)^{2^k} < 1.4^{-2^k}$
for any $\Gamma \geq 2$
\footnote{For $x > 1$, $x^{1/x}$ is a decreasing function and
$\frac{e}{(\pi x)^{1/x}} \geq 1.4$}.

\subsection{Proof of Lemma~\ref{lem:out-x}}
\begin{proof}
If $\tup{y} \in \Out_{x, m_i}$ w.r.t. visible attributes $V_i$, then from Definition~\ref{def:standalone-privacy},
\begin{equation}
\exists R' \in \Worlds(R, {V_i}),~~\exists t' \in R'~~ s.t~~
  \tup{x} = \proj{I_i}{\tup{t'}} \wedge \tup{y}=\proj{O_i}{\tup{t'}}
  \end{equation}
  Further, from Definition~\ref{def:pos-worlds-standalone}, $R' \in \Worlds(R, {V_i})$
  only if $\proj{V_i}{R_i} = \proj{V_i}{R'}$. Hence there must exist a tuple $\tup{t} \in R_i$
  such that
  \begin{equation}
  \proj{V_i}{\tup{t}} = \proj{V_i}{\tup{t'}} \label{equn:out-x-1}
  \end{equation}
  Let $\tup{x'} = \proj{I_i}{\tup{t'}}$ and $\tup{y'} = \proj{O_i}{\tup{t'}}$, i.e. $\tup{y'} = m_i(\tup{x'})$.
  Then by definition of
  $\tup{x}, \tup{y}, \tup{x'}, \tup{y'}$ and from (\ref{equn:out-x-1}), $\proj{V_i \cap I_i}{\tup{x}} = \proj{V_i \cap I_i}{\tup{x'}}$
  and $\proj{V \cap O_i}{\tup{y}} = \proj{V \cap O_i}{\tup{y'}}$.
\end{proof}

\subsection{Proof of Lemma~\ref{lem:main-private}}

First we introduce some additional notations.
Recall that 
the relation $R$ for workflow $W$ is defined on the attribute set (or vector) $A$.
For a tuple $\tup{x}$, 
we will use $\tup{x}(Q)$ to denote that the tuple
$\tup{x}$ is defined on the attribute subset $Q \subseteq A$;
For an attribute $a \in Q$, $x[a]$ will denote the value of the attribute $a$ in $\tup{x}$,
i.e. $x[a] = \proj{a}{\tup{x}}$.
\par
Let $\tup{x}(P)$ be a tuple defined on an arbitrary attribute subset $P \subseteq A$
and $\tup{p}(Q), \tup{q}(Q)$ be two tuples defined on another arbitrary subset $Q \subseteq A$. Then,
the tuple $\tup{y} = \Flip_{\tup{p}, \tup{q}}(\tup{x})$ defined on attribute subset $P$ is defined as
\begin{displaymath}
y[a] = \left\{
    \begin{array}{ll}
        q[a] & \text{if } a \in Q \text{ and } x[a] = p[a]\\
        p[a] & \text{if } a \in Q \text{ and } x[a] = q[a]\\
        x[a] & \text{otherwise.}
    \end{array}
\right.
\end{displaymath}
Intuitively, if the input tuple $\tup{x}$ shares the same value of some common attribute $a \in P \cap Q$
with that of $\tup{p}$ (i.e. $x[a] = p[a]$),
the flip operation replaces the attribute value $x[a]$ by $q[a]$ in $\tup{x}$, whereas, if $x[a] = q[a]$,
it replaces the value $x[a]$ by $p[a]$. If $a \in P \setminus Q$, or, if  for some $a \in P \cap Q$, $x[a] \neq p[a]$
and $x[a] \neq q[a]$, $x[a]$ remains unchanged. If $x[a] = p[a] = q[a]$, then also the value of $x[a]$
remains unchanged.
\par
It is easy to see that $\Flip_{\tup{p, q}}(\Flip_{\tup{p, q}}(\tup{x})) = \tup{x}$. In the proof of the
lemma, we will also use the notion of \emph{function flipping}, which first flips the input on $\tup{p, q}$,
then applies the function on the flipped input, and then flips the output again on $\tup{p, q}$.
The formal definition is as follows.
\begin{definition}
Consider a module $m$ 
mapping attributes in $I$ to attributes in $O$.
Let $\tup{p}(P), \tup{q}(P)$ be two tuples defined on attribute subset $P \subseteq A$.
Then,
$\forall \tup{x}(X)$ defined on $X$, $\Flip_{m, \tup{p, q}}(\tup{x}) = \Flip_{\tup{p, q}}(m(\Flip_{\tup{p, q}}(\tup{x})))$.
\end{definition}
 Now we complete the proof
of Lemma~\ref{lem:main-private}.\\

\medskip
\noindent
\textbf{Proof of Lemma~\ref{lem:main-private}.}
\begin{proof}
If a module $m_i$ is $\Gamma$-workflow-private w.r.t. visible attributes $V_i$, then from
Proposition~\ref{prop:superset},
$m_i$ is also
$\Gamma$-workflow-private w.r.t. any $V_i' \subseteq V_i$ (or equivalently, $\widebar{V_i'} \supseteq \widebar{V_i}$).
Hence we will prove Lemma~\ref{lem:main-private} for $V = V_i$.
\par
Consider module $m_i$ with relation $R_i$, input tuple $\tup{x}$ and visible attribute subset $V_i$
 as stated in Lemma~\ref{lem:main-private}.
Let $\tup{y} \in \Out_{x, m_i}$.
We will prove $\tup{y} \in \Out_{x, W}$, by showing the existence of a relation $R' \in \Worlds(R, V)$ and a tuple $\tup{t'} \in R'$
such that $\tup{x} = \proj{I_i}{\tup{t'}} \wedge \tup{y}=\proj{O_i}{\tup{t'}}$ (ref. Definition~\ref{def:workflow-privacy}).
\par
Since $\tup{y} \in \Out_{x, m_i}$, by Lemma~\ref{lem:out-x},
there are $\tup{x'} \in \proj{V_i \cap I_i}{R_i}$, $\tup{y'} = m_i(\tup{x'})$ such that
\begin{equation}
\proj{V_i \cap I_i}{\tup{x}} = \proj{V_i \cap I_i}{\tup{x'}},
\proj{V_i \cap O_i}{\tup{y}} = \proj{V_i \cap O_i}{\tup{y'}} \label{equn:3}
\end{equation}
%
Let $\tup{p}(I_i \cup O_i), \tup{q}(I_i \cup O_i)$ 
be two tuples on $I_i \cup O_i$ where
\begin{displaymath}
p[\ell] = \left\{
    \begin{array}{ll}
        x[\ell] & \text{if } \ell \in I_i\\
        y[\ell] & \text{if } \ell \in O_i
    \end{array}
\right.\text{and }
q[\ell] = \left\{
    \begin{array}{ll}
        x'[\ell] & \text{if } \ell \in I_i\\
        y'[\ell] & \text{if } \ell \in O_i.
    \end{array}
\right.
\end{displaymath}
(Recall that $I_i \cap O_i = \phi$). Hence $\Flip_{\tup{p, q}}(\tup{x'}) = \tup{x}$
and $\Flip_{\tup{p, q}}(\tup{y'}) = \tup{y}$. It should be notes that $\tup{x}, \tup{x'}$
and $\tup{y}, \tup{y'}$ have the same values on visible attribute
subsets $I_i \cap V$ and $O_i \cap V$ respectively. So $\tup{p}$ and $\tup{q}$ only differ
on the hidden attributes. Therefore, for any two tuples $\tup{w}, \tup{z}$, if
$\Flip_{\tup{p, q}}(\tup{w}) = \tup{z}$, then $\tup{w}$ and $\tup{z}$ will also only differ
on the hidden attributes and their visible attribute values are the same.
\par
For each $j \in [1, n]$, we define
$g_j = \Flip_{m_j, \tup{p}, \tup{q}}$. 
Then the desired relation $R' \in \Worlds(R, {V})$ is obtained by collecting executions
of the workflow where every module $m_i$ is replaced by module $g_i$, $i \in [1, n]$.
So we need to show that (i) there is a tuple $\tup{t} \in S$, such that $\proj{I_i}{\tup{t}} = \tup{x}$
and $\proj{O_i}{\tup{t}} = \tup{y}$, and, (ii) $R' \in \Worlds(R, {V})$.

\par
\textbf{(i):~~}
To show the existence of such a tuple $\tup{t} \in R'$, it suffices to show that $g_i(\tup{x}) = \tup{y}$,
since then for any tuple $\tup{t} \in R'$, if $\proj{I_i}{\tup{t}} = \tup{x}$, then $\proj{O_i}{\tup{t}} = \tup{y}$.
We claim that $g_i$ maps $\tup{x}$ to $\tup{y}$ as desired. This holds since
$g_i(\tup{x}) =$ $ \Flip_{m_i, \tup{p, q}}(\tup{x})$ $= \Flip_{\tup{p, q}}(m_i(\Flip_{\tup{p, q}}(\tup{x})))$
$ =  \Flip_{\tup{p, q}}(m_i(\tup{x'}))$ $= \Flip_{\tup{p, q}}(\tup{y'}) = \tup{y}$.
\par
\textbf{(ii):~~}
Since every $g_j$ is a function, $R'$ satisfies all functional dependencies $I_i \rightarrow O_i$, $i \in [1, n]$.
Hence to prove $R' \in \Worlds(R, V)$,
it suffices to show that, for the same \emph{initial inputs} in $R$ and $R'$, the values of
all the visible attributes in $R$ and $R'$ are the same.
Let $I_0$ be the set of initial inputs to workflow $W$.
We need to show that for any two tuples $\tup{t} \in R$ and $\tup{t'} \in R'$ on attribute set $A$,
if $\proj{I_0}{\tup{t}} = \proj{I_0}{\tup{t'}}$, then $\tup{t}, \tup{t'}$ also have the same
values on the visible attributes $V$, i.e.,
$\proj{V}{\tup{t}} = \proj{V}{\tup{t'}}$.
\par
Let us fix any arbitrary tuple $\tup{p}$ on input attributes $I_0$.
Let us assume, wlog., that the modules $m_1, \cdots, m_n$ (and corresponding $g_1, \cdots, g_n$)
are ordered in a topological sorted order in the DAG $W$.
Since $I_0$ is essentially the key attributes of relation $R$ or $R'$,
there are two unique tuples $\tup{t} \in R$ and $\tup{t'} \in R'$
such that $\proj{I_0}{\tup{t}} = \proj{I_0}{\tup{t'}} = \tup{p}$.
Note that any intermediate or final attribute $a \in A \setminus I_0$ belongs to $O_j$, for a unique $j \in [1, n]$
(since for $j \neq \ell$, $O_j \cap O_{\ell} = \phi$).
We prove by induction on $j$ that the values of the visible attributes $O_j \cap V$
are the same for $\tup{t}$ and $\tup{t'}$ for every $j$. Together with the fact that the values of the attributes in $I_0$
are the same in $\tup{t}, \tup{t'}$, this shows that $\proj{V}{\tup{t}} = \proj{V}{\tup{t'}}$.
\par
Let $\tup{c_{j, f}}, \tup{c_{j, g}}$ be the values of input attributes $I_j$ and $\tup{d_{j, f}}, \tup{d_{j, g}}$
be the values of output attributes $O_j$ of module $m_j$ in $\tup{t} \in R$ and $\tup{t'} \in R'$
respectively on initial input attributes $\tup{p}$
(i.e. $\tup{c_{j, f}} = \proj{I_j}{\tup{t}}$, $\tup{c_{j, g}} = \proj{I_j}{\tup{t'}}$,
$\tup{d_{j, f}} = \proj{O_j}{\tup{t}}$ and $\tup{d_{j, g}} = \proj{O_j}{\tup{t'}}$).
Then, we prove that
$\tup{d_{j, g}} = \Flip_{\tup{p, q}}(\tup{d_{j, f}})$.

From (\ref{equn:3}), $\tup{x}, \tup{x'}$, 
and $\tup{y}, \tup{y'}$ 
have the same values of the visible attributes.
Therefore the tuples $\tup{p}$ and $\tup{q}$ only differ in hidden attributes.
Then if the above claim is true, for every $j$, $\tup{d_{j, g}}$ and $\tup{d_{j, f}}$
are the same on the visible attributes $O_i \cap V$.
Equivalently, $\tup{t}$ and $\tup{t'}$ have the same values of visible attributes $V$
as desired.
\par
Note
that if the inductive hypothesis holds for all $j' < j$,
then $\tup{c_{j, g}} = \Flip_{\tup{p, q}}(\tup{c_{j, f}})$, since
the modules are listed in a topological order. Thus,
\begin{eqnarray*}
\tup{d_{j, g}} & = & g_j(\tup{c_{j, g}}) = \Flip_{m_j, \tup{p, q}}(\Flip_{\tup{p, q}}(\tup{c_{j, f}}))\\
& = & \Flip_{\tup{p, q}}(m_j(\Flip_{\tup{p, q}}(\Flip_{\tup{p, q}}(\tup{c_{j, f}}))))\\
& = & \Flip_{\tup{p, q}}(m_j(\tup{c_{j, f}})) = \Flip_{\tup{p, q}}(\tup{d_{j, f}}).
\end{eqnarray*}

Hence the hypothesis $\tup{d_{j, g}} = \Flip_{\tup{p, q}}(\tup{d_{j, f}})$ also holds for module $m_j$.
This completes the proof of this lemma. 
\end{proof}
\subsection{Proof of Theorem~\ref{thm:cardinality-private}}\label{sec:cardinality-private}
In \secureview problem with cardinality constraint, as stated in Section~\ref{sec:secureview-private},
every module $m_i$, $i \in [1, n]$, has a requirement list of pair of numbers 
$L_i = \{(\alpha^j_{i}, \beta^j_{i}): \alpha^j_{i} \leq |I_i|, \beta^j_{i} \leq |O_i|, j \in [1, \ell_i]\}$.
The goal is to select a safe subset of attributes $V$ with minimum cost $\cost(\widebar{V})$,
such that for every $i\in [1, n]$,
at least $\alpha^j_{i}$ input attributes and $\beta^j_{i}$ output attributes
of module $m_i$ are hidden for some $j \in [1, \ell_i]$. 
\par
In this section, we prove Theorem~\ref{thm:cardinality-private}.
First we give an $O(\log n)$-approximation algorithm,
and then show that 
this problem is
$\Omega(\log n)$-hard under standard complexity-theoretic 
assumptions, even if the cost of hiding each data is identical,
and the requirement list of every module in the workflow 
contains exactly one pair of numbers with values 0 or 1.


\subsubsection{$O(\log n)$-Approximation Algorithm}

Our algorithm is based on rounding the fractional relaxation (called the LP relaxation) 
of the integer linear program (IP) for this problem
presented in Figure~\ref{fig:ip}. 

\eat{

\begin{figure}[ht]
\centering
\small
Minimize $\sum_{b \in A} c_b x_b\quad$ subject to
\begin{eqnarray}
\sum_{j = 1}^{\ell_i} r_{ij} & \geq & 1\quad\forall i \in [1, n]\label{equn:IP1-1}\\
\sum_{b \in I_i} y_{bij} & \geq & r_{ij} \alpha^j_{i}\quad\forall i \in [1, n],~\forall j \in [1, \ell_i] \label{equn:IP1-6}\\%
\sum_{b \in O_i}z_{bij} & \geq & r_{ij} \beta^j_{i}\quad\forall i \in [1, n],~\forall j \in [1, \ell_i]\label{equn:IP1-7}\\
\sum_{j = 1}^{\ell_i} y_{bij} & \leq & x_b,\quad\forall i \in [1, n], \forall b \in I_i\label{equn:IP1-2}\\
\sum_{j = 1}^{\ell_i} z_{bij} & \leq & x_b,\quad\forall i \in [1, n], \forall b \in O_i\label{equn:IP1-3}\\
y_{bij} & \leq & r_{ij},\quad\forall i \in [1, n],~\forall j \in [1, \ell_i],~\forall b \in I_i\nonumber\\
&&\label{equn:IP1-4}\\
z_{bij} & \leq & r_{ij},\quad\forall i \in [1, n],~\forall j \in [1, \ell_i],~\forall b \in O_i\nonumber\\
&&\label{equn:IP1-5}\\
x_b, r_{ij}, y_{bij}, z_{bij} & \in & \{0, 1\} \label{equn:IP1-8}
\end{eqnarray}
\caption{An IP for the \secureview problem  with cardinality constraint}
\label{fig:ip}
\end{figure}

Recall that each module $m_i$ has a list 
$L_i = \{(\alpha^j_{i}, \beta^j_{i}): j\in [1, \ell_i]\}$, 
a feasible solution must ensure that for each $i\in [1, n]$,
there exists a 
$j \in [1, \ell_i]$ such that at least $\alpha^j_{i}$ input data and $\beta^j_{i}$ output data
of $m_i$ are hidden. 

In this IP, $x_b = 1$ if data $b$ is hidden, and
$r_{ij}=1$ if at least $\alpha^j_{i}$ input data and $\beta^j_{i}$
output data of module $m_i$ are hidden. Then, $y_{bij} = 1$ (resp.,
$z_{bij} = 1$) if both $r_{ij} = 1$ and $x_b = 1$, i.e. if data
$b$ contributes to satisfying the input requirement $\alpha^j_{i}$
(resp., output requirement $\beta^j_{i}$) of module $m_i$. 
Let us first verify that the IP indeed
solves the \secureview problem with cardinality constraints.
For each module $m_i$, constraint~(\ref{equn:IP1-1}) ensures that
for some $j\in [1, \ell_i]$, $r_{ij} = 1$. In conjunction with
constraints~(\ref{equn:IP1-6}) and (\ref{equn:IP1-7}), this ensures that
for some $j\in [1, \ell_i]$, (i) at least $\alpha^j_{i}$
input data of $m_i$ have $y_{bij} = 1$ and (ii) at least $\beta^j_{i}$ 
output data of $m_i$ have $z_{bij} = 1$. But, constraint~(\ref{equn:IP1-2})
(resp., constraint~(\ref{equn:IP1-3})) requires that whenever $y_{bij} = 1$
(resp., $z_{bij} = 1$), data $b$ be hidden, i.e. $x_b = 1$, and
a cost of $c_b$ be added to the objective. 
Thus the set of hidden data satisfy the privacy requirement of each module $m_i$
and the value of the objective is the cost of the hidden data.
Note that constraints~(\ref{equn:IP1-4}) and (\ref{equn:IP1-5})
are also satisfied since $y_{bij}$
and $z_{bij}$ are 0 whenever $r_{ij} = 0$. Thus, the IP represents the
\secureview problem with cardinality constraints.

}

One can write a simpler IP for this problem, 
where the summations are removed from constraints~(\ref{equn:IP1-2})
and (\ref{equn:IP1-3}), and constraints~(\ref{equn:IP1-4})
and (\ref{equn:IP1-5}) are removed altogether. To see the 
necessity of these constraints, consider the LP relaxation
of the IP, obtained by replacing  constraint~(\ref{equn:IP1-8}) with
$x_b, r_{ij},  y_{bij}, z_{bij} \in [0, 1]$. 

Suppose constraints~(\ref{equn:IP1-4}) and 
(\ref{equn:IP1-5}) were missing from the IP, and therefore
from the LP as well.
For a particular $i \in [1, n]$, it is possible that 
a fractional solution to the LP has $r_{ij} = 1/2$ for
two distinct values $j_1$ and $j_2$ of $j$, where
$\alpha_{ij_1} > \alpha_{ij_2}$ and 
$\beta_{ij_1} < \beta_{ij_2}$. 
But constraint~(\ref{equn:IP1-6})
(resp., constraint~(\ref{equn:IP1-7})) can now be satisfied 
by setting $y_{bij_1} = y_{bij_2} = 1$ (resp., 
$z_{bij_1} = z_{bij_2} = 1$) for $\alpha_{ij_1}/2$ 
input data (resp., $\beta_{ij_2}/2$ output data).
However, $(\alpha_{ij_1}/2, \beta_{ij_2}/2)$
might not satisfy the privacy requirement for $i$, 
forcing an integral solution to hide some data $b$ with 
$x_b = 0$. This will lead to an unbounded integrality 
gap.

Now, suppose constraints~(\ref{equn:IP1-6}) and 
(\ref{equn:IP1-7}) did not have the summation.
For a particular $i \in [1, n]$, it is possible that 
a fractional solution to the LP has $r_{ij} = 1/\ell_i$ for
all $j \in [1, \ell_i]$. Constraint~(\ref{equn:IP1-6})
(resp., constraint~(\ref{equn:IP1-7})) can then be satisfied 
by setting $y_{bij} = 1/\ell_i$ for all $1\leq \ell_i$, 
for $\max_j \{\alpha^j_{i}\}$ distinct
input data (resp., $\max_j \{\beta^j_{i}\}$ 
distinct output data).
Correspondingly, $x_b = 1/\ell_i$ for those data $b$.
If all the $\alpha^j_{i}$s and $\beta^j_{i}$s for 
different $j \in [1, \ell_i]$ have similar values, it
would mean that we are satisfying the privacy constraint
for $m_i$ paying an $\ell_i$ fraction of the cost of an
integral solution. This can be formalized to yield an
integrality gap of $\max_i\{\ell_i\}$, which could be
$n$. Introducing the summation in the LP precludes this
possibility.

\eat{
We round the fractional solution to the LP relaxation using 
Algorithm~\ref{algo:lp1_round}. For each $j \in [1, \ell_i]$,
let $I^{min}_{ij}$ and $O^{min}_{ij}$ be the
$\alpha^j_{i}$ input and $\beta^j_{i}$ output data of $m_i$
with minimum cost. Then, $B^{min}_i$ represents 
$I^{min}_{ij}\cup O^{min}_{ij}$ of minimum cost.

\begin{algorithm}[h!t]
\caption{Rounding algorithm of LP relaxation of the IP given in Figure~\ref{fig:ip},~~~~~~~~~
\textbf{Input}: An optimal fractional solution $\{x_b | b \in A\}$,~~~~~~~~~
\textbf{Output}: A safe subset $V$ for the workflow $W$ for privacy requirement $\Gamma$
}
\begin{algorithmic}[1] \label{algo:lp1_round}
\STATE{Initialize $B = \phi$.}
\STATE{For each attribute $b \in A$ ($A$ is the set of all attributes in $W$), 
include $b$ in $B$ with probability $\min\{1, 16x_b\log n\}$.}\label{step:rounding}
\STATE{For each module $m_i$ whose privacy requirement is not satisfied by $B$, 
add $B^{min}_i$ to $B$.}\label{step:derandomize}
\STATE{Return $V = A \setminus B$ as the safe visible attribute.}
\end{algorithmic}
\end{algorithm}

}

\noindent{\bf Analysis.} 
We can assume wlog that the requirement list $L_i$ for each module $m_i$ 
is non-redundant, i.e. for all $1\leq j_1\not= j_2 \leq \ell_i$, either
$\alpha_{ij_1} > \alpha_{ij_2}$ and $\beta_{ij_2} < \beta_{ij_1}$, or
$\alpha_{ij_1} < \alpha_{ij_2}$ and $\beta_{ij_2} > \beta_{ij_1}$.
We can thus assume that for each module $m_i$, 
the list $L_i$ is sorted in increasing order 
on the values of $\alpha^j_{i}$ and in decreasing order on the values
of $\beta^j_{i}$.
The following lemma shows that step~\ref{step:rounding} satisfies the 
privacy requirement of each module with high probability.\\ 

\medskip
\noindent

\textsc{Lemma~\ref{lem:lp1_round}}
\emph{
Let $m_i$  be any module in workflow $W$. Then with probability at least $1 - 2/n^2$,
there exists a $j \in [1, \ell_i]$ such that 
$|I^h_i| \geq \alpha^j_{i}$ and $|O^h_i| \geq \beta^j_{i}$.
}

\medskip
\noindent
\begin{proof}
Given a fractional solution to the LP relaxation, 
let $p \in [1, \ell_i]$ be the index corresponding 
to the \emph{median} $(\alpha^j_{i}, \beta^j_{i})$,
satisfying $\sum_{j = 1}^{p - 1} r_{ij} < 1/2$ and 
$\sum_{j = 1}^{p} r_{ij} \geq 1/2$. 
We will show that after step~\ref{step:rounding},
at least 
$\alpha_{ip}$ input data and $\beta_{ip}$ output data
is hidden with probability at least $1-2/n^2$ 
for module $m_i$.

We partition the set of data $A$ 
into two sets: the set of data deterministically 
included in $B$, $B^{det}= \{b: x_b \geq 1/16\log n\}$,
and the set of data probabilistically rounded,
$B^{prob} = A \setminus B^{det}$. Also, let 
$B^{round} = B \setminus B^{det}$ be the set of 
data that are actually hidden among $B^{prob}$. For each module $m_i$, 
let $I^{det}_i = B^{det} \cap I_i$ and
$O^{det}_i = B^{det} \cap O_i$ be the set of
hidden input and output data in $B^{det}$ respectively. 
Let the size of these sets be $\alpha^{det}_i = |I^{det}_i|$
and $\beta^{det}_i = |O^{det}_i|$. Also, let 
$I^{prob}_i = B^{prob}\cap I_i$ and
$O^{prob}_i = B^{prob}\cap O_i$. Finally, let 
$I^{round}_i = B^{round}\cap I_i$ and
$O^{round}_i = B^{round}\cap O_i$. 
We show that for any module $m_i$,
$|I^{round}_i|\geq \alpha_{ip} -\alpha^{det}_i$
and $|O^{round}_i|\geq \beta_{ip} -\beta^{det}_i$ 
with probability at least $1 - 1/n^2$.

First we show that %
$\sum_{b \in I^{prob}_i} x_b \geq (\alpha_{im} -\alpha^{det}_i)/2$. 
Constraint~(\ref{equn:IP1-6}) implies that $\sum_{b \in I_i} y_{bij} \geq r_{ij} \alpha^j_{i}$, while constraint~(\ref{equn:IP1-4}) ensures that $\sum_{b \in I^{det}_i} y_{bij} \leq r_{ij} \alpha^{det}_i$. Combining these, we have
\begin{equation}
\sum_{b\in I^{prob}_i} y_{bij} \geq r_{ij}(\alpha^j_{i}-\alpha^{det}_i).\label{equn:analysis}
\end{equation} 
From constraint~(\ref{equn:IP1-2}), we have
\begin{equation*}
\sum_{b \in I^{prob}_i} x_b 
\geq \sum_{b \in I^{prob}_i} \sum_{j = 1}^{\ell_i} y_{bij}
\geq \sum_{j = p}^{\ell_i} \sum_{b \in I^{prob}_i} y_{bij}.
\end{equation*}
Then, from Eqn.~(\ref{equn:analysis}),
\begin{equation*}
\sum_{b \in I^{prob}_i} x_b 
\geq \sum_{j = p}^{\ell_i} r_{ij} (\alpha^j_{i} - \alpha^{det}_i)
\geq (\alpha_{ip} - \alpha^{det}_i) \sum_{j = p}^{\ell_i} r_{ij}.
\end{equation*}
Finally, using constraint~(\ref{equn:IP1-1}), we conclude that
\begin{equation}
\sum_{b \in I^{prob}_i} x_b \geq \frac{\alpha_{ip} - \alpha^{det}_i}{2}.\label{equn:alpha}
\end{equation}
Similarly, since $\sum_{j = 1}^p r_{ij} \geq 1/2$
and the list $L_i$ is sorted in decreasing order of $\beta^j_{i}$, it follows that
\begin{equation}
\sum_{b \in O^{prob}_i} x_b \geq \frac{\beta_{ip} -\beta^{det}_i}{2}. \label{equn:beta}
\end{equation}
Next we show that
$|I^{round}_i| \geq \alpha_{ip} -\alpha^{det}_i$ with probability
$\geq 1 - 1/n^2$.
Each $b \in B^{prob}$ is independently included in $B^{round}$ 
with probability $16x_b \log n$. 
Hence, by Eqn.~(\ref{equn:alpha}), 
\begin{equation*}
E[|I^{round}_i|] = \sum_{b \in I^{prob}_i} 16x_b \log n
\geq 8(\alpha_{ip} - \alpha^{det}_i) \log n.
\end{equation*}  
Using Chernoff bound\footnote{If $X$ is sum of 
independent boolean random variables with $E[X] = \mu$, then 
$\Pr[X \leq \mu(1 - \epsilon)] \leq e^{\frac{-\mu \epsilon^2}{2}}$(see, for instance, \cite{MR}).},  
$|D_{round} \cap I_i| \leq \alpha_{ip} - \alpha^{det}_i$ 
with probability at most $1/n^2$. Similarly,
using Eqn.~(\ref{equn:beta}), $|O^{round}_i| \leq \beta_{ip} - \beta^{det}_i$ 
with probability at most $1/n^2$.
The lemma follows by using union bound over the failure probabilities.
%
\end{proof}
We get the following corollary from Lemma~\ref{lem:lp1_round}, which proves
the approximation result in Theorem~\ref{thm:cardinality-private}.
\begin{corollary}
Algorithm~\ref{algo:lp1_round} 
gives a feasible safe subset $V$ with expected cost $O(\log n)$ 
times the optimal. 
\end{corollary}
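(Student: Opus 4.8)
The plan is to separate the two assertions in the corollary --- feasibility of $V$ and the $O(\log n)$ bound on its expected cost --- and to dispatch them in that order, using Lemma~\ref{lem:lp1_round} as the only nontrivial input. Feasibility is the easy part and I would state it first so the rest of the argument can focus purely on cost. After the randomized step~\ref{step:rounding}, step~\ref{step:derandomize} explicitly adds $B^{min}_i$ to $B$ for every module $m_i$ whose privacy requirement is still unmet, and by construction $B^{min}_i$ hides at least $\alpha^j_i$ inputs and $\beta^j_i$ outputs for some $j\in[1,\ell_i]$. Hence every module's cardinality constraint is satisfied deterministically, independent of the random choices, so the returned $V=A\setminus B$ is always a feasible safe subset.

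For the cost I would write $E[\cost(B)]$ as the sum of the expected cost of the attributes added in step~\ref{step:rounding} and the expected cost of those added in step~\ref{step:derandomize}, and bound the two summands separately, letting $\mathrm{OPT}$ denote the optimal integral cost. Since the LP is a relaxation of the IP in Figure~\ref{fig:ip}, the optimal fractional solution satisfies $\sum_{b\in A} c_b x_b \le \mathrm{OPT}$. The randomized rounding contribution then follows immediately from linearity of expectation: attribute $b$ is hidden in step~\ref{step:rounding} with probability $\min\{1,16 x_b\log n\}\le 16 x_b\log n$, so this part costs at most $\sum_{b\in A} c_b\cdot 16 x_b\log n = 16\log n\sum_{b\in A}c_b x_b \le 16\,\mathrm{OPT}\log n$.

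The only real work is bounding the fix-up cost from step~\ref{step:derandomize}, and this is where I expect the main obstacle --- not a deep one, but one requiring two ingredients combined carefully. First, I would argue that $\cost(B^{min}_i)\le \mathrm{OPT}$: the optimal solution itself hides enough attributes of $m_i$ to satisfy one of the pairs $(\alpha^j_i,\beta^j_i)$, and $B^{min}_i$ is by definition the cheapest set achieving this, so it costs no more than the $m_i$-attributes hidden by OPT, hence at most $\mathrm{OPT}$. Second, Lemma~\ref{lem:lp1_round} guarantees that $m_i$'s requirement is already met after step~\ref{step:rounding} with probability at least $1-2/n^2$, so the indicator that $B^{min}_i$ is added has expectation at most $2/n^2$. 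Summing over the $n$ modules,
\begin{equation*}
E[\cost(\text{step~\ref{step:derandomize} additions})] \;\le\; \sum_{i=1}^n \frac{2}{n^2}\,\cost(B^{min}_i) \;\le\; \sum_{i=1}^n \frac{2}{n^2}\,\mathrm{OPT} \;=\; \frac{2}{n}\,\mathrm{OPT}\;\le\; 2\,\mathrm{OPT}.
\end{equation*}

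Adding the two bounds gives $E[\cost(B)] \le (16\log n + 2)\,\mathrm{OPT} = O(\log n)\,\mathrm{OPT}$, which is the claim. A closing remark I would make is that the $1-2/n^2$ guarantee of Lemma~\ref{lem:lp1_round} is strong enough to render the fix-up cost negligible relative to the rounding cost; the entire $O(\log n)$ factor originates from the $16\log n$ inflation of the hiding probabilities that is needed to make the Chernoff-plus-union-bound argument inside that lemma succeed.
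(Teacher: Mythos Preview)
Your proof is correct and follows essentially the same approach as the paper: feasibility is deterministic from step~\ref{step:derandomize}, the rounding cost is at most $16\log n$ times the LP optimum by linearity of expectation, and the fix-up cost is controlled by combining Lemma~\ref{lem:lp1_round} with the bound $\cost(B^{min}_i)\le\mathrm{OPT}$. The only cosmetic difference is that the paper first applies a union bound over the $n$ modules (getting overall failure probability $\le 2/n$) and then bounds the total fix-up by $n\cdot\mathrm{OPT}$, whereas you sum per-module expected fix-up costs directly; your route is slightly tighter ($\tfrac{2}{n}\mathrm{OPT}$ rather than $2\,\mathrm{OPT}$) but both give $O(\log n)$.
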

\begin{proof}
Using union bound over the set of $n$ modules in the above lemma, 
we conclude that with probability at least $1 - 2/n$,
the solution produced by the rounding algorithm after 
step~\ref{step:rounding} is feasible.
By linearity of expectation, the cost of the rounded solution
at this stage is
at most $16\log n$ times that of the LP solution,
and therefore $O(\log n)$ times that of the optimal cost.
If all modules are not satisfied after step~\ref{step:derandomize},
the cost of the data added to $B$ in step~\ref{step:derandomize} 
(by greedily picking the best option $B^{min}_i$ for individual module)
is at most $O(n)$ times the optimal. However, 
this happens with probability at most $2/n$; thus
the expected total cost of the final solution $V$ produced by this algorithm 
remains $O(\log n)$ times the optimal cost. Further, the solution $V$
returned by the algorithm is always a safe subset.
\end{proof}
%

\subsubsection{$\Omega(\log n)$-Hardness}

The following theorem shows that Algorithm~\ref{algo:lp1_round} 
produces an optimal answer upto a constant factor and proves
the hardness result in Theorem~\ref{thm:cardinality-private}.

We give a reduction from the minimum set cover problem 
to this version of the \secureview problem where $\ell_{\max} = 1$
and each data has unit cost. 
Since set cover 
is hard to approximate within a factor of $o(\log n)$ unless ${\rm NP} \subseteq {\rm DTIME}(n^{O(\log \log n)})$
\cite{Feige98, LY94},
the hardness result of the \secureview problem with cardinality constraints follows 
under the same assumption.

An instance of the set cover problem consists of
an input universe $U = \{u_1, u_2, \ldots, u_n\}$, 
and a set of its subsets
${\bf S} = \{S_1, S_2, \ldots, S_M\}$, i.e.
each $S_i\subseteq U$. The goal is to find a set of subsets
${\bf T}\subseteq {\bf S}$ of minimum size (i.e. $|{\bf T}|$
is minimized) subject to the constraint that 
$\cup_{S_i\in {\bf T}} S_i = U$.

We create an instance 
of the 
\secureview problem with workflow $W$, where $W$
has a module $m_i$ corresponding to each element
$u_i\in U$, and an extra module $z$ (in addition to the dummy source and sink modules $s$ and $t$)
We now express the connections between modules in the workflow $W$.
There is a
single incoming edge $e_z$ 
from source module $s$ to $z$ (for initial input data), a set of 
edges $\{e_{ij}: S_i\ni u_j\}$ from $z$ to each $f_j$ (for intermediate data),
and a single outgoing edge $e_j$ from each $f_j$ (for final output data).
 to the sink node $t$. 
 The edge $e_z$ 
 uniquely carries data
$b_s$, and each edge $e_j$ uniquely carries data $b_j$
for $j\in [1, n]$. 
All edges $\{e_{ij}: j\in S_i\}$ carry the same data
$a_i$, $i \in [1, M]$.
\par
The privacy requirement 
for $z$ is any single data $a_i$ carried by one of its
outgoing edges, while that for each $f_j$ is any 
single data $a_i$ carried by one of its incoming edges (i.e. $S_i \ni u_j$).
In other words, $L_z = \{(0, 1)\}$,
$L_{j} = \{(1, 0)\}$.
Hence only the intermediate data, $\{a_i: i \in [1, M]\}$,
can be hidden;
the cost of hiding each such data is 1.
Note that the maximum list size $\ell_{\max}$ is 1
and the individual cadinality requirements are bounded by 1.
\par
If the minimum set cover problem has a cover of size $k$, hiding the data
corresponding to the subsets selected in the cover produces a solution of
cost $k$ for this instance of the \secureview problem. Conversely, if a
solution to the \secureview problem hides a set of $k$ data in 
$\{a_i: i \in [1, M]\}$,
selecting the corresponding sets produces a cover of $k$ sets.
Hence the \secureview problem with cardinality constraint is $\Omega(\log n)$-hard
to approximate. 

\subsection{Proof of Theorem~\ref{thm:set-private}}\label{sec:set-private}
We now consider the \secureview problem with set constraints. 
Here the input requirement lists $L_i$-s are given as a list of pair of subsets of input and output attributes:
$L_i = \{(\widebar{I}^j_i, \widebar{O}^j_i): j \in [1, \ell_i], \widebar{I}^j_i \subseteq I_i, \widebar{O}^j_i \subseteq O_i\}$,
for every $i \in [1, n]$ (see Section~\ref{sec:secureview-private}). 
The goal is find a safe subset $V$ with minimum cost of hidden attributes $\cost(\widebar{V})$
such that for every $i \in [1, n]$,  
$\widebar{V} \supseteq (\widebar{I}^j_i\cup \widebar{O}^j_i)$ for some $j \in [1, \ell_i]$.
\par
%
Recall that $\ell_{\max}$ denotes the maximum size of the requirement list of a module. 
Now we prove Theorem~\ref{thm:set-private}. First we show that the \secureview problem with
set constraints is $\ell_{\max}^{\epsilon}$-hard to approximate, and then we give an $\ell_{\max}$-approximation
algorithm for this problem.

\subsubsection{$\ell_{\max}$-Approximation Algorithm}\label{sec:set-private-approx}
%

Here we give an $\ell_{\max}$-approximation algorithm for the \secureview
problem with set constraints as claimed in Theorem~\ref{thm:set-private}.
The algorithm rounds the solution given by LP relaxation of the following integer program:
\begin{center}
Minimize $\sum_{b \in A} c_b x_b$\quad subject to 
\begin{align}
& \sum_{j = 1}^{\ell_i} r_{ij} \geq 1 & \forall i \in [1, n]\label{equn:IP2-1}\\
& x_b \geq r_{ij} & \forall b \in \widebar{I}^j_i \cup \widebar{O}^j_i,~\forall i \in [1, n] \label{equn:IP2-2}\\%
& x_b, r_{i, j} \in \{0, 1\} &\label{equn:IP2-3}
\end{align}
\end{center}
The LP relaxation is obtained by changing Eqn.~(\ref{equn:IP2-3}) to
\begin{equation}
x_b, r_{ij} \in [0, 1].  \label{equn:LP2-3}
\end{equation}
The rounding algorithm includes all attributes $b \in A$
such that $x_b \geq 1/\ell_{\max}$ to the hidden attribute set $\widebar{V}$.
The corresponding visible attribute subset $V$ is output as a solution.
\par

Next we discuss the correctness and approximation ratio of the rounding algorithm. 
Since the maximum size of a requirement list is $\ell_{\max}$, 
for each $i$, there exists a $j = j(i)$ such that in the solution of the LP,
$r_{ij} \geq 1/\ell_i \geq 1/\ell_{\max}$. Hence there exists at least one
$j \in [1, \ell_i]$ such that $\widebar{I}^j_i \subseteq \widebar{V}, \widebar{O}^j_i \subseteq \widebar{V}$.
Since $\cost(\widebar{V})$ is most $\ell_{\max}$ times the cost of LP solution, 
this algorithm gives an $\ell_{\max}$-approximation.

\subsubsection{$\ell_{\max}^\epsilon$-Hardness}
The hardness result in 
Theorem~\ref{thm:set-private} is obtained by a reduction from the 
\emph{minimum label cover} problem~\cite{ABS+93}. An instance of
the minimum label cover problem consists of a bipartite graph 
$H = (U, U', E_H)$, a label set $L$,
and a non-empty relation $R_{uw} \subseteq L \times L$ for each edge
$(u, w) \in E_H$. A feasible solution is a label assignment to the 
vertices, $A : U \cup U'  \rightarrow 2^{L}$, 
such that for each
edge $(u, w)$, there exist $\ell_1 \in A(u), \ell_2 \in A(w)$ 
such that $(\ell_1, \ell_2) \in R_{uw}$.
The objective is to find a feasible solution that minimizes
$\sum_{u \in U \cup U'}|A(u)|$.

Unless ${\rm NP} \subseteq {\rm DTIME}(n^{{\rm polylog~} n})$,
the label cover problem is $|L|^\epsilon$-hard to approximate
for some constant $\epsilon > 0$~\cite{ABS+93, Raz98}.
The instance of the \secureview problem in the reduction will have
$\ell_{max} = |L|^2$. Theorem~\ref{thm:set-private} follows immediately.

Given an instance of the label cover problem as defined above, 
we create an instance 
of the \secureview problem 
by constructing a workflow $W$ (refer to Figure~\ref{fig:lc1}).
For each edge $(u, w) \in E_H$,
there is a module $x_{uw}$ in $W$. In addition, $W$ has another 
module $z$. 

\par

As shown in Figure~\ref{fig:lc1}, the input and output attributes
of the modules are as follows:
(i) $z$ has a single incoming edge with the initial input data item $b_z$,
(ii) $z$ has $(|U| + |U'|) \times L$ output attributes $b_{u, \ell}$, for every
$u \in U \cup U'$ and every $\ell \in L$. Every such attribute $b_{u, \ell}$
is sent to all $x_{uw}$ where $(u, w) \in E_H$ (see Figure~\ref{fig:lc1}).
Hence every $x_{uw}$ has $2L$ input attributes: $\{b_{u, \ell}~:~ \ell \in L \}$
$\bigcup \{b_{w, \ell'}~:~ \ell' \in L \}$.
(iii) there is a single outgoing edge from each $x_{uw}$ carrying 
data item $b_{uw}$ (final output data).
\eat{

As shown in Figure~\ref{fig:lc1}, there are three
kinds of edges in $V$: 
(i) a single incoming edge $e_z$ 
to $z$ corresponding to initial input,
(ii) a set of $|2L|$ parallel edges $\{e_{uw, u, \ell}~:~ \ell \in L\}$
$\cup \{e_{uw, w, \ell}~:~ \ell \in L\}$ 
from $z$ to each $x_{uw}$ indexed by one of $u$ or $w$ and a label 
$\ell\in L$, and,
(iii) a single outgoing edge $e_{uw}$ from each $x_{uw}$ 
for final outputs.
Edge $e_z$ and $e_{uw}$, for each $(u, w)\in E$, 
uniquely carry data $b_z$ and $b_{uw}$ respectively, whereas
data $b_{u, \ell}$ is carried by all edges $e_{uw, u, \ell}$. 
}
\begin{figure}[h]
  \begin{center}
    \includegraphics[scale=.35]{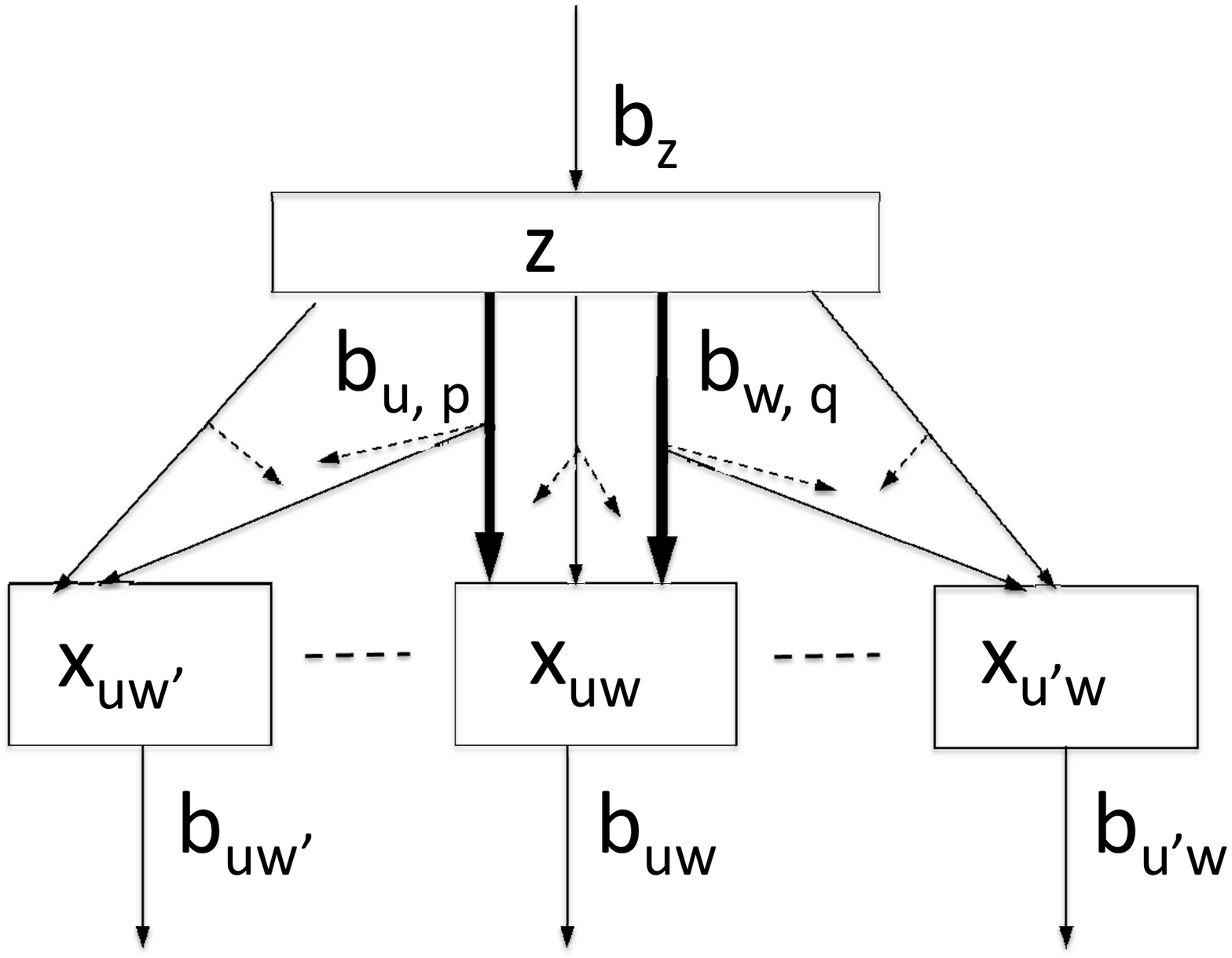}
  \end{center}
  \caption{Reduction from label cover: 
  bold edges correspond to $(p, q) \in R_{uw}$, dotted edges
  are for data sharing.}
  \label{fig:lc1}
\end{figure}
The cost of hiding any data is 1.
The requirement list of $z$ contains singleton subsets of each intermediate data $b_{u, \ell}$,
i.e., $L_z = \{(\phi, \{b_{u, \ell}\}): u \in U \cup U', \ell \in L\}$. 
The list of $x_{uw}$, for each $(u, w)\in E$, contains pairs of data 
corresponding to the members of the relation $R_{uw}$, i.e
$L_{uw} = \{(\phi, \{b_{u, \ell_1}, b_{w, \ell_2}\}): (\ell_1, \ell_2) \in R_{uw}\}.$

The following lemma 
proves the correctness of the reduction.
\begin{lemma}\label{lem:lc}
The label cover instance $H$ has a solution of cost $K$ iff 
the \secureview instance $W$ has a solution of cost $K$.
\end{lemma}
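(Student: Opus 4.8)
The plan is to establish a cost-preserving, feasibility-preserving bijection between feasible label assignments for $H$ and feasible hidden sets for the constructed workflow $W$, so that the two optimal costs coincide exactly (which is what the ``cost $K$ iff cost $K$'' phrasing asserts). The guiding observation is that the only data items occurring in any requirement list are the intermediate items $b_{u,\ell}$: neither the initial input $b_z$ nor any final output $b_{uw}$ appears in $L_z$ or in any $L_{uw}$. Since every data item has cost $1$, deleting any $b_z$ or $b_{uw}$ from a hidden set $\widebar{V}$ leaves all module requirements satisfied while only lowering $\cost(\widebar{V})$. Hence I would first argue that, without loss of generality, a safe hidden set consists solely of items $b_{u,\ell}$, which lets me identify it with the label assignment $A(u) = \{\ell : b_{u,\ell} \in \widebar{V}\}$; this identification satisfies $\cost(\widebar{V}) = |\widebar{V}| = \sum_{u} |A(u)|$, matching the two objectives term for term.

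For the forward direction, given a label-cover solution $A$ of cost $K$, I would set $\widebar{V} = \{b_{u,\ell} : \ell \in A(u)\}$, of cost exactly $K$, and verify feasibility module by module. For each edge $(u,w)\in E_H$, feasibility of $A$ supplies $\ell_1 \in A(u)$ and $\ell_2 \in A(w)$ with $(\ell_1,\ell_2)\in R_{uw}$; the corresponding requirement $\{b_{u,\ell_1}, b_{w,\ell_2}\}$ belongs to $L_{uw}$, and both items are hidden, so $x_{uw}$ is satisfied. Module $z$ merely requires that some single $b_{u,\ell}$ be hidden, which holds whenever $A$ assigns at least one label; assuming WLOG that $H$ has an edge, any feasible $A$ is nonempty, so $z$ is satisfied too. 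Conversely, from a safe hidden set of cost $K$ I apply the cleanup to restrict to intermediate items and read off $A$; feasibility of $x_{uw}$ says that for some $(\ell_1,\ell_2)\in R_{uw}$ both $b_{u,\ell_1}$ and $b_{w,\ell_2}$ are hidden, i.e. $\ell_1\in A(u)$, $\ell_2\in A(w)$, $(\ell_1,\ell_2)\in R_{uw}$ — exactly the label-cover constraint for $(u,w)$. Thus $A$ is feasible with cost $\le K$, and since labels can be padded freely without breaking feasibility, the optima agree.

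I do not expect a deep obstacle, as the reduction is essentially a relabeling; the points demanding care are bookkeeping rather than conceptual. Specifically, I must (i) confirm that hiding $b_z$ or any $b_{uw}$ is never useful, so that both problems genuinely range over the same ground set $\{b_{u,\ell}\}$ and the bijection is cost-exact rather than merely approximation-preserving, and (ii) check that the weak requirement of $z$ is automatically discharged by the edge constraints and therefore contributes no spurious cost. Once these are settled, the feasibility equivalence for each $x_{uw}$ is immediate from the definition of $L_{uw}$, and the lemma follows. This exactness is precisely what is needed downstream, since combined with $\ell_{\max} = |L|^2$ it transfers the $|L|^\epsilon$-hardness of label cover to an $\ell_{\max}^{\epsilon/2}$-hardness for the \secureview problem with set constraints.
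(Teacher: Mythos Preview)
Your proposal is correct and follows essentially the same two-direction argument as the paper: translate a label assignment $A$ into the hidden set $\widebar{V}=\{b_{u,\ell}:\ell\in A(u)\}$ and back via $A(u)=\{\ell:b_{u,\ell}\in\widebar{V}\}$, checking that the $x_{uw}$ requirements encode exactly the edge constraints of label cover and that $z$'s requirement is automatically met. If anything you are slightly more careful than the paper, which simply asserts that ``$\widebar{V}$ can include only the intermediate data''; your explicit cleanup-then-pad argument is the right way to make the cost-$K$ iff cost-$K$ statement literally true rather than just an equality of optima.
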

\begin{proof}
Let $A: U \cup U' \rightarrow 2^L$ be a solution of the label cover instance $H$
with total cost $K = \sum_{u \in U \cup U'} |A(u)|$.
We create a solution $\widebar{V}$ for the \secureview instance $W$ as follows:
for each $u \in U \cup U'$, and $\ell \in L$, add $b_{u, \ell}$ to hidden attributes $\widebar{V}$
iff $\ell \in A(u)$. We claim that $\widebar{V}$ is a feasible solution for $G$.
For each $u \in U \cup U'$, $A(u)$ is non-empty; hence, the requirement of $z$
is trivially satisfied.
Since $A$ is a valid label cover solution, for each $(u, w) \in E_H$, there exists
$\ell_1\in A(u)$ and $\ell_2 \in A(w)$ such that $(\ell_1, \ell_2) \in R_{uw}$.
Hence for the same pair $(\ell_1, \ell_2)$, $(b_{u, \ell_1}, b_{w, \ell_2}) \in L_{x_{uw}}$,
and both $b_{u, \ell_1}, b_{w, \ell_2} \in \widebar{V}$. 
This satisfies the requirement for all modules $x_{uw}$ in $W$. 

Conversely, let $\widebar{V}$ be a solution of the \secureview instance $W$, where $|\widebar{V}| = K$. 
Note that $\widebar{V}$ can include only the intermediate data.
For each $u \in U \cup U'$, we define $A(u) = \{\ell | b_{u, \ell} \in \widebar{V}\}$.
Clearly, $\sum_{u \in U \cup U'} |A(u)| = K$. 
For each $x_{uw} \in V$, the requirement of $x_{uw}$ is satisfied by $\widebar{V}$;
hence there exist $\ell_1, \ell_2 \in L$ such that $b_{u, \ell_1}, b_{w, \ell_2} \in \widebar{V}$.
This implies that for each edge $(u, w) \in E_H$, there exist $\ell_1\in A(u)$
and $\ell_2 \in A(w)$, where $(\ell_1, \ell_2) \in R_{uw}$, thereby proving
feasibility.
\end{proof} 
 
\noindent
\textbf{Remark.} 
If $N = |U|+|U'|$, the label cover problem is also known to be
$\Omega(2^{\log^{1-\gamma}N})$-hard to approximate
for all constant $\gamma > 0$, unless 
${\rm NP} \subseteq {\rm DTIME}(n^{{\rm polylog~} n})$~\cite{ABS+93, Raz98}. 
Thus, the \secureview problem with set constraints is
$\Omega(2^{\log^{1-\gamma}n})$-hard to approximate as well, 
for all constant $\gamma > 0$, under the same complexity
assumption.

%
%


\subsection{Proof of Theorem~\ref{thm:bounded-private}}\label{sec:bounded-private}
The \secureview problem 
becomes substantially easier to approximate if the workflow
has \emph{bounded data sharing}, i.e. when every data $d$ produced by some module
is either a final output data or is an input data to at most $\gamma$
other modules. 
Though the problem remains NP-hard
even with this restriction, Theorem~\ref{thm:bounded-private}
 shows that it is possible to approximate it within 
a constant factor 
when $\gamma$ is a constant. 


\eat{
\begin{theorem}\label{thm:bounded-private}
There is a $(\gamma+1)$-approximation algorithm for both versions of 
the \secureview problem when each data acts as input to at most $\gamma$ modules.
On the other hand, both versions of the problem remain APX-hard 
even when there is \emph{no}
data sharing (i.e. $\gamma = 1$), each data has unit cost, and 
$\ell_{\max}$ is 2.
\end{theorem}

}

\noindent
First, we give a $(\gamma+1)$-approximation algorithm for the \secureview problem with
set constraints, where each data is shared by at most $\gamma$ edges.
This also implies an identical approximation factor for
the cardinality version.
Then, we show that the cardinality version of the problem is APX-hard,
i.e. there exists a constant $c > 1$ such that it is NP-hard to obtain
a $c$-approximate solution to the problem. The set version is therefore
APX-hard as well.

\subsubsection{$(\gamma + 1)$-Approximation Algorithm}
Recall that the input to the problem includes
a requirement list 
$L_i = \{(\widebar{I}^j_i, \widebar{O}^j_i): j \in [1, \ell_i], \widebar{I}^j_i \subseteq I_i, \widebar{O}^j_i \subseteq O_i\}$
for each module $v_i$.
Let 
$(\widebar{I}^{j*}_{i}, \widebar{O}^{j*}_{i})$
be a minimum cost pair for module $v_i$,
i.e. 
$\cost(\widebar{I}^{j*}_{i} \cup \widebar{O}^{j*}_{i}) = \min_{j=1}^{\ell_i} \cost(\widebar{I}^j_i \cup \widebar{O}^j_i)$. 
The algorithm greedily 
chooses $\widebar{I}^{j*}_{i} \cup \widebar{O}^{j*}_{i}$
for each module $v_i$, i.e. the set of hidden data 
$\widebar{V} = \bigcup_{1\leq i \leq n}(\widebar{I}^{j*}_{i} \cup \widebar{O}^{j*}_{i})$.

Note that
each intermediate data is an input to at most $\gamma$ modules. In
any optimal solution, assume that each terminal module of
every hidden edge carrying this data pays its cost. 
Then,
the total cost paid by the modules is at most $\gamma+1$ times 
the cost of the optimal
solution. On the other hand, the total cost paid by any module is 
at least the cost of the edges incident on the module that are hidden
by the algorithm. Thus, the solution of the algorithm has cost
at most $\gamma+1$ times the optimal cost.

A very similar greedy algorithm with the same approximation factor
can be given to the \secureview problem with cardinality constraints.

\subsubsection{APX-Hardness}
We give a reduction from the \emph{minimum vertex cover} 
problem in cubic graphs
to the \secureview problem with cardinality constraints. 
An instance of the vertex cover problem consists of
an undirected graph $G'(V', E')$. The goal is to find 
a subset of vertices $S \subseteq V'$
of minimum size $|S|$ such that each edge $e \in E'$ 
has at least one endpoint in $S$.

Given an instance of the vertex cover problem, we 
create an instance of the \secureview problem 
$W$
(see Figure~\ref{fig:vc}).
For each edge $(u, v) \in E'$, there is a module $x_{uv}$ in $W$;
also there is a module $y_v$ for each vertex $v \in V'$. 
In addition to these, $W$ contains a single module $z$. 
\par
Next we define the edges in the workflow $W$; since there is no 
data sharing, each edge corresponds to a unique data item
and cost of hiding each edge is 1.
For each $x_{uv}$, 
there is a single incoming edge 
(carrying initial input data)
and two outgoing edges $(x_{uv}, y_u)$ and $(x_{uv}, y_v)$.
There is an outgoing edge $(y_v, z)$ from every $y_v$ (carrying final output data).
Finally, there is an outgoing edge from $z$ for final output data item. 
\par
\begin{figure}[ht]
  \begin{center}
    \includegraphics[scale=.35]{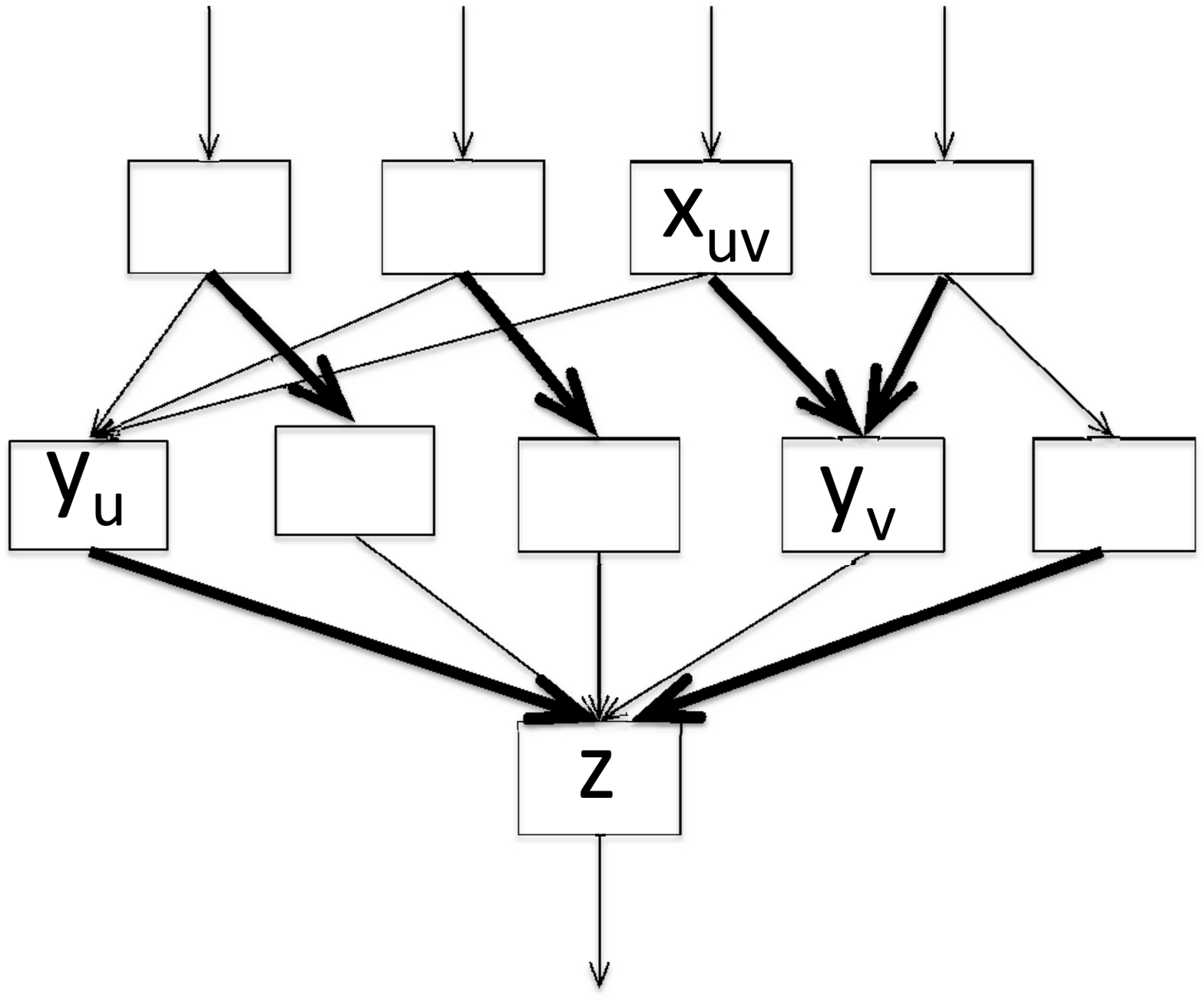}
  \end{center}
  \caption{Reduction from vertex cover, the dark edges show a solution 
  with cost $|E'| + K$, $K = $ size of a vertex cover in $G'$}
  \label{fig:vc}
\end{figure}

Now we define the requirement list for each module in $W$.
For each $x_{uv}$, $L_{uv} = \{(0, 1)\}$, i.e.
the requirement for $x_{uv}$ is any single outgoing edge.
For each $y_{v}$, $L_{v} = \{(d_v, 0), (0, 1)\}$, where $d_v$ is the
degree of the $v$ in $G'$. Hence the requirement of
the vertex $y_v$ is either all of its incoming edges, or
a single outgoing edge. For vertex $z$, $L_z = \{(1, 0)\}$, i.e.
hiding any incoming edge suffices. 


\begin{lemma}\label{lem:vc}
The vertex cover instance $G'$ has a solution of size $K$ if and only if the \secureview instance $W$
has a solution of cost $m' + K$, where $m' = |E'|$ is the number of edges in $G'$.
\end{lemma}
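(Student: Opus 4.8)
The plan is to set up a tight correspondence between the hiding choices available in the \secureview instance $W$ and the vertex covers of $G'$, with an additive offset of $m' = |E'|$ that every feasible solution is forced to pay. First I would observe that, since there is no data sharing, the only data items that can help satisfy a requirement are (i) the two outputs of each edge-module $x_{uv}$, each of which is simultaneously an input of one of $y_u, y_v$, and (ii) the single output $f_v$ of each $y_v$, which is simultaneously an input of $z$. Hiding an initial-input edge of some $x_{uv}$ satisfies no requirement, so in any solution those can be discarded without increasing cost. The requirement $L_{uv} = \{(0,1)\}$ forces at least one of the two outputs of each $x_{uv}$ to be hidden, contributing a mandatory $m'$ to the cost, exactly one unit per edge of $G'$.

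The key structural observation I would make is to view hiding one of $x_{uv}$'s two outputs as \emph{assigning} the edge $(u,v)$ to one of its endpoints (hiding the output toward $v$ assigns the edge to $v$). A vertex module $y_v$ can satisfy its requirement via the option $(d_v,0)$ — hiding all $d_v$ incoming edges — precisely when every edge incident to $v$ is assigned to $v$; otherwise $y_v$ must use $(0,1)$, which hides $f_v$. Let $T = \{v : f_v \text{ is hidden}\}$. Since an edge cannot be assigned to both endpoints, two vertices that are both satisfied via $(d_v,0)$ cannot be adjacent, so the set $T$ behaves like a vertex cover; moreover hiding any $f_v$ discharges $z$'s requirement $L_z = \{(1,0)\}$.

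For the forward direction, given a vertex cover $S$ of size $K$, I would assign every edge to an endpoint lying in $S$ (possible since $S$ covers $E'$), hide that single output of each $x_{uv}$ for a cost of $m'$, and hide $f_v$ for exactly the $v \in S$ for an additional cost of $K$. Each $y_v$ with $v \notin S$ then has all incoming edges hidden, because every edge incident to such a $v$ has its other endpoint in $S$ and is thus assigned to $v$; each $y_v$ with $v \in S$ is satisfied by the hidden $f_v$; and $z$ is satisfied since $S \neq \emptyset$. This yields a feasible solution of cost exactly $m' + K$.

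For the converse, given a feasible solution of cost $m' + K$, after discarding useless hidden inputs I set $T = \{v : f_v \text{ hidden}\}$ and count. Each $x_{uv}$ contributes at least one hidden output, and for any edge with \emph{both} endpoints outside $T$ the $(d_v,0)$ obligations of the two endpoints force \emph{both} outputs of $x_{uv}$ to be hidden. Hence if $q$ is the number of edges with both endpoints outside $T$, the hidden outputs number at least $m' + q$, so $m' + K \geq |T| + m' + q$, giving $|T| + q \le K$. Augmenting $T$ by one endpoint of each of those $q$ edges produces a genuine vertex cover of size at most $|T| + q \le K$. The main obstacle, and the step meriting the most care, is exactly this backward counting: arguing that doubly-hidden edge-modules are genuinely forced, bounding their number by $q$, and then converting $T$ into a cover while keeping the size within $K$. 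Everything else is bookkeeping, and the cubic hypothesis enters only to keep each $d_v = 3$, so that the cardinality values remain bounded by $3$ and $\ell_{\max} = 2$, as claimed in the theorem.
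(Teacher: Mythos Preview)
Your forward direction contains a slip that breaks the argument as written. You say you ``assign every edge to an endpoint lying in $S$'' (hiding the output of $x_{uv}$ toward that endpoint), but then argue that for $v\notin S$ every incident edge ``is thus assigned to $v$.'' These two statements are inconsistent: under your stated rule the edge is assigned to the \emph{other} endpoint (the one in $S$), so the input to $y_v$ from $x_{uv}$ is \emph{not} hidden, and $y_v$ with $v\notin S$ fails both options in its list. The fix is simply to reverse the rule: assign each edge to an endpoint \emph{not} in $S$ when one exists (arbitrarily otherwise). Then for $v\notin S$ every incident edge has its other endpoint in $S$, hence is assigned to $v$, and all inputs of $y_v$ are hidden as you want. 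This is exactly what the paper does (``if $u\notin S$, add $(x_{uv},y_u)$; otherwise add $(x_{uv},y_v)$'').

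Your backward direction is correct and in fact cleaner than the paper's. The paper normalizes the solution by swapping any doubly-hidden pair of $x_{uv}$-outputs for a single output plus the edge $(y_u,z)$, so that afterwards exactly one output per $x_{uv}$ is hidden and the cover can be read off directly. You instead keep the solution as is, set $T=\{v:f_v\text{ hidden}\}$, observe that any edge with both endpoints outside $T$ forces both $x_{uv}$-outputs to be hidden, and use a counting inequality $|T|+q\le K$ before augmenting $T$ to a cover. Both approaches give a vertex cover of size at most $K$; yours avoids the solution-modification step at the price of a small augmentation step at the end.
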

\begin{proof}
Let $S \subseteq V'$ be a vertex cover of $G'$ of size $K$. 
We create a create a set of hidden edges $\widebar{V}$ for the \secureview
problem as follows:
for each $v \in S$, add $(y_{v}, z)$ to $\widebar{V}$. Further, for each $x_{uv}$, if $u \notin S$, add
$(x_{uv}, y_{u})$ to $\widebar{V}$, otherwise add $(x_{uv}, y_{v})$. 
For this choice of $\widebar{V}$, we claim that $V$ is safe set of attributes for $W$.
\par
Clearly, the requirement is satisfied for each $x_{uv}$, since one outgoing edge is hidden;
the same holds for all $y_v$ such that $v \in S$. 
Assuming $E'$ to be non-empty, any vertex cover is of size at least one.
Hence at least one incoming edge to $z$ is hidden. 
Finally, for every $y_{v}$ such that $v \notin S$, all its incoming edges are hidden;
if not, $S$ is not a vertex cover. 
Hence $\widebar{V}$ satisfies the requirement of all modules in $W$.
Since we hide exactly one outgoing edge from all $x_{uv}$,
and exactly one outgoing edge from all $y_{v}$ where $v \in S$,
the cost of the solution is $m' + K$.

Now assume that we have a solution $\widebar{V} \subseteq A$ 
of the \secureview instance with cost $|\widebar{V}| = K'$.
We can assume, wlog, that for each $x_{uv}$ exactly 
one outgoing edge is included in $\widebar{V}$;
if both $(x_{uv}, y_{u})$ and $(x_{uv}, y_{v})$ are in $\widebar{V}$, 
we arbitrarily select one of $u$ or $v$, 
say $u$, and replace the edge $(x_{uv}, y_{u})$ in $\widebar{V}$ with the edge $(y_{u}, z)$
to get another feasible solution without increasing the cost.
We claim that the set $S \subseteq V'$ of vertices $v$ 
such that $(y_{v}, z) \in \widebar{V}$ forms a vertex cover.
For any edge $(u, v) \in E'$,
if $(x_{uv}, y_{u}) \notin \widebar{V}$, then $(y_{u}, z) \in \widebar{V}$
to satisfy the requirement of $y_u$, and therefore $u \in S$;
otherwise, $v \in S$ by the same argument.
Hence $S$ is a vertex cover.
Since each vertex $x_{uv}$ has exactly one outgoing edge in $\widebar{V}$,
$|S| = K' - m'$. 
\end{proof}

To complete the proof, note that 
if $G'$ were a cubic graph, i.e. the degree of each vertex is at most 3,
then the size of any vertex cover $K \geq m'/3$.
It is known that vertex cover in cubic graphs is APX-hard \cite{ASK97};
hence so is the \secureview problem with cardinality constraints and no data sharing.
An exactly identical reduction shows that the \secureview problem 
with set constraints and no data sharing is APX-hard as well.

\section{Proofs from Section~5} 
\label{sec:app_proofs_public-module}

\subsection{Proof of Lemma~\ref{lem:main-public}}\label{sec:main-public}

In this section we state and prove Lemma~\ref{lem:main-public}, which uses the same notations as in
Lemma~\ref{lem:main-private}. Here again $m_i$ is a fixed private module.

\begin{lemma}\label{lem:main-public}
If $m_j, j \in [K+1, n]$ is a public module such that
$m_j \neq g_j$ in the proof of
Lemma~\ref{lem:main-private}, then $(I_j \cup O_j) \cap \widebar{V_i} \neq
\phi$, and therefore $m_j$ will be hidden.
\end{lemma}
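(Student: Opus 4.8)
The plan is to prove the contrapositive: I would show that if $(I_j \cup O_j) \cap \widebar{V_i} = \phi$ then the flipped module $g_j$ coincides with $m_j$, so that $m_j \neq g_j$ can only happen when $m_j$ touches a hidden attribute. First I would recall the crucial structural fact established in the proof of Lemma~\ref{lem:main-private}: the tuples $\tup{p}$ and $\tup{q}$ that drive the flipping are built from $\tup{x},\tup{y}$ and $\tup{x'},\tup{y'}$ respectively, and these pairs agree on the visible attributes $V_i \cap I_i$ and $V_i \cap O_i$. Hence $\tup{p}$ and $\tup{q}$ differ \emph{only} on the hidden attributes $\widebar{V_i}$.

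The key step is to observe what this buys us about $\Flip_{\tup{p},\tup{q}}$. By its definition, whenever $p[a] = q[a]$ the flip leaves $x[a]$ unchanged in all three cases, so $\Flip_{\tup{p},\tup{q}}$ acts as the identity on every attribute outside $\widebar{V_i}$ (and, a fortiori, outside $I_i \cup O_i$). I would state this as a short observation and verify it directly from the case analysis in the definition of $\Flip$.

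With that in hand, the heart of the argument is a one-line computation under the assumption $(I_j \cup O_j) \cap \widebar{V_i} = \phi$. Since $g_j = \Flip_{m_j,\tup{p},\tup{q}}$, for any input $\tup{w}$ on $I_j$ we have $g_j(\tup{w}) = \Flip_{\tup{p},\tup{q}}(m_j(\Flip_{\tup{p},\tup{q}}(\tup{w})))$. Because $I_j$ contains no attribute of $\widebar{V_i}$, the inner flip leaves $\tup{w}$ unchanged, so $m_j$ receives its original input; because $O_j$ likewise contains no attribute of $\widebar{V_i}$, the outer flip leaves the output unchanged. Therefore $g_j(\tup{w}) = m_j(\tup{w})$ for every input, i.e. $g_j = m_j$. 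Taking the contrapositive yields exactly $(I_j \cup O_j) \cap \widebar{V_i} \neq \phi$ whenever $m_j \neq g_j$.

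Finally I would close the loop with the privacy setting: in the proof of Theorem~\ref{thm:privacy-public} the hidden set for the module $m_i$ under consideration is $\widebar{V} = \widebar{V_i}$, so an attribute of $m_j$ lying in $\widebar{V_i}$ is a hidden attribute of the workflow. Since every visible public module in $P$ must have all of its input and output attributes visible, $m_j$ cannot belong to $P$ and is therefore a hidden public module, as claimed. I do not anticipate a genuine obstacle here; the only delicate point is the bookkeeping in the third paragraph---making fully precise that the flip is trivial off $\widebar{V_i}$---but this is immediate once the identity-off-$\widebar{V_i}$ observation is in place.
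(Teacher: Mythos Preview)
Your proposal is correct and follows essentially the same route as the paper's proof: prove the contrapositive by recalling that $\tup{p}$ and $\tup{q}$ agree on all visible attributes, hence $\Flip_{\tup{p},\tup{q}}$ is the identity on every attribute outside $\widebar{V_i}$, and then unfold $g_j(\tup{w}) = \Flip_{\tup{p},\tup{q}}(m_j(\Flip_{\tup{p},\tup{q}}(\tup{w}))) = m_j(\tup{w})$ under the assumption $(I_j \cup O_j)\cap\widebar{V_i}=\phi$. Your final paragraph tying the conclusion back to $P$ is a welcome addition that the paper leaves implicit.
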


\begin{proof}
We will show that if $(I_j \cup O_j) \cap
\widebar{V_i} = \phi$, then $m_j = g_j$.
\par
Recall that we defined two tuples $\tup{p}, \tup{q}$ over attributes $I_i \cup O_i$
in the proof of
Lemma~\ref{lem:main-private} and argued that $p[a] = q[a]$ for all
the attributes $a \in (I_i \cup O_i) \cap V_i$. Hence if $p[a] \neq
q[a]$, then $a \in \widebar{V_i}$, i.e. $a$ is hidden. From the
definition of $\Flip$ it follows that, when $(I_j \cup O_j) \cap
\widebar{V_i} = \phi$, for an input $\tup{u}$ to $m_j$
$\Flip_{\tup{p, q}}(\tup{u}) = \tup{u}$. Similarly, $\Flip_{\tup{p,
q}}(\tup{v}) = \tup{v}$, where $\tup{v} = m_j(\tup{u})$. Hence for
any input $\tup{u}$ to $m_j$,
\begin{eqnarray*}
g_j(\tup{u}) & = & \Flip_{m_j, \tup{p, q}}(\tup{u})\\
& = & \Flip_{\tup{p, q}}(m_j(\Flip_{\tup{p, q}}(\tup{u})))\\
& = & \Flip_{\tup{p, q}}(m_j(\tup{u}))\\
& = & \Flip_{\tup{p, q}}(\tup{v})\\
& = & \tup{v}\\
& = & m_j(\tup{u}).
\end{eqnarray*}
Since this is true for all input $\tup{u}$ to $m_j$, $m_j = g_j$ holds.
\end{proof}

\subsection{Bounded Data Sharing}\label{sec:bounded-public}

In Section~\ref{sec:bounded-private} we showed that the \secureview problem with cardinality or set constraints
has a $\gamma+1$-approximation where every data item in $A$
can be fed as input to at most $\gamma$ modules.
This implies that without
any data sharing (when $\gamma = 1$), \secureview with cardinality or set
constraints had a 2-approximation. In the following theorem 
we show that
in arbitrary networks, this problem is $\Omega(\log n)$-hard to approximate.

\begin{theorem}\label{thm:bounded-public}
\textbf{Bounded Data sharing  (general workflows):~~}
Unless ${\rm NP} \subseteq {\rm DTIME}(n^{O(\log \log n)})$, the \secureview problem with cardinality constraints
without data sharing
 in general workflows is
 $\Omega(\log n)$-hard to approximate even if the maximum size of the requirement lists is 1 and the individual
requirements are bounded by 1.
\end{theorem}

\begin{proof}
The reduction will again be from the set cover problem.
An instance of the set cover problem consists of
an input universe ${\bf U} = \{u_1, u_2, \ldots, u_{n'}\}$, 
and a set of its subsets
${\bf S} = \{S_1, S_2, \ldots, S_{m'}\}$, i.e.
each $S_i\subseteq U$. The goal is to find a set of subsets
${\bf T}\subseteq {\bf S}$ of minimum size (i.e. $|{\bf T}|$
is minimized) subject to the constraint that 
$\cup_{S_i\in {\bf T}} S_i = {\bf U}$.
\par
Given an instance of the set-cover problem, we construct a workflow $W$
as follow: (i) 
we create a public module for every element in $U$,
(ii) 
we create a private module for every set in ${\bf S}$,
(iii) we add an edge $(S_i, u_j)$ with data item $b_{ij}$ if and only if $u_j \in S_i$.
Every set $S_i$ has an incoming edge with data item $a_i$ 
(initial input data)
and every element $u_j$ has an outgoing edge with data item 
$b_j$ (final output data).  
The cost of hiding every edge is 0 and the cost of privatizing every set node
$S_i$ is 1. The requirement list of every private module $u_j$ is $L_{j} = \{(1, 0)\}$,
i.e., for every such module one of the incoming edges must be chosen.
\par
It is easy to verify that the set cover has a solution of size $K$ if and only if the
\secureview problem has a solution of cost $K$. 
Since set-cover is known to be $\Omega(\log n')$-hard to approximate, $m'$ is polynomial in $n'$ in the construction in \cite{LY94}, 
and the total number of nodes in $W$, $n = O(n' + m')$, 
this problem has
the same hardness of approximation as set cover, i.e. the problem is $\Omega(\log n)$-hard
to approximate.
\end{proof}
\subsection{Cardinality Constraints}\label{sec:cardinality-public}
Here we show that the \secureview problem with cardinality constraints 
is $\Omega(2^{\log^{1-\gamma}n})$-hard to approximate.
This is in contrast with the $O(\log n)$-approximation obtained for this problem
in all-private workflows (see Theorem~\ref{thm:cardinality-private}).

\begin{theorem}\label{thm:cardinality-public}
\textbf{Cardinality Constraints  (general workflows):~~}
Unless ${\rm NP} \subseteq {\rm DTIME}(n^{{\rm polylog~} n})$,
the \secureview problem with cardinality constraints in general workflows is
$\Omega(2^{\log^{1-\gamma}n})$-hard to approximate
for all constant $\gamma > 0$ even if the maximum size of the requirement lists is 1 and the individual
requirements are bounded by 1.
\end{theorem}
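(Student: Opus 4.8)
The plan is to reduce from the \emph{minimum label cover} problem, exactly the source of hardness used for the set constraints version in Theorem~\ref{thm:set-private}. Recall from that reduction (Lemma~\ref{lem:lc} and Figure~\ref{fig:lc1}) that a label cover instance $H = (U, U', E_H)$ with label set $L$ and edge relations $R_{uw}$ is encoded so that a label assignment $A$ corresponds to a set of hidden data, and each edge module $x_{uw}$ carries a requirement list whose entries are the pairs $\{b_{u,\ell_1}, b_{w,\ell_2}\}$ with $(\ell_1,\ell_2) \in R_{uw}$. The difficulty here is that with \emph{cardinality} constraints we may only require ``hide at least $\alpha$ inputs'', which cannot distinguish a \emph{valid} label pair from an arbitrary pair; indeed this very loss of expressiveness is why the all-private cardinality version admits the $O(\log n)$-approximation of Theorem~\ref{thm:cardinality-private}. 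The whole point of the reduction is therefore to recover the relation structure $R_{uw}$ using \emph{public modules and privatization}, which are available in general workflows but not in the all-private setting.

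Concretely, I would build a workflow $W$ with a public ``label'' module $M_{u,\ell}$ for every vertex $u \in U \cup U'$ and label $\ell \in L$, charging privatization cost $\pvt(M_{u,\ell}) = 1$ (so that privatizing $M_{u,\ell}$ encodes $\ell \in A(u)$), and a private ``edge'' module $m_{uw}$ for every $(u,w) \in E_H$ whose requirement list is the single cardinality pair $(1,0)$ --- thus respecting $\ell_{\max}=1$ and individual requirements bounded by $1$. For each valid pair $(p,q) \in R_{uw}$ I would route a dedicated input $d_{uw,p,q}$ into $m_{uw}$ through cost-$0$ public combiner gadgets fed by $M_{u,p}$ and $M_{w,q}$. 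The combiner and label functionalities would be chosen (using the constant/invertible obstruction of Example~\ref{eg:public-challenge}) so that hiding the single input $d_{uw,p,q}$ makes $m_{uw}$ be $\Gamma$-workflow-private \emph{if and only if} both $M_{u,p}$ and $M_{w,q}$ are privatized; otherwise $d_{uw,p,q}$ stays reconstructible from visible data through the still-public chain and gives no uncertainty. Since each $M_{u,p}$ is shared by every edge incident to $u$ that uses label $p$, its single privatization cost is amortized across those edges, so the total privatization cost equals $\sum_{u}|A(u)|$ rather than a per-edge quantity.

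Given this construction I would prove the analogue of Lemma~\ref{lem:lc}: $H$ has a label cover of cost $K$ iff $W$ has a safe solution of cost $K$. In the forward direction, privatizing exactly $\{M_{u,\ell} : \ell \in A(u)\}$ lets each edge $(u,w)$ satisfy its $(1,0)$ requirement for free by hiding the $d_{uw,p,q}$ of a valid pair $(p,q) \in A(u)\times A(w) \cap R_{uw}$, at total cost $\sum_u |A(u)| = K$. Conversely, from any safe solution the privatized label modules define $A(u) = \{\ell : M_{u,\ell} \text{ privatized}\}$, and feasibility of each $m_{uw}$ forces some valid $(p,q)$ with both $M_{u,p}, M_{w,q}$ privatized, yielding a label cover of the same cost. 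Because the number of modules $n$ is polynomial in $|U|+|U'|$, $|L|$ and $\max_{uw}|R_{uw}|$, the known $\Omega(2^{\log^{1-\gamma} N})$-inapproximability of label cover under $\mathrm{NP}\not\subseteq\mathrm{DTIME}(n^{\mathrm{polylog}\,n})$ \cite{ABS+93, Raz98} transfers to the \secureview problem, giving the claimed bound for all constant $\gamma > 0$.

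The main obstacle is the gadget design in the second step: engineering public combiner and label functionalities so that a single cardinality-$1$ requirement on $m_{uw}$ behaves exactly like the two-element set constraint $\{M_{u,p}, M_{w,q}\}$. I must verify that privacy via hiding $d_{uw,p,q}$ holds precisely when both label modules are privatized --- not when only one of them, or merely the combiner, is hidden --- since otherwise the reduction would collapse back to the easy set-cover-style $\Omega(\log n)$ bound. Establishing this ``iff'' rigorously, by reasoning about the refined possible worlds $\Worlds(R,V,P)$ of Definition~\ref{def:pos-worlds-workflow-public}, is the delicate technical heart of the proof; everything else follows the template of the set constraints reduction.
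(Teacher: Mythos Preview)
Your high-level strategy matches the paper's: reduce from minimum label cover, introduce a public module for each (vertex, label) pair with unit privatization cost, give each edge a private module with a single $(1,0)$ cardinality requirement, and show cost equivalence with the label cover objective. The cost-equivalence argument you sketch (Lemma~\ref{lem:lc}-style, both directions) is also essentially what the paper proves as Lemma~\ref{lem:lc-general}.

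Where you diverge is in the mechanism that ties ``hide $d_{uw,p,q}$'' to ``privatize both $M_{u,p}$ and $M_{w,q}$''. You locate the difficulty in engineering module \emph{functionalities} so that, semantically in $\Worlds(R,V,P)$, hiding $d_{uw,p,q}$ buys no privacy for $m_{uw}$ unless both label modules are privatized. The paper sidesteps this entirely, because the \secureview problem for general workflows already carries a purely \emph{structural} feasibility constraint (implicit in Theorem~\ref{thm:privacy-public} and made explicit as constraint~(\ref{equn:pub-IP2-4}) in the LP of Appendix~\ref{sec:set-public}): a public module may remain visible only if \emph{all} of its input and output attributes are visible. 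Consequently, the paper does not route $d_{uw,p,q}$ \emph{through} the label modules; it simply has an auxiliary private module $y_{\ell_1,\ell_2}$ produce the data item $d_{u,w,\ell_1,\ell_2}$ and \emph{share} it as an input to $x_{u,w}$ and simultaneously to the two public modules $z_{u,\ell_1}$ and $z_{w,\ell_2}$. Hiding that item to meet $x_{u,w}$'s $(1,0)$ requirement then forces $z_{u,\ell_1},z_{w,\ell_2}\in\widebar{P}$ by the structural rule alone---no combiner gadgets, no analysis of module functions, no possible-worlds reasoning. With all data costs set to zero, the objective is exactly the number of privatized $z_{u,\ell}$'s, which is $\sum_u |A(u)|$.

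So your proposal is on the right track but mislocates the crux: what you flag as the ``delicate technical heart'' is in fact a one-line consequence of data sharing plus the feasibility definition, not a property to be manufactured via carefully chosen public functions.
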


The hardness result in 
Theorem~\ref{thm:cardinality-public} is obtained by a reduction from the 
\emph{minimum label cover} problem~\cite{ABS+93}. An instance of
the minimum label cover problem consists of a bipartite graph 
$H = (U, U', E_H)$, a label set $L$,
and a non-empty relation $R_{uw} \subseteq L \times L$ for each edge
$(u, w) \in E_H$. A feasible solution is a label assignment to the 
vertices, $A : U \cup U'  \rightarrow 2^{L}$, 
such that for each
edge $(u, w)$, there exist $\ell_1 \in A(u), \ell_2 \in A(w)$ 
such that $(\ell_1, \ell_2) \in R_{uw}$.
The objective is to find a feasible solution that minimizes
$\sum_{u \in U \cup U'}|A(u)|$.
If $N = |U|+|U'|$, the label cover problem is known to be
$|L|^\epsilon$-hard to approximate
for some constant $\epsilon > 0$, as well as, 
$\Omega(2^{\log^{1-\gamma}N})$-hard to approximate
for all constant $\gamma > 0$, unless 
${\rm NP} \subseteq {\rm DTIME}(n^{{\rm polylog~} n})$~\cite{ABS+93, Raz98}. 
\par
Given an instance of the label cover problem as defined above, 
we create an instance 
of the \secureview problem 
by constructing a workflow $W$ (refer to Figure~\ref{fig:lc2}).
We will show that the label cover instance $H$ has a solution of  cost $K$
if and only if the \secureview instance $W$ has a solution with the same cost $K$.
Further, in our reduction, the number of modules $n$ in the workflow $W$ will be $O(N^2)$.
Hence the \secureview problem with cardinality constraints in general workflows will be
$\Omega(2^{\log^{1-\gamma}n})$-hard to approximate 
for all constant $\gamma > 0$ under the same complexity
assumption which proves Theorem~\ref{thm:cardinality-public}.

\begin{figure}[h]
  \begin{center}
    \includegraphics[scale=.35]{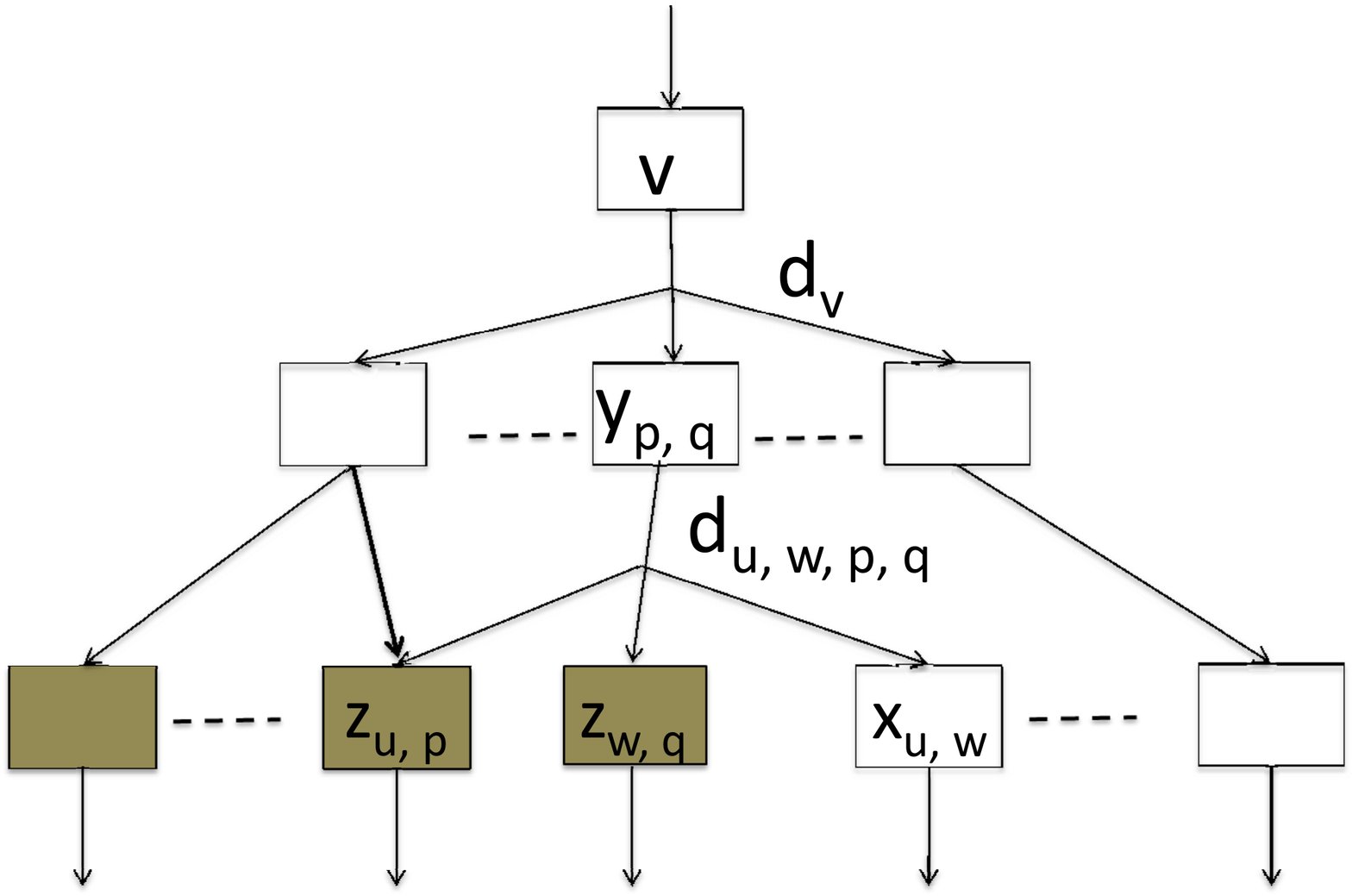}
  \end{center}
  \caption{Reduction from label cover: $(p, q) \in R_{uw}$, 
  the public modules are darken, all public modules have unit cost, all data have zero cost,
  the names of the data never hidden are omitted for simplicity}
  \label{fig:lc2}
\end{figure}

\paragraph{Construction}
First we describe the modules in $W$.
For each edge $(u, w) \in E_H$,
there is a private module $x_{u, w}$ in $W$. 
For every pair of labels $(\ell_1, \ell_2)$, there is a private module
$y_{\ell_1, \ell_2}$. There are public module $z_{u, \ell}$ for every 
$u \in U \cup U', \ell \in L$. 
In addition, $W$ has another 
private module $v$. 
\par
As shown in Figure~\ref{fig:lc2}, there are the following
types of edges and data items in $W$: 
(i) an incoming  single edge to $v$ carrying initial input data item $d_s$, $\cost(d_s) = 0$, 
(ii) an edge from $v$ to every node $y_{\ell_1, \ell_2}$, each such edge carries
the same data $d_v$ produced by $v$, $\cost(d_v) = 0$,
(iii) for every $(u, w) \in E_H$, for $(\ell_1, \ell_2) \in R_{u, w}$,
there is an edge from $y_{\ell_1, \ell_2}$ to $x_{u, w}$ carrying data $d_{u, w, \ell_1, \ell_2}$,
and $\cost(d_{u, w, \ell_1, \ell_2}) = 0$,
(iv) further, every such data $d_{u, w, \ell_1, \ell_2}$ produced by $y_{\ell_1, \ell_2}$
is also fed to both $z_{u, \ell_1}$ and $z_{v, \ell_2}$.
(v) All $y_{\ell_1, \ell_2}$ and all $z_{u, \ell}$ have an outgoing edge 
carrying data (final output data items) 
$d_{\ell_1, \ell_2}$ and $d_{u, \ell_1}$ respectively; $\cost(d_{\ell_1, \ell_2}) = 0$ and $\cost(d_{u, \ell_1}) = 0$.
\par
Privacy requirement of $v$ is $\{(0, 1)\}$, i.e., $v$ has to always choose
the output data $d_v$ to satisfy its requirement. Privacy requirement of every $y_{\ell_1, \ell_2}$
is $\{(1, 0)\}$, i.e. choosing $d_v$ satisfies the requirement for all such $y_{\ell_1, \ell_2}$-s.
Requirements of every $x_{u, w}$ is $(1, 0)$, so one of the data items $d_{u, w, \ell_1, \ell_2}$,
$(\ell_1, \ell_2) \in R_{u, w}$ must be chosen. 
\par
All modules except 
$z_{u, \ell}$-s are private. Cost of
privatizing $z_{u, \ell}$, $\pvt(z_{u, \ell}) = 1$. 
In the above reduction, all data items have cost 0 and the cost of a solution entirely
comes from privatizing the public modules $z_{u, \ell}$-s. 
\par
In this reduction, maximum list size and maximum magnitude
of any cardinality requirement are both bounded by 1. In the label cover instance it is known that number of labels
$L \leq $ number of vertices $N$, therefore the total number of modules in $W$ = $O(L^2 + LN + N^2) = O(N^2)$.
The following lemma 
proves the correctness of the reduction.

\noindent
\begin{lemma}\label{lem:lc-general}
The label cover instance $H$ has a solution of cost $K$ iff 
the \secureview instance $W$ has a solution of cost $K$.
\end{lemma}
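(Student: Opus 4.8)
The plan is to prove the two directions of the equivalence separately, in each case exploiting the fact that every data item has cost $0$, so that the entire cost of a \secureview solution equals the number of privatized public modules $z_{u,\ell}$. I would fix once and for all the correspondence ``$z_{u,\ell}$ is privatized $\iff \ell\in A(u)$'', under which $\sum_{u\in U\cup U'}|A(u)|$ equals the cost of the matching \secureview solution. First I would observe that the requirements of $v$ and of every $y_{\ell_1,\ell_2}$ are discharged for free by hiding the single intermediate item $d_v$: it satisfies the $(0,1)$ requirement of $v$ and the $(1,0)$ requirement of each $y_{\ell_1,\ell_2}$, and since $d_v$ is fed only to private modules, hiding it forces no privatization. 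Hence only the edge-modules $x_{u,w}$ can contribute to the cost.

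For the forward direction, given a feasible label assignment $A$ of cost $K$, I would privatize exactly the modules $z_{u,\ell}$ with $\ell\in A(u)$ and, for each edge $(u,w)\in E_H$, use label-cover feasibility to select a pair $(\ell_1,\ell_2)\in R_{uw}$ with $\ell_1\in A(u)$ and $\ell_2\in A(w)$, then hide the data item $d_{u,w,\ell_1,\ell_2}$. This hides one input of $x_{u,w}$, satisfying its $(1,0)$ requirement, and the two public modules $z_{u,\ell_1},z_{w,\ell_2}$ that also receive this data are privatized by construction. Consequently every \emph{visible} public module has all of its input and output attributes visible, so Theorem~\ref{thm:privacy-public} guarantees the resulting $(V,P)$ is safe. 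The cost is the number of privatized modules, which is exactly $\sum_u|A(u)|=K$.

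For the backward direction, given a safe solution of cost $K$, I would first normalize it: since data is free, assume without loss of generality that $d_v$ is hidden and that for each $x_{u,w}$ exactly one of its inputs $d_{u,w,\ell_1,\ell_2}$ is hidden. I would then set $A(u)=\{\ell : z_{u,\ell}\text{ is privatized}\}$, so that $\sum_u|A(u)|$ equals the cost $K$. The key step is the claim that hiding $d_{u,w,\ell_1,\ell_2}$ forces \emph{both} $z_{u,\ell_1}$ and $z_{w,\ell_2}$ to be privatized in any safe solution; granting this, $\ell_1\in A(u)$, $\ell_2\in A(w)$ and $(\ell_1,\ell_2)\in R_{uw}$, so $A$ covers every edge and is a feasible label cover of cost $K$.

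The main obstacle is exactly this forcing claim. I must argue that if the shared item $d_{u,w,\ell_1,\ell_2}$ feeding a \emph{visible} public module $z$ were hidden while $z$ stays visible, then $x_{u,w}$ would fail to be $\Gamma$-workflow-private. This is the phenomenon isolated in Example~\ref{eg:public-challenge} and captured formally by Definition~\ref{def:pos-worlds-workflow-public}: a visible public module restricts the possible worlds, since its functional behavior must be reproduced, so a known invertible $z$ lets an adversary recover the hidden input from $z$'s visible output and thereby collapses $|\Out_{x,W}|$ below $\Gamma$. I would therefore rely on having fixed each $z_{u,\ell}$ in the construction to compute a one-one invertible function with visible outputs, so that any \emph{safe} solution hiding one of its inputs is obliged to privatize it. The contrapositive of this observation supplies the forcing claim, and combining it with the free-data normalization completes both directions of the reduction.

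\begin{proof}

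\end{proof}
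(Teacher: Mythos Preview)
Your overall structure---forward and backward directions with the correspondence ``$z_{u,\ell}$ privatized $\iff \ell\in A(u)$''---matches the paper's proof exactly, and your forward direction is essentially identical to theirs.

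The deviation is in the backward direction, where you identify the ``forcing claim'' (hiding $d_{u,w,\ell_1,\ell_2}$ forces both $z_{u,\ell_1}$ and $z_{w,\ell_2}$ to be privatized) as the main obstacle and propose to establish it via an invertibility argument on the public modules' functionality. This is a misreading of the problem setup. In the \secureview problem for general workflows, feasibility of a pair $(V,P)$ is the \emph{combinatorial} condition coming from Theorem~\ref{thm:privacy-public}: the requirement list of each private module must be satisfied by $\widebar{V}$, \emph{and} every visible public module (i.e.\ every $m_j\in P$) must have all of its input and output attributes in $V$. (This is exactly constraint~(\ref{equn:pub-IP2-4}) in the LP formulation.) Consequently, if $d_{u,w,\ell_1,\ell_2}\in\widebar{V}$ and this item is an input to $z_{u,\ell_1}$ and $z_{w,\ell_2}$, then feasibility \emph{by definition} forces both into $\widebar{P}$. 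The paper's proof simply states ``Therefore both $z_{u,\ell_1}, z_{w,\ell_2}$ are privatized'' and moves on.

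Your proposed route---fixing each $z_{u,\ell}$ to compute a one-one invertible function and arguing that a visible $z$ with a hidden input would break $\Gamma$-workflow-privacy---is both unnecessary and problematic: the reduction as given is purely combinatorial and does not specify module functions at all, so you cannot appeal to invertibility. Drop the invertibility detour and invoke the feasibility constraint directly; then your proof coincides with the paper's. (Your normalization step---reducing to exactly one hidden input per $x_{u,w}$---is harmless but also unnecessary; at least one hidden input is all you need.)
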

\begin{proof}
Let $A: U \cup U' \rightarrow 2^L$ be a solution of the label cover instance $H$
with total cost $K = \sum_{u \in U \cup U'} |A(u)|$. 
We create a solution $\widebar{V}$ for the \secureview instance $W$ as follows:
First, add $d_v$ to the hidden attribute subset $\widebar{V}$.
This satisfies the requirement of $v$ and all $y_{\ell_1, \ell_2}$
without privatizing any public modules. So we need to satisfy the requirements of $x_{u, w}$.
\par
Since $A$ is a valid 
label cover solution, for every edge $(u, w) \in E$, there is a label pair $(\ell_1, \ell_2) \in R_{uw}$
such that $\ell_1 \in A(u)$ and $\ell_2 \in A(w)$. For every $x_{u, w}$, add such $d_{u, w, \ell_1, \ell_2}$
to  $\widebar{V}$. For each $u \in U \cup U'$, and $\ell \in L$, add $z_{u, \ell}$ to the privatized public modules $P$.
iff $\ell \in A(u)$. It is easy to check that $(V, P)$ is safe for $W$. If $d_{u, w, \ell_1, \ell_2}$ is hidden,
both $z_{u, \ell_1}$ and $z_{v, \ell_2}$ are added to $P$. Since $\cost(\widebar{V}) = 0$, cost of the solution
is $\pvt(\widebar{P}) = K = \sum_{u \in U \cup U'} |A(u)|$.
\par
Conversely, let $(V, P)$ be a safe solution of the \secureview instance $W$, where $K = \pvt(\widebar{P})$.
For each $u \in U \cup U'$, we define $A(u) = \{\ell | z_{u, \ell} \in P\}$.
Clearly, $\sum_{u \in U \cup U'} |A(u)| = K$. 
For each $(u, w) \in E$, the requirement of $x_{u, w}$ is satisfied by $\widebar{V}$;
hence there exist $\ell_1, \ell_2 \in L$ such that $d_{u, w, \ell_1, \ell_2} \in \widebar{V}$.
Therefore both $z_{u, \ell_1}, z_{w, \ell_2} \in P$.
This implies that for each edge $(u, w) \in E_H$, there exist $\ell_1 \in A(u)$
and $\ell_2 \in A(w)$, where $(\ell_1, \ell_2) \in R_{uw}$, thereby proving
feasibility. 
\end{proof}

\subsection{Set-Constraints}  
\label{sec:set-public}
We modify the LP given in Section~\ref{sec:set-private}
and give  an $\ell_{\max}$-approximation algorithm for the set-constraints version
in general workflows.
As before, for an attribute $b \in A$, $x_b = 1$ if and only if $b$ is hidden (in the final solution $b \in \widebar{V}$).
In addition, for a public module $m_i, i \in [K+1, n]$, $w_i = 1$ if and only if $m_i$
is hidden (in the final solution $m_i \in \widebar{P}$). 
The algorithm rounds the solution given by LP relaxation of the following integer program.
The new condition introduced is (\ref{equn:pub-IP2-4}), which says that,
if any input or output attribute of a public module $m_i$ is included in $\widebar{V}$,
$m_i$ must be hidden. Further, (\ref{equn:pub-IP2-1})
is needed only for the private modules ($m_i$ such that $i \in [1, K]$).
For simplicity we denote $\cost(b) = c_b$
and $\pvt(m_i) = \pvt_i$. 

\begin{center}
Minimize $\sum_{b \in A} \cost_b x_b + \sum_{i \in [K+1, n]} \pvt_i w_i \quad$ subject to 
\begin{align}
& \sum_{j = 1}^{\ell_i} r_{ij} \geq 1 & \forall i \in [1, K] \label{equn:pub-IP2-1}\\
& x_b \geq r_{ij} & \forall b \in I_{ij} \cup O_{ij},~\forall i \in [1, K] \label{equn:pub-IP2-2}\\%
& w_i \geq x_b & \forall b \in I_{i} \cup O_{i},~\forall i \in [K+1, n] \label{equn:pub-IP2-4}\\%
& x_b, r_{i, j}, w_i \in \{0, 1\} &\label{equn:pub-IP2-3}
\end{align}
\end{center}
The LP relaxation is obtained by changing Constraint~(\ref{equn:pub-IP2-3}) to
\begin{equation}
x_b, r_{ij}, w_i \in [0, 1].  \label{equn:pub-LP2-3}
\end{equation}

The rounding algorithm outputs $\widebar{V} = \{b : x_b \geq 1/\ell_{\max}\}$
and $\widebar{P} = \{m_i : b \in \widebar{V} \text{ for some } b \in I_i \cup O_i\}$.
\par 
Since the maximum size of a requirement list is $\ell_{\max} = \max_{i = 1}^n \ell_i$, 
for each $i$, there exists a $j$ such that in the solution of the LP,
$r_{ij} \geq 1/\ell_i \geq 1/\ell_{\max}$ (from (\ref{equn:pub-IP2-1})). 
Hence there exists at least one $j \in [1, \ell_i]$ such that $I_{ij} \cup O_{ij}  \subseteq \widebar{V}$.
Further, a public module $m_i$, $i \in [K+1, n]$ is hidden (i.e. included to $\widebar{V}$) only if
there is an attribute $b \in I_i \cup O_i$ which is included to $\widebar{V}$.
Therefore from (\ref{equn:pub-IP2-4}), for all $m_i \in \widebar{P}$, $w_i \geq \frac{1}{\ell_{\max}}$.
Since both $\cost(\widebar{V})$ and $\pvt(\widebar{P})$ are most $\ell_{\max}$ 
times the cost of the respective cost in the LP solution, 
this rounding algorithm gives an $\ell_{\max}$-approximation.

\end{document}